\documentclass[11pt,letterpaper]{article}

\usepackage[margin=1in]{geometry}

\usepackage{tweaklist}
  \setlength{\itemsep}{2pt}
    \setlength{\topsep}{2pt}%
    \setlength{\itemsep}{2pt}



\usepackage{amsmath,amsfonts,graphpap,amscd,mathrsfs,graphicx,lscape}
\usepackage{epsfig,amssymb,amstext,xspace}
\usepackage{float}	
\usepackage{hyperref}



\usepackage{color}              
\usepackage{epsfig}

\usepackage[suppress]{color-edits}
\addauthor{cd}{green}
\addauthor{vs}{red}

\usepackage{amsthm}

\newcommand{\E}{\mathbb{E}}

\newcommand{\rev}{\mathcal{R}}

\newtheorem{theorem}{Theorem}
\newtheorem{lemma}[theorem]{Lemma}

\newtheorem{corollary}[theorem]{Corollary}

\newtheorem{defn}{Definition}
\newtheorem{algorithm}{Algorithm}

\newenvironment{rtheorem}[3][]{
\bigskip
\noindent \ifthenelse{\equal{#1}{}}{\bf #2 #3}{\bf #2 #3 (#1)}
\begin{it}
}{\end{it}}


\def\squareforqed{\hbox{\rule{2.5mm}{2.5mm}}}

\def\QED{\ifmmode\squareforqed 
  \else{\nobreak\hfil   
    \penalty50                 
    \hskip1em                  
    \null                      
    \nobreak                   
    \hfil                      
    \squareforqed              
    \parfillskip=0pt           
    \finalhyphendemerits=0     
    \endgraf}                  
  \fi}

\def\blksquare{\rule{2mm}{2mm}}
\def\qedsymbol{\blksquare}
\newcommand{\bg}[1]{\medskip\noindent{\bf #1}}
\newcommand{\ed}{{\hfill\qedsymbol}\medskip}
\newenvironment{proofsk}{\noindent\textbf{Proof sketch.}}{\QED}
\newenvironment{proofskof}[1]{\noindent\textbf{Proof sketch of #1.}}{\QED}
\newenvironment{proofof}[1]{{\vspace{2em}\noindent\textbf{Proof of #1. }}}{\ed}


\newcommand{\R}{\ensuremath{\mathbb R}}

\newcommand{\N}{\ensuremath{\mathbb N}}
\newcommand{\A}{\ensuremath{\mathcal{A}}}

\newcommand{\F}{\ensuremath{\mathcal F}}

\newcommand{\T}{\ensuremath{\mathcal T}}

\newcommand{\Ell}{\ensuremath{\mathcal{L}}}

\newcommand{\poly}{\operatorname{poly}}

\newcommand{\comment}[1]{}
 {}


\newcommand{\junk}[1]{}




\newlength{\tmp} \newlength{\lpsx} \newlength{\lpsy} \newlength{\upsx} \newlength{\upsy}

\newcommand{\opt}{\text{\textsc{Opt}} }

\newcommand{\Or}[1]{\ensuremath{M\left(#1\right)}}

\newcommand{\Omit}[1]{}



\newcommand{\RP}{\ensuremath{{\tt RP}}}

\newcommand{\NP}{\ensuremath{{\tt NP}}}

\newcommand{\FTPL}{\ensuremath{{\tt FTPL}}}
\newcommand{\BTPL}{\ensuremath{{\tt BTPL}}}

\newcommand{\reg}{\ensuremath{{\tt Reg}}}
\newcommand{\regt}{\ensuremath{r}}

\newcommand{\buyers}{buyer's~}
\newcommand{\Buyers}{Buyer's~}

\newcommand{\Po}{\ensuremath{\mathcal{P}}}
\newcommand{\G}{\ensuremath{\mathcal{G}}}

\begin{document}

\title{Learning in Auctions: Regret is Hard, Envy is Easy
}

\author{
Constantinos Daskalakis\thanks{Supported by a Microsoft Research Faculty Fellowship, and NSF Award CCF-0953960 (CAREER) and CCF-1551875. This work was done in part while the author was visiting the Simons Institute for the Theory of Computing.}\\
EECS, MIT\\
{\tt costis@csail.mit.edu}
\and
Vasilis Syrgkanis\thanks{This work was done in part while the author was visiting the Simons Institute for the Theory of Computing.}\\
Microsoft Research, NYC\\\
{\tt vasy@microsoft.com}
}
\date{}

\maketitle
\thispagestyle{empty}
\addtocounter{page}{-1}

\begin{abstract}
A large line of recent work studies the welfare guarantees of simple and prevalent combinatorial auction formats, such as selling $m$ items via simultaneous second price auctions (SiSPAs)~\cite{Christodoulou2008,Bhawalkar2011,Feldman2013}. These guarantees hold even when the auctions are repeatedly executed and the players use no-regret learning algorithms to choose their actions. Unfortunately, off-the-shelf no-regret learning algorithms for these auctions are computationally inefficient as the number of actions available to each player is exponential. We show that this obstacle is insurmountable: there are no polynomial-time no-regret learning algorithms for SiSPAs, unless $\RP\supseteq \NP$, even when the bidders are unit-demand. Our lower bound raises the question of how good outcomes polynomially-bounded bidders may discover in such auctions.

To answer this question, we propose a novel concept of learning in auctions, termed ``no-envy learning.'' This notion is founded upon Walrasian equilibrium, and we show that it is both efficiently implementable and results in approximately optimal welfare, even when the bidders have valuations from the broad class of fractionally subadditive (XOS) valuations (assuming demand oracle access to the valuations) or coverage valuations (even without demand oracles). No-envy learning outcomes are a relaxation of no-regret learning outcomes, which maintain their approximate welfare optimality while endowing them with computational tractability. Our result for XOS valuations can be viewed as the first instantiation of approximate welfare maximization in combinatorial auctions with XOS valuations, where both the designer and the agents are computationally bounded and agents are strategic. Our positive and negative results extend to many other simple auction formats that have been studied in the literature via the smoothness paradigm. 

Our positive results for XOS valuations are enabled by a novel Follow-The-Perturbed-Leader algorithm for settings where the number of experts and states of nature are both infinite, and the payoff function of the learner is non-linear. We show that this algorithm has applications outside of auction settings, establishing big gains in a recent application of no-regret learning in security games. Our efficient learning result for coverage valuations is based on a novel use of convex rounding schemes and a reduction to online convex optimization. 
\end{abstract}
\newpage
\section{Introduction} \label{sec:intro}
A central challenge in Algorithmic Mechanism Design is to {understand the effectiveness and limitations of mechanisms to induce economically efficient outcomes in a computationally efficient manner.} A practically relevant and most actively studied setting for performing this investigation is that of {\em combinatorial auctions}. 

This setting involves a seller with a set $[m]$ of indivisible items, which he wishes to sell to a set $[n]$ of buyers. Each buyer $i \in [n]$ is characterized by a valuation function $v_i:2^{[m]} \rightarrow \mathbb{R}_+$, assumed monotone, which maps each bundle $S_i$ of items to the buyer's value $v_i (S_i)$ for this bundle. This function is known to the buyer, but is unknown to the seller and the other buyers. The seller's goal is to find a partition $S_1\sqcup S_2 \sqcup \ldots \sqcup S_n=[m]$ of the items together with prices $p_1,\ldots,p_n$ so as to maximize the total welfare resulting from allocating bundle $S_i$ to each buyer $i$ and charging him $p_i$. The total buyer utility from such an allocation would be $\sum_{i}(v_i (S_i )-p_i)$ and the seller's revenue would be $\sum_i p_i$, so the total welfare from such an allocation would simply be $\sum_i v_i (S_i)$. 

Given the seller's uncertainty about the buyer's valuations, she needs to interact with them to select a good allocation. However, the buyers are strategic, aiming to optimize their own utility, $v_i(S_i )-p_i$. Hence, the seller needs to design her allocation and price computation rules carefully so that a good allocation is found despite the agents' strategization in response to these rules. How much of the optimal welfare can the seller guarantee?
 
A remarkable result in Economics is that welfare can be exactly optimized, as long as we have unbounded computational and communication resources, via the {celebrated VCG mechanism~\cite{Vickrey1961,Clarke1971,Groves1973}.} This mechanism asks bidders to report their valuations, uses their reports at face value to select an optimal partition of the items, and computes payments in a way that it is in the best interest of all bidders to truthfully report their valuations; in particular, it is a {\em dominant strategy truthful} mechanism, and because of its truthfulness it guarantees that an optimal allocation is truly selected. 

Despite its optimality and truthfulness, the VCG mechanism is overly demanding in terms of both computation and communication. Reporting the whole valuation functions is too expensive for the bidders to do for most interesting types of valuations. Moreover, optimizing welfare exactly with respect to the reported valuations is also difficult in many cases. Unfortunately, if we are only able to do it approximately, the truthfulness of the VCG mechanism disappears, and no welfare guarantees can be made. 
{Even with computational concerns set aside, it is widely acknowledged that the VCG mechanism is rarely used in practice \cite{Ausubel2006}. At the same time, many practical scenarios involve the allocation of items through simple mechanisms which are often} not centrally designed and non-truthful. Take eBay, for example, where several different items are sold simultaneously and sequentially via ascending price and other types of auctions. Or consider sponsored search where several keywords are auctioned simultaneously and sequentially using generalized second price auctions. For most interesting families of valuations such environments induce non truthful behavior, and are thus difficult to study analytically.

{The prevalence of such simple decentralized auction environments provides motivation for a quantitative analysis of the quality of outcomes in simple non-truthful mechanisms.} A growing volume of research has taken up this challenge, developing tools for studying the welfare guarantees of non-truthful mechanisms; see e.g.~\cite{Bikhchandani1999,Christodoulou2008,Bhawalkar2011,Hassidim2011,Fu2012,Syrgkanis2013,Feldman2013}. Using the approximation perspective, this literature bounds the Price-of-Anarchy (PoA) of simple non-truthful mechanisms, and has provided remarkable insights into their economic efficiency. 

To illustrate these results, let us consider Simultaneous Second Price Auctions, which we will abbreviate to ``SiSPAs'' in the remainder of this paper. While we focus our attention on these auctions, our results extend to the most common other forms of auctions studied in the PoA literature; see Section~\ref{sec:extension to smooth mechanisms} for a discussion. As implied by its name, a SiSPA asks every bidder to bid on each of the items separately and allocates each item using a second price auction based on the bids submitted solely for this item. 

Facing a SiSPA, a bidder whose valuation is non-additive is not able to express his  complex preferences over bundles of items. It is thus a priori not clear how he will bid, and what the  resulting welfare will be. One situation where a prediction can be made is when the bidders have some information about each other, either knowing each other's valuations, or knowing a distribution from which each others valuations are drawn. In this case, we can study the SiSPA's Nash or Bayesian Nash equilibrium behavior, computing the welfare in equilibrium. Remarkably, the work on the PoA of mechanisms has shown that the equilibrium welfare of SiSPAs (and of other types of simple auctions) is guaranteed to be within a constant factor of optimum, even when the bidders' valuations are subadditive \cite{Feldman2013}.\footnote{A {\em subadditive} valuation $v$ is one satisfying $v(S \cup T) \le v(S)+v(T),$ for all $S, T \subseteq [m]$.} When bidders have no information about each other, the problem becomes ill-posed, as it is impossible for the bidders to form beliefs about each others bids in order to choose their own bid.

A way out of the conundrum comes from the realization that simple mechanisms often occur repeatedly, involving the same set of bidders; think sponsored search.
In such a setting it is natural to assume that bidders engage in learning to compute their new bids as a function of their experience so far. One of the most standard types of learning behavior 
is that of {\em no-regret learning}. A bidder's bids  over $T$ executions of a SiSPA satisfy the no-regret learning guarantee if the bidder's cumulative utility over the $T$ executions is within an additive $o(T)$ of the cumulative utility that the bidder would have achieved from the best in hindsight vector of bids $b_1,\ldots,b_m$, if he were to place the same bid $b_j$ on item $j$ in all $T$ executions of the SiSPA. Assuming that bidders use no-regret learning  to update their bids in repeated executions of a SiSPA (or other types of simple auctions) the afore-referenced work has shown that the average bidder welfare across the $T$ executions is within a constant factor of the otpimal welfare, even when the bidders' valuations are subadditive \cite{Feldman2013}.

These guarantees are astounding, especially given the intractability results for dominant strategy truthful mechanisms, which hold even when the bidders have {submodular valuations \cite{Dobzinski2011,Dobzinski2012,Dughmi2015}---a family of valuations that is smaller than subadditive.}\footnote{A {\em submodular} valuation $v$ is one satisfying $v(S\cup T) + v(S \cap T) \le v(S)+v(T)$, for all $S, T \subseteq [m]$.} However, moving to simple non-truthful auctions does not come without a cost. Cai and Papadimitriou \cite{Cai2014} have recently established intractability results for computing Bayesian-Nash equilibria in SiSPAs, even for quite simple types of valuations, namely mixtures of additive and unit-demand~\cite{Cai2014}.\footnote{A {\em unit-demand} valuation $v$ is one satisfying $v(S)=\max_{i \in S} v(\{i\})$, for all $S \subseteq [m]$.} 
At the same time, implementing no-regret learning in combinatorial auctions is quite tricky as the action space of the bidders explodes. For example, in SiSPAs there is a continuum of possible bid vectors that a bidder may submit and, even if we tried to discretize this set, their number would typically be exponential in the number of items in order to maintain a good approximation from the discretization. {Unfortunately, no-regret algorithms typically require in every step computation that is linear in the number of available actions, hence in our case exponential in the number of items.} 

{An important open question in the literature  has thus been whether this obstacle can be overcome via specialized no-regret algorithms that only need polynomial computation. Our first result shows that this obstacle is insurmountable.} We show that in one of the most basic settings where no-regret learning is non-trivial, it cannot be implemented in polynomial-time unless {\tt RP} $\supseteq$ {\tt NP}.

\begin{theorem} \label{thm:hardness of unit-demand learning}
Suppose that a unit-demand bidder whose value for each item $i\in [m]$ is $v$ participates in $T$ executions of a SiSPA. Unless {\tt RP} $\supseteq$ {\tt NP}, there is no learning algorithm running in time polynomial in $m$, $v$, and $T$ and whose regret is any polynomial in $m$, $v$, and $1/T$. The computational hardness holds even when the learner faces i.i.d. samples from a fixed distribution of competing bids, and whether or not no-overbidding is required of the bids produced by the learner.
\end{theorem}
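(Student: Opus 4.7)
My plan is to reduce from $1$-in-$3$-SAT, an NP-complete problem asking whether a 3-CNF $\phi$ admits an assignment with exactly one true literal per clause. Given $\phi$ with $n$ variables and $m'$ clauses, I construct a SiSPA instance with $m = 2n$ items, one per literal, bidder per-item value $v = n + 1$, and a distribution $\D$ over competing bids given as a $\tfrac{1}{2}/\tfrac{1}{2}$ mixture of ``clause samples''---pick a uniformly random clause $C$ and set competing bid $v - 1$ on each of its three literals and $+\infty$ on the rest---and ``variable samples''---pick a uniformly random variable $x$ and set competing bid $v - 1$ on $\{x, \neg x\}$ and $+\infty$ elsewhere.

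Since competing bids are always either $v-1$ or $+\infty$, the bidder's behavior depends only on the ``active set'' $A = \{\ell : b_\ell > v-1\}$: with no-overbidding any $b_\ell \in (v-1, v]$ suffices, and without no-overbidding bids above $v$ win no extra items and incur the same prices, so the reduction covers both cases. Using unit-demand utility $u(b,c) = v \cdot \mathbf{1}[\exists i\colon b_i > c_i] - \sum_i c_i \mathbf{1}[b_i > c_i]$, a direct computation gives
$$U(A) := \Ex{u(b,c)} = \frac{1}{2m'}\sum_C u_C(A) + \frac{1}{2n}\sum_x u_x(A),$$
where $u_C(A) = v - (v-1)|A \cap C|$ if $|A \cap C| \geq 1$ and $0$ otherwise---taking values $1$, $2 - v$, $3 - 2v$ as $|A \cap C|$ ranges over $1, 2, 3$---and $u_x(A) \in \{0, 1, 2-v\}$ depending on whether $|A \cap \{x, \neg x\}|$ is $0$, $1$, or $2$.

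If $\phi$ has a satisfying 1-in-3 assignment, then the induced $A$ yields $u_C(A) = u_x(A) = 1$ throughout, so $U(A) = 1$. I then show that if $\phi$ is unsatisfiable, $\max_A U(A) \leq 1 - \Omega(1/\max(n, m'))$ via a three-way case analysis on $A$: (i) for a consistent total $A$ (exactly one of $\{x, \neg x\}$ per variable), unsatisfiability forces some clause $C$ with $|A \cap C| \neq 1$, dropping $\sum_C u_C(A)$ by at least $1$ and giving $U(A) \leq 1 - 1/(2m')$; (ii) for a consistent partial $A$ missing $k_0 \geq 1$ variables, the variable sum shrinks by $k_0$, giving $U(A) \leq 1 - 1/(2n)$; (iii) for an inconsistent $A$ (some variable with both literals in $A$), the per-offending-variable contribution is $2 - v = 1 - n$, and one checks $\sum_x u_x(A) \leq 0$ via the parameter choice $v = n+1$, so $U(A) \leq 1/2$, well below $1 - 1/(2 \max(m', n))$ for $n, m' \geq 2$.

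Finally, any algorithm with per-step runtime $\mathrm{poly}(m, v, T)$ and regret $P(m, v, 1/T)$ has average utility at least (best action in hindsight) $- P/T$; a Hoeffding bound (utilities lie in $[1-2n, 1]$) together with a union bound over the $2^{2n}$ active sets shows this best-in-hindsight is within $O(n^{3/2} / T^{1/2})$ of $\max_A U(A)$ with high probability. For polynomial $T$ sufficiently large, the total slack is below $1/(4\max(m', n))$, so thresholding the observed average utility at $1 - 1/(4\max(m', n))$ correctly decides satisfiability with high probability, yielding a randomized polynomial-time algorithm for $1$-in-$3$-SAT and hence $\RP \supseteq \NP$. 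The main technical step is the three-way case analysis in the NO direction; in particular, the parameter choice $v = n + 1$ is what ensures that inconsistent active sets cannot recoup their variable-sample losses from clause-sample gains.
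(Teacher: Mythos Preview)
Your reduction from $1$-in-$3$-SAT is a genuinely different route from the paper's, which reduces from regular set cover to an offline ``optimal bidding'' problem (Theorem~\ref{thm:approx-optimal}) and then proves a general transfer theorem (Theorem~\ref{thm:LB for FTL implies general LB}) from offline inapproximability to online intractability. Your gap analysis is clean and correct: in the YES case $\max_A U(A)=1$, and the three-way case split together with the choice $v=n+1$ indeed forces $\max_A U(A)\le 1-\Omega(1/\max(n,m'))$ in the NO case. The paper's modular approach has the advantage that the transfer theorem is reusable for other settings; your direct reduction is more self-contained.

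However, there is a real gap in the final step. Your argument establishes only a \emph{lower} bound on the learner's average utility: no-regret gives average utility $\ge$ (best fixed action in hindsight) $-\,r(T)$, and your Hoeffding bound with a union over the $2^{2n}$ active sets controls the best fixed action in hindsight. But to threshold the observed average utility you also need, in the NO case, an \emph{upper} bound: the learner's realized average $\frac{1}{T}\sum_t u(b^t,\theta^t)$ must fall below the threshold with high probability. The Hoeffding-plus-union-bound argument does not give this, because the learner is not playing a fixed action; it plays a different $b^t$ each round, adapted to the history $\theta^{1:t-1}$, and an adaptive sequence can a priori harvest more from an i.i.d.\ stream than any fixed action. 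The paper explicitly flags this as the main challenge. The fix is a martingale argument: since $\theta^t$ is independent of $(\theta^{1:t-1},b^t)$, the differences $u(b^t,\theta^t)-\E_{\theta\sim\D}[u(b^t,\theta)]$ form a bounded martingale difference sequence, and Azuma's inequality yields $\frac{1}{T}\sum_t u(b^t,\theta^t)\le \max_A U(A)+O(n/\sqrt{T})$ with high probability. You should add this step. A secondary point in the same spirit: the no-regret guarantee is only in expectation over the learner's internal randomness, so to threshold a \emph{realized} average you additionally need either a one-sided Markov argument or repeated independent runs of the learner, as the paper does.
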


Note that our theorem proves an intractability result even if pseudo-polynomial dependence on the description of $v$ is permitted in the regret bound and the running time. The {\em no-overbidding assumption} mentioned in the statement of our theorem represents a collection of conditions under which no-regret learning in second-price auctions gives good welfare guarantees~\cite{Christodoulou2008,Feldman2013}. An example of such no-overbidding condition is this: For each subset $S \subseteq [m]$, the sum of bids across items in $S$ does not exceed the bidder's value for bundle $S$. Sometimes this condition is only required to hold on average. It will be clear that our hardness easily applies whether or not no-overbidding is imposed on the learner, so we do not dwell on this issue more in this paper. 

{How can we show the in-existence of computationally efficient no-regret learning algorithms? A crucial (and general) connection that we establish in this paper is that it suffices to prove an inapproximability result for a corresponding offline combinatorial optimization problem. More precisely, we prove Theorem~\ref{thm:hardness of unit-demand learning} by establishing an inapproximability result for an offline optimization problem related to SiSPAs, together with a ``transfer theorem'' that transfers in-approximability from the offline problem to intractability for the online problem. The transfer theorem is a generic statement applicable to any online learning setting.} In particular, we show the following; see Section~\ref{sec:hardness} for details.
\begin{enumerate}
\item In SiSPAs, finding the best response payoff against a polynomial-size supported distribution of opponent bids  is strongly {\tt NP}-hard to additively approximate for a unit-demand bidder. Another way to say this is that  one step of a specific learning algorithm, namely Follow-The-Leader (FTL), is inapproximable. See Theorems~\ref{thm:one step of FTL is hard} and~\ref{thm:approx-optimal}.

\item In any setting where finding an optimum for an explicitly given distribution of functions over some set ${\cal F}$ is hard to additively approximate, no efficient no-regret learner against sequences of functions from ${\cal F}$ exists, unless $\RP\supseteq \NP$. This result is generic, saying that whenever one step of FTL is inapproximable, there is no no-regret learner. See Theorem~\ref{thm:LB for FTL implies general LB}. 
\end{enumerate}


The intractability result of Theorem~\ref{thm:hardness of unit-demand learning} casts shadow in the ability of computationally bounded learners to achieve no-regret guarantees in combinatorial auctions where their action space explodes with the number of items {and the number of items is large}. We have shown this for SiSPAs, but our techniques easily extend to Simultaneous First Price Auctions, and we expect to several other commonly studied mechanisms for which PoA bounds are known. With the absence of efficiently implementable learning algorithms, it is unclear when we should expect computationally bounded bidders to actually converge to approximately efficient outcomes in these auctions.

From a design standpoint it may be interesting to identify conditions for the bidder valuations and the format of the auction under which no-regret learning is both efficiently implementable and leads to approximately optimal outcomes. While this direction is certainly interesting, it would not address the question of what welfare we should expect of SiSPAs and other simple auctions that have been studied in the literature, or how much of the PoA bounds can be salvaged for computationally bounded bidders. Moreover, recent results of Braverman et al. \cite{Braverman2016} show that for a large class of auction schemes where no-regret algorithms are efficiently computable, no-better than a logarithmic in the number of items welfare guarantee can be achieved (which is achievable by the single-bid auction of \cite{Devanur2015}).   

{We propose an alternative approach to obtaining robust welfare guarantees of simple auctions for computationally bounded players by 
introducing a new type of learning dynamics, which we call \emph{no-envy}, and which are founded upon the concept of Walrasian equilibrium.} 
{In all our results, no-envy learning outcomes are a super-set of no-regret learning outcomes. We show that this super-set simultaneously achieves two important properties: i) while being a broader set, it still maintains the welfare guarantees of the set of no-regret learning outcomes established via PoA analyses; ii) there exist computationally efficient no-envy learning algorithms; when these algorithms are used by the bidders, their joint behavior converges (in a decentralized manner) to the set of no-envy learning outcomes for a large class of valuations (which includes submodular). Thus no-envy learning provides a way to overcome the computational intractability of no-regret learning in auctions with implicitly given exponential action spaces.} We describe our results in the following section. We will focus our attention on SiSPAs but the definition of no-envy learning naturally extends to any mechanism and all our positive results extend to a large class of smooth mechanisms; see Section~\ref{sec:extension to smooth mechanisms}. 

\begin{figure}
\begin{center}
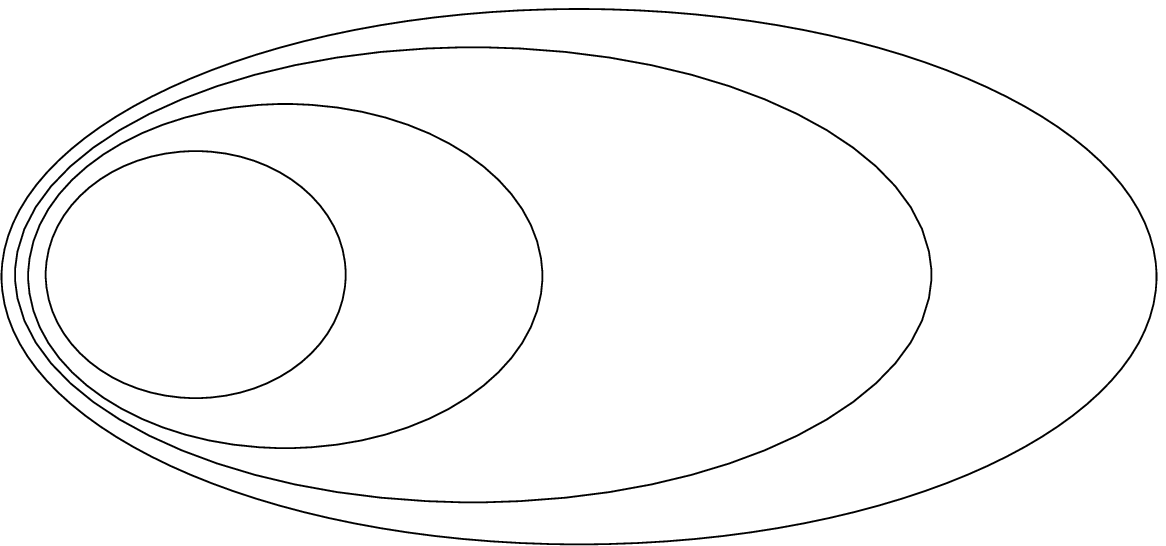
\end{center}
\caption{\textit{\small{We depict the state of the world for simultaneous second price auctions with XOS bidders. NE denotes the set of Nash equilibria, CE the set of correlated equilibria, CCE the set of coarse correlated equilibria which are equivalent to the limit empirical distributions of no-regret dynamics. Last with No-Envy we denote the limit empirical distributions of no-envy dynamics. PoA refers to the ratio of the optimal welfare over the worst welfare achieved by any solution in each set. Tractability in this figure refers to the existence of polynomial time decentralized algorithms that each player can invoke in the agnostic setting and converge to the no-regret or no-envy condition at a polynomial error rate.}}}
\end{figure}

\subsection{No-Envy Dynamics: Computation and Welfare.} No-envy dynamics is a twist to no-regret dynamics. Recall that in no-regret dynamics the requirement is that the cumulative utility of the bidder after $T$ rounds be within an additive $o(T)$ error of the optimum utility he would have achieved had he played the best fixed bid in hindsight. In no-envy dynamics, we require that the bidder's cumulative utility be within an additive $o(T)$ of the optimum utility that he would have achieved if he was allocated the best in hindsight fixed bundle of items in all rounds and paid the price of this bundle in each round. The guarantee is inspired by Walrasian equilibrium: In auctions, the prices that a bidder faces on each bundle of items is determined by the bids of the other bidders. Viewed as a price-taker, the bidder would want to achieve utility at least as large as the one he would have achieved if he purchased his favorite bundle at its price. No-envy dynamics require that the average utility of the bidder across $T$ rounds is within $o(1)$ of what he would have achieved by purchasing the optimal bundle at its average price in hindsight. 

Inspired by Walrasian equilibrium, no-envy learning defines a natural benchmark against which to evaluate an online sequence of bids. It is easy to see that in SiSPAs the no-regret learning requirement is stronger than the no-envy learning requirement. Indeed, the no-envy requirement is implied by the no-regret requirement against a subset of all possible bid vectors, namely those in $\{0,+\infty\}^m$. So no-envy learning is more permissive than no-regret learning, allowing for a broader set of outcomes. This not true necessarily for other auction formats, but it holds for the types of valuation functions and auctions studied in this paper; (see proof of Lemma~\ref{lem:from sets to bids} and Definition \ref{defn:value-covering}). In particular, in all our no-envy learning upper bounds the set of outcomes reachable via no-envy dynamics is always a superset of the outcomes reachable via no-regret dynamics. Moreover, this is true even if the no-envy dynamics are constrained to not overbid. 

To summarize, for all types of valuations and auction formats studied in this paper, no-envy learning is a relaxation of no-regret learning, permitting a broader set of outcomes. While no-regret learning outcomes are intractable, we show that this broader set of outcomes is tractable. At the same time, we show that this broader set of outcomes maintains approximate welfare optimality. So we have increased the set of possible outcomes, but maintained their economic efficiency and endowed them with computational efficiency.

We proceed to describe our results for the computational and economic efficiency of no-envy learning. Before proceeding, we should point out that, while in our world the no-envy learning guarantee is a relaxation of the no-regret learning guarantee, the problem of implementing no-envy learning sequences remains similarly challenging. Take SiSPAs, for example. As we have noted no-envy learning is tantamount to requiring the bidder to not have regret against all bid vectors in $\{0,+\infty\}^m$. This set is exponential in the number of items $m$, so it is unclear how to run an off-the-shelf no-regret learner efficiently. In particular, we are still suffering from the combinatorial explosion in the number of actions, which lead to our lower bound of Theorem~\ref{thm:hardness of unit-demand learning}. Yet the curse of dimensionality is now much more benign. Our upper bounds, discussed next, establish that we can harness big computational savings when we move from competing against any bid vector to competing against bid vectors in $\{0,+\infty\}^m$. Except to do this we still need to develop new general-purpose, no-regret algorithms for online learning settings where the number of experts is exponentially large and the {cost/utility} functions are arbitrary.

\subsubsection{Efficient No-Envy Learning.} We show that no-envy learning can be efficiently attained for bidders with fractionally subadditive (XOS) valuations. A valuation $v(\cdot)$  belongs to this family if for some collection of vectors ${\cal V}=(v^\ell)_\ell$, where each $v^\ell \in \mathbb{R}_+^m$, it satisfies:
\begin{align} \label{eqn:XOS}
v(S)=\max_{v^\ell \in {\cal V}}\sum_{j \in S}v^{\ell}_j, \forall S \subseteq [m].
\end{align}
Note that the XOS class is larger than that of submodular valuations. In many applications, the set ${\cal V}$ describing an XOS valuation may be large. Thus instead of inputting this set explicitly into our algorithms, we will assume that we are given an oracle, which given $S$ returns the vector $v^\ell \in {\cal V}$ such that $v(S) =  \sum_{j \in S}v^{\ell}_j$. Such an oracle is known as an {\em XOS oracle} \cite{Dobzinski2006,Feige2006}. We will also sometimes assume, as it is customary in Walrasian equilibrium, that we are given access to a {\em demand oracle}, which given a price vector $p \in \mathbb{R}_+^m$ returns the bundle $S$ maximizing $v(S)-\sum_{j\in S}p_j$. We show the following.

\begin{theorem}\label{thm:XOS-demand}
Consider a bidder with an XOS valuation $v(\cdot)$ participating in a sequence of SiSPAs. Assuming access to a demand and an XOS oracle for $v(\cdot)$,\footnote{For submodular valuations this is equivalent to assuming access only to demand oracles, as XOS oracles can be simulated in polynomial time assuming demand oracles \cite{Dobzinski2010}} there exists a polynomial-time algorithm for computing the bidder's bid vector $b^t$ at every time step $t$ such that after $T$ iterations the bidder's average utility satisfies:
\begin{align}
\frac{1}{T}\E\left[\sum_{t=1}^{T} u(b^t)\right] \geq \max_{S} \left(v(S) - \sum_{j\in S} \hat{\theta}_j^T\right)-{O\left(\frac{m^2(D+H)}{\sqrt{T}}\right)}, \label{eq:den xasame}
\end{align}
where $\hat{\theta}_j^T$ is the average cost of item $j$ in the $T$ executions of the SiSPA as defined by the bids of the competing bidders, $D$ is an upper bound on the {competing bid} for any item and $H$ is an upper bound on $\max_S v(S)$. The learning algorithm with the above guarantee also satisfies the no overbidding condition that the sum of bids for any set of items is never larger than the bidder's value for that set. Moreover, the guarantee holds with no assumption about the behavior of competing bidders. Finally, extensions of this algorithm to other smooth mechanisms are provided in Section~\ref{sec:extension to smooth mechanisms}.
\end{theorem}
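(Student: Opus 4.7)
The plan is to reduce the no-envy learning problem to an online linear optimization problem over the exponentially many ``bundle experts,'' one for each $S\subseteq [m]$, with per-round reward $r^t(S) := v(S) - \langle \mathbf{1}_S,\theta^t\rangle$, where $\theta^t$ is the vector of competing second-highest bids at round $t$. Because $\max_S (v(S) - \sum_{j\in S}\hat{\theta}_j^T) = \max_S \frac{1}{T}\sum_t r^t(S)$, which is exactly the right-hand side of \eqref{eq:den xasame}, it suffices to achieve no-regret against the bundle experts for the rewards $r^t$, provided the realized per-round utility $u(b^t)$ dominates the chosen expert's reward $r^t(S^t)$.

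The algorithm is Follow-the-Perturbed-Leader. At round $t$, sample a per-item perturbation $\xi^t \in \mathbb{R}^m_+$ (e.g.\ uniform on $[0,1/\eta]^m$ for a suitably tuned $\eta$), invoke the demand oracle on the price vector $\Theta^{t-1}-\xi^t$, where $\Theta^{t-1}:=\sum_{s<t}\theta^s$, to obtain $S^t = \arg\max_S\{v(S) - \langle\mathbf{1}_S,\Theta^{t-1}-\xi^t\rangle\}$; then call the XOS oracle to get a decomposition $v^\ell$ with $v(S^t)=\sum_{j\in S^t}v^\ell_j$, and submit the bid $b^t_j := v^\ell_j$ for $j\in S^t$ and $b^t_j := 0$ otherwise. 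For the utility-domination step, let $W\subseteq S^t$ be the items actually won at round $t$ (those $j\in S^t$ with $v^\ell_j>\theta^t_j$); combining $v(W)\geq \sum_{j\in W}v^\ell_j$ (XOS), $v(S^t)=\sum_{j\in S^t}v^\ell_j$, and $v^\ell_j\leq \theta^t_j$ on $S^t\setminus W$ gives $u(b^t) = v(W)-\sum_{j\in W}\theta^t_j \geq v(S^t)-\sum_{j\in S^t}\theta^t_j = r^t(S^t)$. The same XOS inequality certifies no-overbidding: for every $T\subseteq[m]$, $\sum_{j\in T}b^t_j \leq \sum_{j\in T\cap S^t}v^\ell_j\leq v(T\cap S^t)\leq v(T)$.

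For the regret bound, note that the only time-varying part of $r^t(S)$ is the linear term $-\langle\mathbf{1}_S,\theta^t\rangle$; the $v(S)$ summand is a fixed per-expert ``bonus'' that can be folded into the oracle query and does not enter the drift analysis. Hence the standard Kalai--Vempala FTPL bound applies with decision set $\{0,1\}^m$: a ``be-the-perturbed-leader'' lemma reduces the regret to the expected sum of stability terms, each of which bounds the probability that the demand-oracle's argmax switches when one adds one round's coordinate-wise $\theta^t$ to the cumulative price. Summing these terms and optimizing the perturbation scale $\eta$ (to balance the $\sum_t \|\theta^t\|_1\leq mDT$ contribution from the loss magnitude against the $H/\eta$ contribution from the perturbation's effect on the value bonus) yields an expected regret of $O(m^2(D+H)/\sqrt T)$, which combined with $u(b^t)\geq r^t(S^t)$ produces \eqref{eq:den xasame}. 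The ``no assumption on competing bidders'' clause is automatic since FTPL handles adversarial loss sequences, and the online observation model is handled trivially because the FTPL step uses only $\Theta^{t-1}$.

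I expect the main obstacle to be the FTPL stability calculation, because the per-round leader solves a \emph{non-linear} optimization (the $v(S)$ term is combinatorial) rather than a plain linear program. I plan to sidestep this by noting that $v(S)$ is time-invariant, so the perturbation only needs to regularize the \emph{linear} part of the objective; the swap between $S^{t-1}$ and $S^t$ then depends solely on how the perturbed linear price vector shifts the demand oracle's output, reducing the stability bound to a coordinate-by-coordinate coupling argument identical in spirit to the classical FTPL proof. Getting the correct $m^2(D+H)$ dependence will require carefully choosing $\eta$ so that a uniform $[0,1/\eta]^m$ perturbation smooths both the per-item price scale $D$ and the total value scale $H$ simultaneously.
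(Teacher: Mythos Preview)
Your overall architecture matches the paper's: reduce to the online buyer's problem with per-round reward $r^t(S)=v(S)-\langle\mathbf 1_S,\theta^t\rangle$, run an FTPL-style algorithm whose offline step is a demand query, then lift the chosen bundle to a non-overbidding bid via the XOS oracle. Your utility-domination and no-overbidding arguments are correct and coincide with the paper's Lemma~\ref{lem:from sets to bids}.

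The genuine gap is the FTPL step itself. You call the demand oracle at the \emph{cumulative} price $\Theta^{t-1}-\xi^t$, i.e.\ you play $\arg\max_S\{v(S)-\langle\mathbf 1_S,\Theta^{t-1}-\xi^t\rangle\}$. This is \emph{not} Follow-the-Leader for the rewards $r^t$: the cumulative reward after $t-1$ rounds is $(t-1)v(S)-\langle\mathbf 1_S,\Theta^{t-1}\rangle$, so FTL is a demand query at the \emph{average} price $\Theta^{t-1}/(t-1)$. Your claim that ``$v(S)$ is a fixed per-expert bonus that does not enter the drift analysis'' is exactly what fails: the bonus accumulates as $(t-1)v(S)$, and dropping the factor $(t-1)$ breaks the be-the-leader lemma. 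A one-item counterexample: $v(\{1\})=10$, $\theta^1=11$, $\theta^2=0$. Your be-the-leader picks $\emptyset$ in both rounds (since $\Theta^t\ge 11>10$), scoring $0$, while the best fixed bundle $\{1\}$ scores $(10-11)+(10-0)=9$. More generally, $\Theta^{t-1}$ grows linearly in $t$ while $v(S)\le H$ is fixed, so your rule eventually outputs $\emptyset$ forever regardless of how favorable the average prices are.

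The paper's fix is to treat the perturbation as a \emph{fake threshold sample} $x$ prepended to the history and to call the demand oracle at the resulting average price $(x+\Theta^{t-1})/t$; this restores the be-the-perturbed-leader lemma verbatim. The stability analysis is then no longer the classical Kalai--Vempala shift, because passing from round $t-1$ to $t$ changes both the numerator and the normalizing $t$ in the oracle's price vector. The paper uses a single exponential sample and a change-of-variables (its Lemma~\ref{lem:demand-stability}) to control this, picking up an extra harmonic $m/t$ term that sums to $O(m\log T)$ and is absorbed into the stated $O(m^2(D+H)/\sqrt T)$ bound. Your final paragraph correctly identifies the nonlinearity of $v$ as the crux; the resolution, however, is not that $v$ is time-invariant but that the offline oracle must be fed the full (perturbed) history so that it weights $v(S)$ by $t$.
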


The proof of Theorem~\ref{thm:XOS-demand} is carried out in three steps, of which the first and last are specific to SiSPAs, while the second provides a general-purpose Follow-The-Perturbed-Leader (FTPL) algorithm in online learning settings where the number of experts is exponentially large and the {cost/utility} functions are arbitrary:
\begin{enumerate}
\item The first ingredient is simple, using the XOS oracle to reduce no-envy learning in SiSPAs to no-regret learning in a related ``online buyer's problem,'' where the learner's actions are not bid vectors but instead what bundle to purchase, prior to seeing the prices; see Definition~\ref{def:online buyer's prob}. Theorem~\ref{lem:from sets to bids} provides the reduction from no-envy learning to this problem using XOS oracles. {This reduction can also be done (albeit starting from approximate no-envy learning) for several mechanisms that have been analyzed through the smoothness framework of \cite{Syrgkanis2013} as we elaborate in Section~\ref{sec:extension to smooth mechanisms}.}

\item The second step proposes a FTPL algorithm for general online learning problems where the learner chooses some action $a^t \in A$ and the environment chooses some state $\theta^t \in \Theta$, {\em from possibly infinite, unstructured sets $A$ and $\Theta$}, and where the learner's reward is tied to these choices through some function $u(a^t,\theta^t)$ {\em that need not be linear}. Since $A$ need not have finite-dimensional representation and $u$ need not be linear, we cannot efficiently perturb (either explicitly or implicitly) the cumulative rewards of the elements in $A$ as required in each step of FTPL~\cite{Kalai2005}; see~\cite{Bubeck2012} and its references for an overview of such approaches. Instead of perturbing the cumulative rewards of actions in $A$ directly, our proposal is to do this indirectly by augmenting the history $\theta^1,\ldots,\theta^{t-1}$ that the learner has experienced so far with some randomly chosen fake history, and run Follow-The-Leader (FTL) subject to these augmentations. While it is not a priori clear whether our perturbation approach is a useful one, it is clear that our proposed algorithm only needs an offline optimization oracle to be implemented, as each step is an FTL step after the fake history is added. {When applying this algorithm to the online buyer's problem from Step 1, the required offline optimization oracle will conveniently end up being simply a demand oracle.}

Our proposed general purpose learner is presented in Section~\ref{sec:oracles}. The way our learner accesses function $u$ is via an optimization oracle, which given a finite multiset of elements from $\Theta$ outputs an action in $A$ that is optimal against the uniform distribution over the multiset. See Definition~\ref{sec:optimization oracle}. In Theorem~\ref{thm:general-regret}, we bound the regret experienced by our algorithm in terms of $u$'s stability. Roughly speaking, the goal of our randomized augmentations of the history in each step of our learning algorithm is to smear the output of the optimization oracle applied to the augmented sequence over $A$, allowing us to couple the choices of Be-The-Perturbed-Leader and Follow-The-Pertubed-Leader.

\item  To apply our general purpose algorithm from Theorem \ref{thm:general-regret} to the online buyer's problem for SiSPAs from Step 1, we need to bound the stability of the bidder's utility function subject to a good choice of a history augmentation sampler. This is done in Section~\ref{sec:demand}. There turns out to be a simple sampler for our application here, where only one price vector is added to the history, whose prices are independently distributed according to an exponential distribution with mean $O(\sqrt{T})$ and variance $O(T)$.

\item While our motivation comes from mechanism design, our FTPL algorithm from Step 2 is general purpose, and we believe it will find applications in other settings. We provide some relevant discussion in Section~\ref{sec:finite-parameter}, where we show how our algorithm implies regret bounds independent of $|A|$ when $\Theta$ is finite, as well as quantitative improvements on the regret bounds of a recent paper of Balacn et al. for security games~\cite{Balcan2015}. 
\end{enumerate}

In the absence of demand oracles, we provide positive results for the subclass of XOS called coverage valuations.
To explain these valuations, consider a bidder with $k$ needs, $1,\ldots,k$, associated with values $a_1,\ldots,a_k$. There are $m$ available items, each covering a subset of these needs. So we can view each item as a set $\beta_i \subseteq [k]$ of the needs it satisfies. The value that the bidder derives from a set $S \subseteq [m]$ of the items is the total value from the needs that are covered, namely: 
\begin{equation}\label{eqn:coverage-defn}
v(S)=\sum_{\ell \in \cup_{j} \beta_j} a_\ell.
\end{equation}

\begin{theorem}\label{thm:main-coverage}
Consider a bidder with an explicitly given coverage valuation $v(\cdot)$ participating in a sequence of SiSPAs. There exists a polynomial-time algorithm for computing the bidder's bid vector $b^t$ at every time step $t$ such that after $T$ iterations the 
bidder's utility satisfies:
\begin{align}
\frac{1}{T}\E\left[\sum_{t=1}^{T} u(b^t)\right] \geq \max_{S} \left(\left(1-\frac{1}{e}\right)v(S) - \sum_{j\in S} \hat{\theta}_j^T\right) - 3m\frac{H+\sqrt{D}}{\sqrt{T}}, \label{eq:kati xasame}
\end{align}
where $\hat{\theta}_j^T$, $H$ and $D$ are as in Theorem~\ref{thm:XOS-demand}, and the algorithm satisfies the same no overbidding condition stated in that theorem. There is no assumption about the behavior of the competing bidders, and extensions of this algorithm to other smooth mechanisms are provided in Section~\ref{sec:extension to smooth mechanisms}.
%
\end{theorem}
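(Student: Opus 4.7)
The plan is to follow the same three-step template that produced Theorem~\ref{thm:XOS-demand}, but to replace the general-purpose FTPL learner of Section~\ref{sec:oracles} with an \emph{online convex optimization} routine that exploits the special structure of coverage valuations.

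Step 1 (reduction to an online buyer's problem). Coverage valuations are XOS with an additive representation that is writable in closed form from the explicit description of $v$: fix any ordering on items, and for each bundle $S$ assign every covered element $\ell$ to a single item of $S$ covering it, producing vectors $v^\ell\in\R_+^m$ with $v(S)=\sum_{j\in S} v_j^\ell$. Using this in place of the XOS oracle, Theorem~\ref{lem:from sets to bids} reduces no-envy learning in SiSPAs (with no overbidding) to the online buyer's problem: at each round $t$ pick a bundle $S^t$ before seeing prices $\theta^t\in[0,D]^m$ and collect utility $v(S^t)-\sum_{j\in S^t}\theta_j^t$. It therefore suffices to produce bundles that compete against $(1-1/e)\,v(S)-\sum_{j\in S}\theta_j^t$ for every fixed $S$.

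Step 2 (concave surrogate and online gradient ascent). From the coverage form~(\ref{eqn:coverage-defn}), define the concave upper bound
\begin{equation*}
G(x) \;=\; \sum_{\ell=1}^{k} a_\ell \min\!\Bigl(1,\; \sum_{j:\ell\in\beta_j} x_j\Bigr),\qquad x\in[0,1]^m,
\end{equation*}
which is piecewise linear and concave, polynomial-time evaluable together with a sub-gradient, and satisfies $G(\mathbf 1_S)=v(S)$ for every $S$. Run online gradient ascent on the concave objectives $f^t(x)=(1-1/e)\,G(x)-\langle x,\theta^t\rangle$ over $[0,1]^m$. Since $\|\nabla G\|_\infty\le H$ and $\|\theta^t\|_\infty\le D$, a careful coordinate-wise/stochastic analysis (extracting a $\sqrt{D}$ rather than $D$ dependence on the price coordinates) yields iterates $x^t$ with
\begin{equation*}
\sum_{t=1}^{T}\bigl[(1-1/e)G(x^t)-\langle x^t,\theta^t\rangle\bigr] \;\ge\; \sum_{t=1}^{T}\bigl[(1-1/e)\,v(S^\ast)-\textstyle\sum_{j\in S^\ast}\theta_j^t\bigr] - O\!\bigl(m(H+\sqrt D)\sqrt T\bigr),
\end{equation*}
where the comparator has been specialized to $x=\mathbf 1_{S^\ast}$ for the best-in-hindsight bundle $S^\ast$.

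Step 3 (randomized rounding via the correlation gap). At round $t$ round $x^t$ to a bundle $S^t$ by including each item $j$ independently with probability $x_j^t$. The expected cost equals $\langle x^t,\theta^t\rangle$ by linearity of expectation, while the expected value equals the multilinear extension $F(x^t)=\sum_\ell a_\ell\bigl(1-\prod_{j:\ell\in\beta_j}(1-x_j^t)\bigr)$. The genuinely coverage-specific fact needed is the \emph{correlation gap} $F(x)\ge(1-1/e)\,G(x)$, which reduces to the per-element inequality $1-\prod_j(1-x_j)\ge(1-1/e)\min(1,\sum_j x_j)$; this follows from $1-\prod_j(1-x_j)\ge 1-e^{-\sum_j x_j}$ together with the elementary $1-e^{-s}\ge(1-1/e)\min(1,s)$ for $s\ge 0$. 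Hence $\E\bigl[v(S^t)-\sum_{j\in S^t}\theta_j^t\bigr]\ge(1-1/e)G(x^t)-\langle x^t,\theta^t\rangle$. Summing over $t$, chaining with the regret bound of Step~2, and finally translating each bundle $S^t$ into a SiSPA bid vector via the additive representation of Step~1 (which Theorem~\ref{lem:from sets to bids} guarantees preserves utility and enforces no-overbidding) delivers~(\ref{eq:kati xasame}).

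The single conceptual hurdle is the correlation gap $F\ge(1-1/e)G$: it is precisely what glues the concave surrogate optimized by the learner to the true utility obtained after independent rounding and produces the $1-1/e$ loss. Every other piece is an assembly of standard online gradient descent, the bundle-to-bid translation from Step~1, and a careful accounting of the constants $H$ and $\sqrt D$ to match the $3m(H+\sqrt D)/\sqrt T$ error term in the statement.
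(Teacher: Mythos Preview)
Your proposal is correct, but it follows a genuinely different route from the paper's. The paper instantiates the ``convex rounding'' framework of Dughmi--Roughgarden--Yan: it uses \emph{Poisson rounding} (include item $j$ with probability $1-e^{-x_j}$) so that the expected value $F(x)=\E_{S\sim D(x)}[v(S)]=\sum_\ell a_\ell\bigl(1-e^{-\sum_{j:\ell\in\beta_j}x_j}\bigr)$ is \emph{itself} concave, runs projected gradient descent on $F(x)-\langle\theta^t,x\rangle$, and the $(1-1/e)$ loss appears only at the comparator via $F(\mathbf 1_S)\ge(1-1/e)v(S)$. You instead keep \emph{independent} rounding (so the expected value is the non-concave multilinear extension), optimize a separate concave surrogate $G$ (the ``Lov\'asz/concave closure'' of the coverage function) scaled by $(1-1/e)$, and push the $(1-1/e)$ loss into the rounding step through the correlation-gap inequality $F_{\text{multilin}}(x)\ge(1-1/e)G(x)$. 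Both mechanisms are valid and yield the same guarantee; yours is arguably more elementary and self-contained (the per-element inequality you isolate is exactly what is needed), while the paper's has the virtue of plugging directly into an existing convex-rounding abstraction and of placing the approximation loss only at the benchmark rather than at every iterate.

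One caveat: your appeal to ``a careful coordinate-wise/stochastic analysis extracting a $\sqrt D$ rather than $D$ dependence'' is doing real work and is not justified by standard OGD bounds, which give $\|\theta^t\|_2\le D\sqrt m$ and hence an $m(H+D)/\sqrt T$ error. The paper's own derivation of the $\sqrt D$ in the stated error term is likewise terse on this point (it writes $\|\theta^t\|_2\le\sqrt{mK}$), so you are in good company, but if you want to match the exact constant in~\eqref{eq:kati xasame} you should either spell out that step or be content with $m(H+D)/\sqrt T$.
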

\noindent Notice that our no-envy guarantee~\eqref{eq:kati xasame} in Theorem~\ref{thm:main-coverage} has incurred a loss of a factor of $1-1/e$ in front of $v(S)$, compared to the no-envy guarantee~\eqref{eq:den xasame}. This relaxed guarantee is an even broader relaxation of the no-regret guarantee. Still, as we show in the next section this does not affect our approximate welfare guarantees. We prove Theorem~\ref{thm:main-coverage} via an interesting connection between the online buyer's problem for coverage valuations and the convex rounding approach for truthful welfare maximization proposed by~\cite{Dughmi2011}. In the online buyer's problem, recall that the buyer needs to decide what set to buy at each step, prior to seeing the prices. It is natural to have the buyer include each item to his set independently, thereby defining an expert for all points $x\in [0,1]^m$, where $x_i$ is the probability that item $i$ is included. It turns out that the expected utility of the buyer under such distribution $x$ is not necessarily convex, so this choice of experts turns our online learning problem non-convex. Instead we propose to massage each expert $x \in [0,1]^m$ into a distribution $D(x) \in \Delta(2^{[m]})$ and run online learning on the massaged experts. In Definition~\ref{def:convex rounding} we put forth conditions for the massaging operation $D(\cdot)$ under which online learning becomes convex and gives approximate no-regret (Lemma~\ref{lem:online-convex}). We then instantiate $D(\cdot)$ with the Poisson sampling of~\cite{Dughmi2011}, establishing Theorem~\ref{thm:main-coverage}. Our approach is summarized in Section~\ref{sec:convex} and the details can be found in Appendix~\ref{sec:app-convex}.

\subsubsection{Welfare Maximization.}  Arguably one of the holy grails in Algorithmic Mechanism Design, since its inception, has been to obtain polynomial-time mechanisms optimizing welfare in incomplete information settings. 
{We show that SiSPAs achieve constant factor approximation welfare guarantees for the broad class of XOS valuations at every no-envy or approximate no-envy learning outcome. 
Thus the relaxation from no-regret to no-envy learning does not degrade the quality of the welfare guarantees, and has the added benefit that no-envy outcomes can be attained by computationally bounded players in a decentralized manner, using our results from the previous section. In Section~\ref{sec:extension to smooth mechanisms}, we show that this property applies to a large class of mechanisms that have been analyzed in the literature via the \emph{smoothness} paradigm \cite{Syrgkanis2013}.}

\begin{corollary} \label{cor:welfare guarantees}
When each bidder $i \in \{1,\ldots,n\}$ participating in a sequence of SiSPAs has an XOS valuation (endowed with a demand and XOS oracle) or an explicitly given coverage valuation $v_i(\cdot)$, there exists a polynomial-time computable learning algorithm such that, if each bidder $i$ employs this algorithm to compute his bids $b_i^t$ at each step $t$, then after $T$ rounds the average welfare is guaranteed to be at least:
\begin{equation}
\frac{1}{T}\E\left[\sum_{t=1}^{T} SW(b_1^t,\ldots,b_n^t)\right] \geq \frac{1}{2} \left(1-\frac{1}{e}\right)\opt(v_1,\ldots,v_n) -O\left( {m^2}\cdot {n \cdot} \max_{S,i}v_i(S) \sqrt{\frac{1}{T}}\right).
\end{equation}
If all bidders have XOS valuations with demand and XOS oracles the factor in front of OPT is $1/2$.
\end{corollary}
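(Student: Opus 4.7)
The plan is to combine, for each bidder, the no-envy guarantee from Theorem~\ref{thm:XOS-demand} (or Theorem~\ref{thm:main-coverage}) with the standard smoothness-style welfare decomposition used for SiSPAs with XOS bidders. First, I would fix an optimal partition $S_1^*\sqcup \cdots \sqcup S_n^*=[m]$ realizing $\opt(v_1,\ldots,v_n)$. Each bidder~$i$ runs the algorithm from Theorem~\ref{thm:XOS-demand} (XOS case) or Theorem~\ref{thm:main-coverage} (coverage case), so we immediately get for every $i$
\begin{equation*}
\frac{1}{T}\E\left[\sum_{t=1}^T u_i(b_i^t,b_{-i}^t)\right] \;\geq\; \alpha\,v_i(S_i^*)-\sum_{j\in S_i^*}\hat{\theta}_{i,j}^T\;-\;O\!\left(\tfrac{m^2(D+H)}{\sqrt{T}}\right),
\end{equation*}
by instantiating the ``best bundle in hindsight'' with $S_i^*$, where $\alpha=1$ in the XOS case and $\alpha=1-1/e$ in the coverage case, and $\hat{\theta}_{i,j}^T=\tfrac{1}{T}\sum_t \max_{k\ne i} b_{k,j}^t$ is the average competing price seen by~$i$.

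Next I would sum these inequalities over $i$. Since the $S_i^*$ partition $[m]$,
\begin{equation*}
\sum_i\sum_{j\in S_i^*}\hat{\theta}_{i,j}^T \;=\; \frac{1}{T}\sum_t\sum_j \max_{k\ne \pi^*(j)} b_{k,j}^t \;\leq\; \frac{1}{T}\sum_t\sum_j \max_{k}b_{k,j}^t,
\end{equation*}
where $\pi^*(j)$ denotes the OPT winner of item $j$. Because our learners satisfy the no-overbidding condition stated in Theorems~\ref{thm:XOS-demand} and~\ref{thm:main-coverage} (sum of bids on any set is at most the value of that set), and because at each round $t$ the set $W_i^t$ of items actually won by $i$ is a feasible such set, we obtain $\sum_j \max_k b_{k,j}^t\leq \sum_i\sum_{j\in W_i^t} b_{i,j}^t \leq \sum_i v_i(W_i^t)=SW(b^t)$. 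Combining,
\begin{equation*}
\frac{1}{T}\E\!\left[\sum_t \sum_i u_i(b^t)\right]\;\geq\;\alpha\cdot\opt\;-\;\frac{1}{T}\E\!\left[\sum_t SW(b^t)\right]\;-\;O\!\left(\tfrac{n\,m^2(D+H)}{\sqrt{T}}\right).
\end{equation*}

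Finally, since payments are nonnegative, $\sum_i u_i(b^t)=SW(b^t)-\sum_i p_i^t\leq SW(b^t)$. Substituting this upper bound on the left-hand side and rearranging yields
\begin{equation*}
\frac{2}{T}\E\!\left[\sum_t SW(b^t)\right]\;\geq\;\alpha\cdot\opt\;-\;O\!\left(\tfrac{n\,m^2(D+H)}{\sqrt{T}}\right),
\end{equation*}
which is the desired bound after dividing by~$2$, using $D,H\le\max_{S,i}v_i(S)$ and absorbing constants. The XOS case gives the $1/2$ factor and the coverage case the $\tfrac{1}{2}(1-1/e)$ factor.

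The step I would expect to scrutinize most carefully is the use of the no-overbidding condition to pass from $\sum_j\max_k b_{k,j}^t$ to $SW(b^t)$: one has to check that the specific form of no-overbidding promised by our learning algorithms (``sum of bids on any set $\le$ value of that set'') is exactly what is needed item-wise when aggregated across bidders, and that this chain of inequalities continues to work for the coverage-case algorithm with its extra $1-1/e$ factor. Everything else is bookkeeping on the additive error, which is inherited directly, up to a factor of $n$, from the per-bidder $O(m^2(D+H)/\sqrt{T})$ guarantee.
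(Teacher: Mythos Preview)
Your proposal is correct and follows essentially the same route as the paper: the corollary is obtained by combining the per-bidder no-envy guarantees of Theorems~\ref{thm:XOS-demand} and~\ref{thm:main-coverage} with Theorem~\ref{thm:envy-poa}, whose proof is exactly the smoothness-style decomposition you wrote out (instantiate the no-envy inequality at the optimal bundle, sum over bidders, bound thresholds by highest bids, and use no-overbidding plus $u_i\le v_i(X_i^t)$ to get $\sum_j\max_k b_{k,j}^t\le SW(b^t)$, then rearrange). The step you flagged is fine: since the highest bidder on each item is its winner, $\sum_j\max_k b_{k,j}^t=\sum_i\sum_{j\in W_i^t}b_{i,j}^t$, and the stated no-overbidding condition applied to each $W_i^t$ gives the bound by $SW(b^t)$; nothing changes in the coverage case because the $1-1/e$ factor sits only in front of $v_i(S_i^*)$.
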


We regard Corollary~\ref{cor:welfare guarantees}, in particular our result for XOS valuations with demand queries, as alleviating the  intractability of no-regret learning in simple auctions.
It also provides a new perspective to mechanism design, namely mechanism design with no-envy bidders.
In doing so, it proposes an answer to the question raised by~\cite{Feldman2015} about whether demand oracles can be exploited for welfare maximization with submodular bidders. We show a positive answer for the bigger class of XOS valuations, albeit with a different solution concept. (It still remains open whether there exist poly-time dominant strategy truthful mechanisms for submodular bidders with demand queries.) We believe that no-envy learning is a fruitful new approach to mechanism design, discussing in Section~\ref{sec:extension to smooth mechanisms} the meaning of the solution concept outside of SiSPAs.

\section{Preliminaries}

We analyze the online learning problem that a bidder faces when participating in a sequence of repeated executions of a simultaneous second price auction (SiSPA) with $m$ items. While we focus on SiSPAs our results extend to the most commonly studied formats of simple auctions, as discussed in Section~\ref{sec:extension to smooth mechanisms}. A sequence of repeated executions of a SiSPA corresponds to a sequence of repeated executions of a game involving $n$ players (bidders).
At each execution $t$, each player $i$ submits a bid $b_{ij}^t$ on each item $j$. We denote by $b_i^t$ the vector of bidder $i$'s bids at time $t$ and by $b^t$ the profile of bids of all players on all items. Given these bids, each item is given to the bidder who bids for it the most and this bidder pays the second highest bid on the item. Ties are broken according to some arbitrary tie-breaking rule. 
Each player $i$ has some fixed (across executions) valuation $v_i: 2^{[m]}\rightarrow \R_+$ over bundles of items. If at time $t$ he ends up winning a set of items $S^t$ and is asked to pay a price of $\theta_j^t$ for each item $j\in S^t$, then his utility is $v_i(S^t) - \sum_{j\in S^t}\theta_j^t$, i.e. his utility is assumed quasi-linear. An important class of valuations that we will consider in this paper is that of XOS valuations, defined in Equation \eqref{eqn:XOS}, which are a super-set of submodular valuations but a subset of subadditive valuations. We will also consider the class of {coverage valuations}, defined in Equation~\ref{eqn:coverage-defn}, which are a subset of XOS. Different results will consider different types of access to an XOS valuation through an XOS oracle, a demand oracle, or a value oracle, as described in the introduction. For more properties of these oracles see \cite{Dobzinski2010}.
%

\paragraph{Online bidding problem.} From the perspective of a single player $i$, all that matters to him to calculate his utility in a SiSPA is the highest  bid submitted by the other bidders on each item $j$, as well as the probability that he wins each item $j$ if he ties with the highest other bid on that item. 
For simplicity of notation, we will assume throughout the paper that the player always loses an item when he ties first. All our results, both positive and negative, easily extend to the more general case of arbitrary bid-profile dependent tie-breaking. 
Since we will analyze learning from the perspective of a single player, we will drop the index of player $i$. 
For a fixed bid profile of the opponents, we will refer to the highest other bid on item $j$ as the threshold of item $j$ and denote it with $\theta_{j}$. We also denote with $\theta=\left(\theta_{1},\ldots,\theta_{m}\right)$. 
The player wins an item $j$ if he submits a bid $b_{j}>\theta_j$ and loses the item otherwise. When he wins item $j$, he pays $\theta_j$.
We are interested in learning algorithms 
that achieve a no-regret guarantee even when the thresholds of the items are decided as it is customary by an adversary. Thus, the online learning problem that 
a player faces in a simultaneous second price auction is defined as follows:
\begin{defn}[Online bidding problem]
At each execution/day/time/step $t$, the player picks a bid vector $b^t$ and the adversary picks adaptively (based on the history of the player's past bid vectors but not on the bidder's current bid vector $b^t$) a threshold vector $\theta^t$. The player wins the set $S(b^t,\theta^t)=\{j\in [m]: b_j^t>\theta_j^t\}$ and gets reward:
\begin{equation}
u(b^t,\theta^t) = v\left(S\left(b^t,\theta^t\right)\right) - \sum_{j\in S\left(b^t,\theta^t\right)} \theta_j^t.
\end{equation}
\end{defn}

We allow a learning algorithm to be randomized, i.e. submit a random bid vector at each step whose distribution may depend on the history of past threshold vectors. We will evaluate a learning algorithm based on its \emph{regret} against the best fixed bid vector in hindsight. 

\begin{defn}[Regret of Learning Algorithm]
The expected average regret of a randomized online learning algorithm against a sequence $\theta^{1:T}=(\theta^1,\ldots,\theta^T)$ of threshold vectors is: 
\begin{equation}
\reg\left(\theta^{1:T}\right)=\E_{b^{1:T}}\left[\sup_{b^*}\frac{1}{T}\sum_{t=1}^T \left(u(b^*,\theta^t) - u(b^t,\theta^t)\right)\right],
\end{equation}
where recall that $b^t$ is random and depends on $\theta^{1:t-1}$, as specified by the online learning algorithm.
The regret $\regt(T)$ of the algorithm against an adaptive adversary is the maximum regret against any adaptively chosen sequence of $T$ threshold vectors. An algorithm has polynomial regret rate if $\regt(T)=\poly(T^{-1},m,\max_S v(S))$.\end{defn}

\section{Hardness of No-Regret Learning}\label{sec:no regret hardness}\label{sec:hardness}


We will show that there does not exist an efficiently computable learning algorithm with polynomial regret rate for the {online bidding problem} for SiSPAs unless {\tt RP} $\supseteq$ {\tt NP}, proving a proof of Theorem~\ref{thm:hardness of unit-demand learning}. We first examine a related offline optimization problem which we show is $\NP$-hard to approximate to within a small additive error. We then show how this inapproximability result implies the non-existence of polynomial-time no-regret learning algorithms for SiSPAs unless {\tt RP} $\supseteq$ {\tt NP}.
Throughout this section we will consider the following very restricted class of valuations: the player is unit-demand and has a value $v$ for getting any item, i.e. his value for any set of items is given by $v(S) = v\cdot 1\{S\neq \emptyset\}$. Our intractability results are strong intractability results in the sense that they hold even if we assume that $v$ is provided in the input in unary representation.

\paragraph{Optimal Bidding Against An Explicit Threshold Distribution is Hard.} \label{sec:inapproximability of best response}
We consider the following optimization problem: 
\begin{defn}[Optimal Bidding {Problem}] A distribution $D$ of threshold vectors $\theta$ over a set of $m$ items is given explicitly as a list of $k$ {vectors}, where $D$ is assumed to choose a uniformly random vector from the list. A bidder has {a unit-demand valuation} with the same value $v$ for {each} item, given in unary. The problem asks for {a bid vector} that maximizes the bidder's expected utility against distribution $D$. In fact, it only asks to compute the expected value from an optimal bid vector, i.e.
\begin{equation}
\sup_{b} \E_{\theta\sim D}\left[u(b,\theta)\right] = \sup_{b} \left\{ v\cdot \Pr\left[\exists j\in [m]: b_j>\theta_j\right] - \sum_{j\in [m]} \theta_j\cdot \Pr\left[b_j>\theta_j\right]\right\}.
\end{equation}
\end{defn}

We show that the {optimal bidding problem} is $\NP$-hard via a reduction from $r$-regular set-cover. In fact we show that it is hard to approximate, up to an additive approximation that is inverse-polynomially related to the input size. This will be useful when using the hardness of this problem to deduce the in-existence of efficiently computable learning algorithms with polynomial regret rates.

\begin{theorem}[Hardness of Approximately Optimal Bidding]\label{thm:approx-optimal}\label{thm:exact-optimal}
The optimal bidding problem is $\NP$-hard to approximate to within an additive $\xi$ even when: the $k$ threshold vectors in the support of {(the explicitly given distribution)} $D$ take values in $\{1,H\}^m$, $H=k^2\cdot m^2$, $v=2\cdot k\cdot m$ and $\xi={1 \over 2k}$.
\end{theorem}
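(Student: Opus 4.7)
The plan is to reduce from minimum set cover on $r$-regular instances (each set has size exactly $r$), which is NP-hard to solve exactly. Concretely one can invoke Exact Cover by $3$-Sets, where $r=3$ and the minimum cover of a universe of size $3q$ has size $q$ iff the instance is a yes-instance, so determining $s^\star$ exactly suffices to decide the problem. Given such an instance with universe $[k]$ and sets $S_1,\ldots,S_m$, I would construct an explicit threshold distribution $D$ uniform over $k$ vectors $\theta^{1},\ldots,\theta^{k}\in\{1,H\}^m$, setting $\theta^\ell_j=1$ if $\ell\in S_j$ and $\theta^\ell_j=H$ otherwise. Thus item $j$ is ``cheap'' on the $\ell$-th realization exactly when set $S_j$ covers element $\ell$, and the objective is to show that the optimal expected utility encodes $s^\star$.

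First I would collapse the continuous bid space to a combinatorial choice. Because thresholds lie in $\{1,H\}$ and a tie means ``lose,'' each coordinate $b_j$ effectively falls into one of three cases: $b_j\le 1$ (never wins), $1<b_j\le H$ (wins iff $\theta_j=1$, paying $1$), and $b_j>H$ (always wins, paying $\theta_j$). The third case contributes expected cost at least $\tfrac{k-r}{k}H\ge H/k = km^2$ per such item (assuming $r<k$), which already exceeds the bidder's total value $v=2km$ for $m\ge 3$ (the $m\le 2$ case is solvable by inspection). So, at the optimum, no coordinate exceeds $H$, and the problem reduces to picking a subset $T\subseteq[m]$ of items on which the bidder bids in $(1,H]$.

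Next I would express the expected utility of such a $T$ in closed form:
\[
\E_{\theta\sim D}[u(b,\theta)] \;=\; \frac{v}{k}\,c(T)\;-\;\frac{1}{k}\sum_{\ell\in[k]}\bigl|\{j\in T:\ell\in S_j\}\bigr| \;=\; \frac{v\,c(T)-r\,|T|}{k},
\]
where $c(T)=|\bigcup_{j\in T} S_j|$ and the second equality uses $r$-regularity (each item contributes exactly $r$ to the double sum). With $v=2km$, any $T$ that misses even a single element ($c(T)\le k-1$) loses at least $v/k=2m$ in the value term, while the maximum saving from using fewer sets is $r|T|/k\le rm/k\le m$ (assuming $r\le k$ without loss of generality). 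The maximizer is therefore a minimum set cover and $\mathrm{OPT}=v-r s^\star/k=2km-r s^\star/k$.

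Finally, an approximation of $\mathrm{OPT}$ within $\xi=\tfrac{1}{2k}$ determines $k\cdot\mathrm{OPT}=2k^2m - r s^\star$ to within $\tfrac12$, and hence pins down the nonnegative integer $r s^\star$ exactly; since $r$ is part of the input, $s^\star$ follows. The reduction is polynomial-time and $v=2km$ is polynomial in the input size, so the hardness persists when $v$ is written in unary, giving the strong NP-hardness claim. The main obstacle I anticipate is careful bookkeeping of boundary cases (tie-breaking, bids exactly at $1$ or $H$, the degenerate $T=\emptyset$ choice, and the small-$m$ base cases) together with tightly pinning down the constants so that the two structural claims (bids above $H$ are suboptimal; non-covering $T$ are strictly worse than the minimum cover) leave precisely the gap of $1/(2k)$; once these verifications are in place, the remainder is mechanical.
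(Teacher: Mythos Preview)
Your proposal is correct and uses the same reduction as the paper (from $r$-regular set cover, with elements encoded as threshold vectors and sets as items) together with the same structural analysis: collapsing bids to a subset choice $T\subseteq[m]$, ruling out bids above $H$, and showing the optimum $T$ must be a minimum cover so that $\mathrm{OPT}=v-rs^\star/k$.

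Where you diverge is in the final step extracting hardness of the \emph{additive} approximation. You observe directly that a $\tfrac{1}{2k}$-additive estimate of $\mathrm{OPT}$ pins down the integer $k\cdot\mathrm{OPT}=2k^2m-rs^\star$ to within $\tfrac12$, hence recovers $s^\star$ exactly; so the exact NP-hardness of $r$-regular minimum set cover (e.g.\ Exact Cover by 3-Sets with $r=3$) already suffices. The paper instead argues that a $\tfrac{1}{2k}$-additive estimate yields a $3$-approximation to the set-cover optimum, and then invokes the Feige--Trevisan $\log r - O(\log\log r)$ inapproximability of $r$-regular set cover for a sufficiently large constant $r$. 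Your route is more elementary, avoiding the PCP-based inapproximability machinery entirely; the paper's route would in principle tolerate a somewhat coarser additive error, but that slack is not needed for the theorem as stated. One small remark: your side conditions ($r<k$, $m\ge 3$) are automatic for nontrivial X3C instances, so they require no separate treatment beyond noting that degenerate instances can be handled directly.
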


An interesting interpretation Theorem~\ref{thm:approx-optimal} is that the Follow-The-Leader (FTL) algorithm is intractable in SiSPAs for unit-demand bidders.  Indeed, every step of FTL needs to find a bid vector that is a best response to the empirical distribution of the threshold vectors that have been encountered so far. See Theorem~\ref{thm:one step of FTL is hard} in Appendix~\ref{sec:app-hardness} and the discussion around this theorem.

\paragraph{Efficient No-Regret implies Poly-time Approximately Optimal Bidding.} \label{sec:from inapproximability of best response to NP-hardness of no-regret}
	Given the hardness of \emph{optimal bidding} in SiSPAs, we are ready to sketch the proof of our main impossibility result (Theorem~\ref{thm:hardness of unit-demand learning}) for {\em online bidding} in SiSPAs. Our result holds even if the possible threshold vectors that the bidder may see take values in some known discrete finite set. It also holds even if we weaken the regret requirements of the {online bidding problem}, only requiring that the player achieves no-regret with respect to bids of the form $\{0,v/2m\}^m$, i.e., the bid on each item is either $0$ or an $2m$-th faction of the player's value. Notice that any such bid is a non-overbidding bid. 
Hence, the no-regret requirement that we impose is weaker than achieving no-regret against any fixed bid/any fixed no-overbidding bid. We will refer to the afore-described weaker learning task as the \emph{simplified online bidding problem}. We sketch here how to deduce from the inapproximability of optimal bidding the impossibility of polynomial-time no-regret learning (even for the simplified online bidding problem), deferring full details to Appendix~\ref{sec:app:final proof of lower bound}.

\begin{proofskof}{Theorem~\ref{thm:hardness of unit-demand learning}} We present the structure of our proof and the challenges that arise, leaving details for Section~\ref{sec:app:final proof of lower bound}. 
Consider a hard distribution $D$ for the optimal bidding problem from Theorem~\ref{thm:approx-optimal}, and let $b^*$ be the bid vector that optimizes the expected utility of the bidder when a threshold vector is drawn from $D$. Also, let $u^*$ be the corresponding optimal expected utility. (Theorem~\ref{thm:approx-optimal} says that approximating $u^*$ is {\tt NP}-hard.) Now let us draw $T$ i.i.d. samples $\theta=(\theta^1,\ldots,\theta^T)$ from $D$.  Clearly, if $T$ is large enough, then, with high probability, the expected utility $\tilde{u}_\theta$ of $b^*$ against the uniform distribution over $\theta^1,\ldots,\theta^T$ is approximately equal to $u^*$.

Now let us present the sequence $\theta^1,\ldots,\theta^T$ to a no-regret learning algorithm. The learning algorithm is potentially randomized so let us call $\hat{u}_\theta$ the expected average utility (over the randomness in the algorithm and keeping sequence $\theta$ fixed) that the algorithm achieves when facing the sequence of threshold vectors $\theta^1,\ldots,\theta^T$. If the regret of the algorithm is $r(T)$, this means that $\hat{u}_\theta \ge \sup_b({1 \over T}\sum_{t=1}^Tu(b,\theta^t)) - r(T) \ge \tilde{u}_\theta-r(T)$. In particular, if $r(T)$ scales polynomially with $1/T$ then, for large enough $T$, $\hat{u}_\theta$ is lower bounded by $\tilde{u}_\theta$ (minus some small error), and hence by $u^*$ (minus some small error). Hence, $\hat{u}_\theta$ (plus some small error) provides an upper bound to $u^*$. Moreover, if we run our no-regret learning algorithm a large enough number of times $N$ against the same sequence of threshold vectors and average the average utility achieved by the algorithm in these $N$ executions, we can get a very good estimate of $\hat{u}_\theta$, and hence a very good upper bound for $u^*$, with high probability. The paragraph ``Upper Bound'' in Appendix~\ref{sec:app:final proof of lower bound} gives the details of this part.

The challenge that we need to overcome now is that, in principle, the expected average utility $\hat{u}_\theta$ of our no-regret learner against sequence $\theta^1,\ldots,\theta^T$ could be much larger than $\sup_b({1 \over T}\sum_{t=1}^Tu(b,\theta^t))$ and hence $\tilde{u}_\theta$ and $u^*$, as the algorithm is allowed to change its bid vector in every step. We need to argue that this cannot happen. In particular, we would like to upper bound $\hat{u}_\theta$ by $u^*$. We do this via a Martingale argument exploiting the randomness in the choice of the sequence $\theta$. Using Azuma's inequality, we show that for large enough $T$, the $\hat{u}_\theta$ is upper bounded by $u^*$ plus some small error with high probability. In fact we show something stronger: if $T$ is large enough then, with high probability, $u^*$ plus some small error upper bounds the algorithm's average utility (not just average expected utility), where now both the threshold and the bid vectors are left random. Hence, we can argue that, with high probability, if we run our algorithm $N$ times over a (long enough) sequence of random threshold vectors and we compute the average (across the $N$ executions) of the average (across the $T$ steps) utility of our algorithm, then this double average is upper bounded by $u^*$ plus some small error. Hence, we get a lower bound on $u^*$. (One execution would indeed suffice, but we need to argue about the average across $N$ executions given the way we obtain our upper bound in the previous paragraph.) The paragraph ``Lower Bound'' in Appendix~\ref{sec:app:final proof of lower bound} gives the details of this part.

Overall, if we choose $T$ and $N$ large enough polynomials in the description of the hard instance of the optimal bidding problem from Theorem~\ref{thm:approx-optimal}, then all approximation errors can be made arbitrary inverse polynomials, providing any desired (inverse polynomial) approximation to the optimal utility $u^*$ against distribution $D$, with high probability. Since getting an inverse polynomial approximation is an $\NP$-hard problem, this implies that there cannot exist a polynomial-time no-regret learning algorithm with polynomial regret rate, {unless} $\RP\supseteq \NP$. 
\end{proofskof}


\section{Walrasian Equilibria and No-Envy Learning in Auctions}\label{sec:no-envy}

The hardness of no-regret learning in simultaneous auctions motivates the investigation of other notions of learning that have rational foundations and at the same time admit efficient implementations. Our inspiration in this paper comes from the study of markets and the well-studied notion of {\em Walrasian equilibrium}. Recall that an allocation of items to buyers together with a price on each item constitutes a Walrasian equilibrium if no buyer envies some other allocation at the current prices. That is the bundle $S$ allocated to each buyer maximizes the difference $v(S)-p(S)$ of his value $v(S)$ for the bundle minus the cost of the bundle. Implicitly the Walrasian equilibrium postulates some degree of rationality on the buyers: given the prices of the items, each buyer wants a bundle of items such that  he has \emph{no-envy} against getting any other bundle at the current prices. 

We adapt this \emph{no-envy} requirement to SiSPAs (and other mechanisms in Section~\ref{sec:extension to smooth mechanisms}). In a SiSPA a player is facing a set of prices on the items, which are determined by the bids of the other players and are hence unknown to him when he is choosing his bid vector. In a sequence of repeated executions of a SiSPA, the player needs to choose a bid vector at every time-step. The fact that he does not know the realizations of the item prices when making his choice turns the problem into a learning problem. We will say that the sequence of actions that he took satisfies the no-envy guarantee, if in the long run he does not regret not buying any fixed set $S$ at its average price.

\begin{defn}[No-Envy Learning Algorithm]
An algorithm for the online bidding problem is a {\em no-envy algorithm}
if, for any adaptively chosen sequence of threshold vectors $\theta^{1:T}$ by an adversary, the bid vectors $b^1,\ldots,b^T$ chosen by the algorithm satisfy:
\begin{equation}
\E\left[\frac{1}{T}\sum_{t=1}^T u(b^t,\theta^t)\right] \geq \max_{S\subseteq [m]} \left(v(S) - \sum_{j\in S}\hat{\theta}_j^T \right)-\epsilon(T)
\end{equation}
where $\hat{\theta}_j^T=\frac{1}{T}\sum_{t=1}^{T} \theta_j^t$ and $\epsilon(T)\rightarrow 0$. It has polynomial envy rate if $\epsilon(T) = \poly(T^{-1},m,\max_S v(S))$.
\end{defn}

To allow for even larger classes of settings to have efficiently computable no-envy learning outcomes, we will also define a relaxed notion of no-envy. In this notion the player is guaranteed that his utility is at least some $\alpha$-fraction of his value for any set $S$, less the average price of that set. The latter is a more reasonable relaxation in the online learning setting given that, unlike in a market setting, the players do not know the realization of the prices when they make their decision.

\begin{defn}[Approximate No-Envy Learning Algorithm]
An algorithm for the online bidding problem is an $\alpha$-approximate no-envy algorithm if, for any adaptively chosen sequence of threshold vectors $\theta^{1:T}$ by an adversary, the bid vectors $b^1,\ldots,b^T$ chosen by the algorithm satisfy:
\begin{equation}
\E\left[\frac{1}{T}\sum_{t=1}^T u(b^t,\theta^t)\right] \geq\max_{S\subseteq [m]} \left(\frac{1}{\alpha}v(S) - \sum_{j\in S}\hat{\theta}_j^T \right)-\epsilon(T).
\end{equation}
\end{defn}

To gain some intuition about the difference between no-envy and no-regret learning guarantees consider the following. When we compute the utility from a fixed bid vector in hindsight, then in every iteration the set of items that the player would have won is nicely correlated with that round's threshold vector 
in the sense that the player wins an item in that round only when the item's threshold is low. On the contrary, when evaluating the player's utility had he won a specific set of items in all rounds the player may win and pay for an item even when the price of the item is high. The results of this section imply that for XOS valuations, the no-regret condition is stronger than the no-envy condition. Hence, when we analyze no-envy learning algorithms for XOS bidders we relax the algorithm's benchmark. Correspondingly, if the bidders of a SiSPA are XOS and use no-envy learning algorithms to update their bid vectors, the set of outcomes that they may converge to is broader than the set of no-regret outcomes. So, in comparison to no-regret learning outcomes, our positive results in this section pertain to a broader set of outcomes, endowing them with computational tractability and as we will see also approximate welfare optimality. 

\paragraph{Roadmap.} In the rest of this section we  reduce the no-envy learning problem to a related online learning problem, which we call the \emph{online \buyers problem}. We show that achieving \emph{no-envy} in the \emph{online bidding problem} can be reduced to achieving \emph{no-regret} in the \emph{online \buyers problem}. Similarly, achieving $\alpha$-approximate no-envy can be reduced to some form of approximate no-regret. Lastly we show that \emph{no-envy learning} implies good welfare: if all players in the simultaneous second-price auction game follow a no-envy learning algorithm then the average welfare of the selected allocations is approximately optimal. In subsequent sections we will provide efficiently computable no-envy or approximate no-envy algorithms for the \emph{online \buyers problem}. Finally, our positive results extend to the most commonly studied mechanisms through the smoothness framework, as we elaborate in Section~\ref{sec:extension to smooth mechanisms}. 
 
\subsection{Online \Buyers Problem}\label{sec:sets-to-bids}\label{sec:buyers}  We first show that we can reduce the no-envy learning problem to a related online learning problem, which we call the \emph{online \buyers problem}.
\begin{defn}[Online \buyers problem] \label{def:online buyer's prob}
Imagine a buyer with some valuation $v(\cdot)$ over a set of $m$ items who is asked to request a subset of the items to buy each day before seeing their prices. In particular, at each time-step $t$ an adversary picks a set of thresholds/prices $\theta_{j}^t$ for each item $j$ adaptively based on the past actions of the buyer. Without observing the thresholds at step $t$, the buyer picks a set $S^t$ of items to buy. His instantaneous reward is:
\begin{equation}\label{eqn:buyer-utility}
u(S^t,\theta^t) = v(S^t)- \sum_{j\in S^t}\theta_{j}^t,
\end{equation}
i.e., the buyer receives the set $S^t$ and pays the price for each item in the set.
\end{defn}

For simplicity, we overload notation and denote by $u(b,\theta)$ the reward in the online bidding problem from a bid vector $b$ and with $u(S,\theta)$ the reward in the online \buyers problem from a set $S$. We relate the online \buyers problem to the online bidding problem in SiSPAs in a black-box way, by showing that when the valuations are XOS, then any algorithm which achieves no-regret or ``approximate'' no-regret for the \emph{online \buyers problem} can be turned in a black-box and efficient manner into a no-envy algorithm for the \emph{online bidding problem}, assuming access to an XOS oracle.

\begin{lemma}[From buyer to bidder] \label{lem:from sets to bids} Suppose that we are given access to an efficient learning algorithm for the online \buyers problem which guarantees for any adaptive adversary: 
\begin{equation}\label{eqn:alg-buyers-guarantee}
\E\left[\frac{1}{T} \sum_{t=1}^T u(S^t,\theta^t)\right] \geq \max_{S} \left( \frac{1}{\alpha}v(S) - \sum_{j\in S} \hat{\theta}_j^T\right) - \epsilon(T),
\end{equation}
where $\hat{\theta}_j^T=\frac{1}{T}\sum_{t=1}^{T} \theta_j^t$. Then we can construct an efficient $\alpha$-approximate no-envy  algorithm for the online bidding problem, assuming access to XOS value oracles. Moreover, this algorithm never submits an overbidding bid. 
\end{lemma}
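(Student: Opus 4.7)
The plan is to convert the online buyer's algorithm into an online bidding algorithm via a deterministic, per-round reduction that uses one XOS oracle call. At each round $t$, I first invoke the buyer's algorithm to obtain a set $S^t$. I then query the XOS oracle on $S^t$ to obtain the additive component $v^\ell \in \mathcal{V}$ certifying $v(S^t) = \sum_{j \in S^t} v^\ell_j$. I submit the bid vector $b^t_j = v^\ell_j$ if $j \in S^t$ and $b^t_j = 0$ otherwise. After the auction reveals the threshold vector $\theta^t$, I feed $\theta^t$ to the buyer's algorithm as its round-$t$ price vector.

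The first thing to verify is the no-overbidding property. For any $T \subseteq [m]$, one has $\sum_{j \in T} b^t_j = \sum_{j \in T \cap S^t} v^\ell_j \le \sum_{j \in T} v^\ell_j \le v(T)$, where the last step uses the defining XOS property that each $v^\ell \in \mathcal{V}$ is dominated pointwise-in-aggregate by $v$. Second, and central to the reduction, I would prove the pointwise comparison $u(b^t, \theta^t) \ge u(S^t, \theta^t)$. Write $W^t = \{j \in S^t : v^\ell_j > \theta^t_j\}$ for the actual winning set under $b^t$. Then $v(W^t) \ge \sum_{j \in W^t} v^\ell_j$ by the XOS lower bound, so
\begin{equation*}
u(b^t, \theta^t) \;\ge\; \sum_{j \in W^t} (v^\ell_j - \theta^t_j) \;=\; \sum_{j \in S^t} \max(v^\ell_j - \theta^t_j, 0) \;\ge\; \sum_{j \in S^t} (v^\ell_j - \theta^t_j) \;=\; u(S^t, \theta^t),
\end{equation*}
where the last equality uses $v(S^t) = \sum_{j \in S^t} v^\ell_j$.

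Once these two facts are in hand, the no-envy guarantee follows by summing over $t$, taking expectations, and invoking the buyer-side hypothesis \eqref{eqn:alg-buyers-guarantee}:
\begin{equation*}
\E\Big[\tfrac{1}{T}\sum_{t=1}^T u(b^t,\theta^t)\Big] \;\ge\; \E\Big[\tfrac{1}{T}\sum_{t=1}^T u(S^t,\theta^t)\Big] \;\ge\; \max_S\Big(\tfrac{1}{\alpha} v(S) - \sum_{j \in S}\hat{\theta}_j^T\Big) - \epsilon(T).
\end{equation*}

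The only subtlety to double-check is that adaptivity is preserved by the reduction. Since $b^t$ is a deterministic function of $S^t$ (given a fixed, deterministic XOS oracle, which can be assumed w.l.o.g.\ by fixing tie-breaking), any adaptive adversary in the bidding problem that selects $\theta^t$ from the history $b^{1:t-1}$ is simulated by an adaptive adversary in the buyer's problem that selects $\theta^t$ from $S^{1:t-1}$, so the buyer-side guarantee applies to the sequence $(\theta^t)$ actually produced. I expect the main obstacle, beyond this bookkeeping, to be simply identifying the correct XOS component to bid with; but this is precisely what the XOS oracle is designed to deliver, and the pointwise inequality above is then immediate.
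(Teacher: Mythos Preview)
Your proposal is correct and follows essentially the same approach as the paper: query the buyer's algorithm for $S^t$, bid the XOS-oracle additive representation on items in $S^t$ (and zero elsewhere), prove the pointwise inequality $u(b^t,\theta^t)\ge u(S^t,\theta^t)$ via the XOS lower bound, and derive no-overbidding from the same XOS property. Your explicit remark on preserving adaptivity through the deterministic reduction is a welcome addition that the paper leaves implicit.
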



\paragraph{A trivial example: efficient no-envy for $O(\log(m))$-capacitated XOS.} Consider a buyer with a $d$-capacitated XOS valuation, i.e. the valuation is XOS and for any set $S$: $v(S) = \max_{T\subseteq S: |T|\leq d} v(T)$. If $d=O(\log(m))$, then 
it suffices for the buyer to achieve no-regret  against sets of size $d$, which are $2^d$. This is polynomial if $d=O(\log(m))$. Thus we can simply invoke any off-the-shelf no-regret learning algorithm, such as multiplicative weight updates \cite{Auer1995}, where each set of $d$ is treated as an expert, and apply it to the \emph{online \buyers problem}. This would be efficiently computable and would lead to a regret rate of $O(\sqrt{T\log(2^d)}= O(\sqrt{T\cdot d})$. By Lemma~\ref{lem:from sets to bids}, we then get an efficiently computable exact no-envy algorithm with the same envy rate. 

\medskip \noindent The challenge addressed by our paper is to remove the bound on $d$, which we address in the next sections.

\subsection{No-Envy Implies Approximately Optimal Welfare}
We conclude 
by showing that if all players in a SiSPA 
use an $\alpha$-approximate no-envy learning algorithm, then the average welfare is a $2\alpha$-approximation to the optimal welfare, less an additive error term stemming from the envy of the players. In other words the price of anarchy of $\alpha$-approximate no-envy dynamics is upper bounded by $2\alpha$.

\begin{theorem}\label{thm:envy-poa}
If $n$ players participating in repeated executions of a SiSPA use an $\alpha$-approximate no-envy learning algorithm with envy rate $\epsilon(T)$ and which does not overbid, then in $T$ executions of the SiSPA the average bidder welfare is at least $\frac{1}{2\alpha}\opt - n\cdot \epsilon(T)$, where $\opt$ is the optimal welfare for the input valuation profile $v=(v_1,\ldots,v_n)$.
\end{theorem}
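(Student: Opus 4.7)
The plan is to follow the standard smoothness-style price-of-anarchy argument, but evaluating each player against their bundle in the optimal allocation rather than against a single deviating bid. Let $O = (O_1, \ldots, O_n)$ denote a welfare-optimal partition of the items and, for every player $i$ and round $t$, let $\theta_{i,j}^t = \max_{i' \neq i} b_{i',j}^t$ be the threshold that $i$ faces on item $j$. I would first invoke the $\alpha$-approximate no-envy guarantee for each player $i$ against the fixed hindsight bundle $S = O_i$, obtaining
\begin{equation*}
\frac{1}{T}\sum_{t=1}^T u_i(b_i^t,\theta_i^t) \;\geq\; \frac{1}{\alpha}v_i(O_i) \;-\; \frac{1}{T}\sum_{t=1}^T \sum_{j \in O_i} \theta_{i,j}^t \;-\; \epsilon(T).
\end{equation*}
Summing this over $i$ yields $\frac{1}{\alpha}\opt$ as the first term, a total error of $n\epsilon(T)$, and a middle ``price of OPT'' term that needs to be bounded against the realized welfare.

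Next I would bound the price-of-OPT term round by round. Since the bundles $O_1, \ldots, O_n$ are disjoint and $\theta_{i,j}^t \leq \max_{i'} b_{i',j}^t = b_{w^t(j),j}^t$, where $w^t(j)$ is the winner of item $j$ at round $t$, I can write
\begin{equation*}
\sum_{i} \sum_{j \in O_i} \theta_{i,j}^t \;\leq\; \sum_{j=1}^m b_{w^t(j),j}^t \;=\; \sum_i \sum_{j \in S_i^t} b_{i,j}^t,
\end{equation*}
where $S_i^t$ is the set of items $i$ wins at time $t$. The no-overbidding assumption then gives $\sum_{j \in S_i^t} b_{i,j}^t \leq v_i(S_i^t)$, so the inner double sum is at most the realized welfare $W^t := \sum_i v_i(S_i^t)$. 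Averaging over $t$ bounds the middle term by the average welfare $\bar W := \frac{1}{T}\sum_t W^t$.

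Finally, I would use that realized welfare dominates realized total utility, $\sum_i u_i(b_i^t,\theta_i^t) = W^t - \text{revenue}^t \leq W^t$, so averaging and summing the no-envy inequalities gives
\begin{equation*}
\bar W \;\geq\; \sum_i \frac{1}{T}\sum_{t=1}^T u_i(b_i^t,\theta_i^t) \;\geq\; \frac{1}{\alpha}\opt \;-\; \bar W \;-\; n\epsilon(T),
\end{equation*}
and rearranging yields $\bar W \geq \frac{1}{2\alpha}\opt - \frac{n}{2}\epsilon(T)$, which is at least the claimed $\frac{1}{2\alpha}\opt - n\epsilon(T)$. Taking expectations handles the randomness in the bids. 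The one conceptually delicate step is the bound on the thresholds: one must be careful that the threshold faced by the losers of item $j$ can in principle be as large as the winning bid, which is why we bound $\theta_{i,j}^t$ by $\max_{i'} b_{i',j}^t$ rather than by the second-highest bid; this is precisely where the no-overbidding hypothesis is essential to convert a sum of bids into a bound by the welfare, and is the main obstacle one must identify correctly to make the argument go through.
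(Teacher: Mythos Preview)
Your proposal is correct and follows essentially the same argument as the paper's proof: apply each player's no-envy guarantee against their optimal bundle $O_i$, sum over players, bound $\sum_i\sum_{j\in O_i}\theta_{i,j}^t$ by $\sum_{j\in[m]}\max_{i'}b_{i',j}^t$, use no-overbidding to bound this by the realized welfare $W^t$, bound the left-hand side by $\bar W$ via $u_i\le v_i(S_i^t)$, and rearrange. The paper likewise ends up with $2\bar W\ge \tfrac{1}{\alpha}\opt-n\epsilon(T)$ before relaxing to the stated bound, and your remark about where no-overbidding enters matches exactly the paper's use of it.
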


\section{Online Learning with Oracles}\label{sec:oracles}

In this section we devise novel follow-the-perturbed leader style algorithms for general online learning problems. We then apply these algorithms and their analysis to get no-envy learning algorithms (Section~\ref{sec:demand})  for the online bidding problem. In Section \ref{sec:app-finite} we instantiate our analysis to learning problems where the adversary can only pick one among finitely many parameters and give implications of this setting to  no-regret learning algorithms (Section~\ref{sec:app-finite}) for the online bidding problem, with a finite number of possible thereshold vectors. In Section~\ref{sec:app-security}, we also give implications to security games \cite{Balcan2015}. 

Consider an online learning problem where at each time-step an adversary picks a parameter $\theta^t\in \Theta$ and the algorithm picks an action $a^t\in A$. The algorithm receives a reward: $u(a^t,\theta^t)$, which could be positive or negative. We will assume that the rewards are uniformly bounded by some function of the parameter $\theta$, for any action $a\in A$, i.e.: $\forall a\in A: u(a,\theta) \in \left[-f_-(\theta),f_+(\theta)\right]$. 
We will denote with $\theta^{1:t}$ a sequence of parameters $\{\theta_1,\theta_2,\ldots,\theta_t\}$. Moreover, we denote with: $
U(a,\theta^{1:t}) = \sum_{\tau=1}^t u(a,\theta^\tau)$, 
the cumulative utility of a fixed action $a\in A$ for a sequence of choices $\theta^{1:t}$ of the adversary. 

\begin{defn}[Optimization oracle] \label{sec:optimization oracle} We will consider the case where we are given oracle access to the following optimization problem: given a sequence of parameters $\theta^{1:t}$ compute some optimal action for this sequence: 
\begin{equation}
\Or{\theta^{1:t}} = \arg\max_{a\in A} U(a,\theta^{1:t}).
\end{equation}
\end{defn}

We define a new type of perturbed leader algorithms where the perturbation is introduced in the form of extra samples of parameters:
\begin{algorithm}[Follow the perturbed leader with sample perturbations]\label{defn:ftpl}
At each time-step $t$:
\begin{enumerate}
 \item Draw a random sequence of parameters $\{x\}^t=\{x^1,\ldots,\ldots,x^k\}^t$ independently and based on some time-independent distribution over sequences. Both the length of the sequence and the parameter $x^i\in \Theta$ at each iteration of the sequence can be random.  
 \item Denote with $\{x\}^t\cup \theta^{1:t-1}$ the augmented sequence of parameters where we append the extra parameter samples $\{x\}^t$ at the beginning of sequence $\theta^{1:t-1}$
 \item Invoke oracle $M$ and play action:
\begin{equation}
a^t = \Or{\{x\}^t\cup \theta^{1:{t-1}}}.
\end{equation}
 \end{enumerate}
\end{algorithm}

Using a reduction of \cite{Hutter2005} (see their Lemma 12) we can show that to bound the regret of Algorithm~\ref{defn:ftpl} against adaptive adversaries it suffices to bound the regret against oblivious adversaries (who pick the sequence non-adaptively), of the following algorithm, which only draws the samples once ahead of time (see Appendix~\ref{sec:app-oblivious}). In subsequent sections, we analyze this algorithm and setting.
%

\begin{algorithm}[Follow the perturbed leader with fixed sample perturbations]\label{defn:lazy-ftpl}
Draw a random sequence of parameters $\{x\}=\{x^1,\ldots,\ldots,x^k\}$ based on some distribution over sequences and at the beginning of time. At each time-step $t$, invoke oracle $M$ and play action: $
a^t = \Or{\{x\}\cup \theta^{1:{t-1}}}$.
\end{algorithm}

\paragraph{Perturbed Leader Regret Analysis.}
We give a general theorem on the regret of a perturbed leader algorithm with sample perturbations. In the sections that follow we will give instances of this analysis in two online learning settings related to no-envy and no-regret dynamics in our bidding problem and provide concrete regret bounds. 

\begin{theorem}\label{thm:general-regret}
Suppose that the distribution over sample sequences $\{x\}$, satisfies the \emph{stability property} that for any sequence of parameters $\theta^{1:T}$ and for any $t\in [1:T]$:
\begin{equation}
\E_{\{x\}}\left[u(\Or{\{x\}\cup \theta^{1:t}},\theta^t)-u( \Or{\{x\}\cup \theta^{1:{t-1}}}, \theta^t)\right]\leq  g(t)
\end{equation}
Then the expected regret of Algorithm~\ref{defn:lazy-ftpl} against oblivious adversaries is upper bounded by:
\begin{equation}
\sup_{a^*\in A}\E_{\{x\}}\left[\sum_{t=1}^T \left( u(a^*,\theta^t) - u(a^t,\theta^t)\right)\right] \leq \sum_{t=1}^T g(t) +  \E_{\{x\}}\left[\sum_{x^{\tau}\in \{x\}} \left(f_-(x^\tau)+f_+(x^\tau)\right)\right]
\end{equation}
Hence, the regret of Algorithm~\ref{defn:ftpl} against adaptive adversaries is bounded by the same amount. 
\end{theorem}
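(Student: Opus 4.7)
The plan is a standard be-the-perturbed-leader (BTPL) decomposition, tailored to the unusual perturbation scheme here (augmenting the history by fake samples rather than adding noise to cumulative rewards). First I would introduce the hypothetical ``cheating'' action
\[
\tilde{a}^t = \Or{\{x\}\cup \theta^{1:t}},
\]
the FTL optimizer that gets to peek at $\theta^t$ before choosing. The regret of Algorithm~\ref{defn:lazy-ftpl} against any fixed $a^*$ then splits telescopically as
\[
u(a^*,\theta^t) - u(a^t,\theta^t) \;=\; \bigl[u(a^*,\theta^t)-u(\tilde{a}^t,\theta^t)\bigr] \;+\; \bigl[u(\tilde{a}^t,\theta^t)-u(a^t,\theta^t)\bigr],
\]
so it suffices to control the two bracketed sums separately.

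For the first bracket I would apply the classical be-the-leader (BTL) lemma to the augmented offline sequence $x^1,\dots,x^k,\theta^1,\dots,\theta^T$, where the leader at the $\tau$-th sample step is $\Or{x^{1:\tau}}$ and the leader at the $t$-th real step is exactly $\tilde{a}^t$. A one-line induction on sequence length (the standard BTL proof) gives
\[
\sum_{\tau=1}^{k} u\bigl(\Or{x^{1:\tau}},x^\tau\bigr) + \sum_{t=1}^T u(\tilde{a}^t,\theta^t) \;\geq\; \sum_{\tau=1}^{k} u(a^*,x^\tau) + \sum_{t=1}^T u(a^*,\theta^t).
\]
Rearranging and using the boundedness hypothesis $u(\cdot,\theta)\in[-f_-(\theta),f_+(\theta)]$ to bound each $u(\Or{x^{1:\tau}},x^\tau)\leq f_+(x^\tau)$ and $-u(a^*,x^\tau)\leq f_-(x^\tau)$ yields
\[
\sum_{t=1}^T\bigl[u(a^*,\theta^t)-u(\tilde{a}^t,\theta^t)\bigr] \;\leq\; \sum_{x^\tau\in\{x\}}\bigl(f_-(x^\tau)+f_+(x^\tau)\bigr).
\]
Taking expectations over $\{x\}$ preserves this bound.

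For the second bracket, the stability hypothesis is stated in precisely the form needed: by assumption,
\[
\E_{\{x\}}\bigl[u(\tilde{a}^t,\theta^t) - u(a^t,\theta^t)\bigr] \;\leq\; g(t)
\]
for each oblivious $\theta^t$, so summing over $t$ gives $\E[\sum_t (u(\tilde{a}^t,\theta^t)-u(a^t,\theta^t))]\leq \sum_t g(t)$. Adding the two expected bounds and taking the supremum over $a^*$ yields the claimed regret bound for Algorithm~\ref{defn:lazy-ftpl} against oblivious adversaries. Finally, to upgrade to Algorithm~\ref{defn:ftpl} against adaptive adversaries I would invoke the cited reduction of Hutter–Poland (their Lemma 12), which shows that drawing a fresh independent sample sequence at every time step can only match or improve the regret, and that any oblivious bound on Algorithm~\ref{defn:lazy-ftpl} lifts to the same adaptive bound on Algorithm~\ref{defn:ftpl}.

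I do not expect a genuine technical obstacle: the heavy lifting has already been done in setting up the right abstraction (a stability hypothesis tailored to BTPL). The only step requiring care is the bookkeeping for BTL on the augmented sequence, since one must match the ``leader at step $\tau$'' to the oracle call $\Or{x^{1:\tau}}$ (and to $\tilde{a}^t$ at real steps) in the exact order in which the sequence is consumed; BTL is prefix-consistent so this is mechanical.
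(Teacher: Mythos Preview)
Your proposal is correct and follows essentially the same route as the paper: the paper also introduces the foresight action $\tilde{a}^t=\Or{\{x\}\cup\theta^{1:t}}$, proves the Be-the-Leader lemma by induction on the augmented sequence $\{x\}\cup\theta^{1:T}$ to bound the first bracket by $\sum_\tau (f_+(x^\tau)+f_-(x^\tau))$, and then uses the stability assumption verbatim for the second bracket, with the adaptive upgrade coming from the Hutter--Poland reduction already stated in the text.
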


\subsection{Efficient No-Envy Learning with Demand Oracles}\label{sec:demand}

We will apply the perturbed leader approach to the online buyer's problem we defined in Section~\ref{sec:buyers}. Then using Lemma~\ref{lem:from sets to bids} we can turn any such algorithm to a no-envy learning algorithm for the original bidding problem in second price auctions, when the valuations fall into the XOS class. 

In the online buyer's problem the action space is the collection of sets $A=2^m$, while the parameter set of the adversary is to pick a threshold $\theta_j$ for each item $j$, i.e. $\Theta=\R_+^m$. The reward $u(S,\theta)$, at each round from picking a set $S$, if the adversary picks a vector $\theta\in \Theta$ is given by Equation~\eqref{eqn:buyer-utility}. We will instantiate Algorithm~\ref{defn:lazy-ftpl} for this problem and apply the generic approach of the previous section. We will specify the exact distribution over sample sequences that we will use and we will bound the functions $f_-(\cdot)$, $f_+(\cdot)$ and $g(\cdot)$. First, observe that the reward is bounded by a function of the threshold vector:
$u(S,\theta)\in \left[-\|\theta\|_1,H\right]$, 
where $H$ is an upper bound on the valuation function, i.e. $v([m])<H$.

\paragraph{Optimization oracle.} 
It is easy to see that the offline problem for a sequence of parameters $\theta^{1:t}$ is exactly a \emph{demand oracle}, where the price on each item $j$ is its average threshold $\hat{\theta}_j^t$ in hindsight. 

\paragraph{Single-sample exponential perturbation.} We will use the following sample perturbation: we will only add one sample $x\in \Theta$, where the coordinate $x_i$ of the sample is distributed independently and according to an exponential distribution with parameter $\epsilon$, i.e. for any $k\geq 0$ the density of $x_i$ at $k$ is $f(k)=\frac{1}{2}\epsilon e^{-\epsilon k}$, while it is $0$ for $k<0$.

The most important part of the analysis is proving a stability bound for our algorithm. We provide such a proof in Appendix~\ref{sec:app-demand-thm}. Given the stability bound we then apply Theorem~\ref{thm:general-regret} to get a bound for Algorithm~\ref{defn:lazy-ftpl} with a single sample exponential perturbation.
\begin{theorem}\label{thm:demand-regret-bound}
Algorithm~\ref{defn:lazy-ftpl} when applied to the online buyers problem with a single-sample exponential perturbation with parameter $\epsilon = \sqrt{\frac{1}{H D T}}$, where $D$ is the maximum threshold that the adversary can pick and $H$ is the maximum value, runs in randomized polynomial time, assuming a demand oracle and achieves regret:
\begin{equation*}
\sup_{a\in A} \sum_{t=1}^T \E\left[u(a,\theta^t)-u(a^t,\theta^t)\right]\leq  2(mD+H) m (\log(T) + 1)+4m \sqrt{(mD+H)DT} = O\left(m^2(D+H)\sqrt{T}\right)
\end{equation*}
\end{theorem}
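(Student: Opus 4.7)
The plan is to invoke Theorem~\ref{thm:general-regret} with $A = 2^{[m]}$, $\Theta = \mathbb{R}_+^m$, reward $u(S,\theta) = v(S) - \sum_{j \in S}\theta_j$, and the single-sample exponential perturbation described. The theorem reduces the task to bounding two quantities: the expected perturbation penalty $\E_x[f_-(x) + f_+(x)]$, and the per-round stability $g(t)$. The polynomial runtime follows once I show that each oracle call $\Or{\{x\}\cup\theta^{1:t-1}}$ reduces to a single demand-oracle query.

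For the oracle reduction, unfolding the cumulative utility gives $U(S,\{x\}\cup\theta^{1:t-1}) = t\cdot v(S) - \sum_{j\in S}\bigl(x_j + \sum_{\tau<t}\theta_j^\tau\bigr)$. Since rescaling the objective by $1/t$ does not change its argmax, $\Or{\{x\}\cup\theta^{1:t-1}} = \arg\max_S\bigl(v(S) - \sum_{j\in S} q_j\bigr)$ with the ``average prices'' $q_j = (x_j + \sum_{\tau<t}\theta_j^\tau)/t$; this is exactly one demand-oracle call, giving the claimed polynomial runtime. For the perturbation penalty, observe that $u(S,\theta) \in [-\|\theta\|_1, H]$ for all $S$, so I can take $f_-(\theta) = \|\theta\|_1$ and $f_+(\theta) = H$. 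Since the coordinates of $x$ are i.i.d.\ exponentials with parameter $\epsilon$, $\E[\|x\|_1] = m/\epsilon$, hence $\E_x[f_-(x) + f_+(x)] = m/\epsilon + H$.

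The technical heart is bounding the per-round stability $g(t) = \E_x\bigl[u(\Or{\{x\}\cup\theta^{1:t}},\theta^t) - u(\Or{\{x\}\cup\theta^{1:t-1}},\theta^t)\bigr]$. Writing $S_t$ and $S_{t+1}$ for the two argmaxes, the identity $U(S,\{x\}\cup\theta^{1:t}) = U(S,\{x\}\cup\theta^{1:t-1}) + u(S,\theta^t)$ exhibits $S_{t+1}$ as the maximizer of $S_t$'s objective plus the bounded perturbation $u(\cdot,\theta^t)$. I would exploit the density-ratio identity for independent exponentials, namely $p_{x+\theta^t}(y)/p_x(y) = e^{\epsilon\|\theta^t\|_1}$ on the shifted orthant $\{y\ge\theta^t\}$ and $0$ elsewhere, to couple the laws of $S_t$ and $S_{t+1}$. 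Splitting on the event $\{x\ge\theta^t\}$, using the reward bound $|u(S,\theta^t)|\le H+\|\theta^t\|_1\le H+mD$, the estimate $e^{\epsilon\|\theta^t\|_1}-1\le 2\epsilon\|\theta^t\|_1$ when the argument is small, and the union bound $\Pr[x\not\ge\theta^t]\le \epsilon\|\theta^t\|_1$, gives an $O(\epsilon\cdot mD\cdot(H+mD))$ per-step gap. A tighter accounting that leverages the $1/t$ scaling of $x$ inside the average prices $q_j$ (so that the effective perturbation concentrates as $t$ grows, reducing the chance of a demand change) is what is needed to obtain the sharper bound $\sum_{t=1}^T g(t) = O\bigl(m(H+mD)\log T\bigr)$, as detailed in Appendix~\ref{sec:app-demand-thm}.

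Assembling the pieces, Theorem~\ref{thm:general-regret} yields $\text{Regret} \le \sum_{t=1}^T g(t) + m/\epsilon + H$. Substituting $\epsilon = \sqrt{1/(HDT)}$ makes $m/\epsilon = m\sqrt{HDT}$, matching the $\sqrt{T}$ term of the stated bound, while the stability sum contributes the $\log T$ term, producing the final bound $2m(mD+H)(\log T + 1) + 4m\sqrt{(mD+H)DT} = O\bigl(m^2(D+H)\sqrt{T}\bigr)$. The main obstacle is the stability analysis: the reward is non-linear in $S$ because of the XOS valuation $v$, and the coefficient of $v(S)$ in the cumulative utility grows from $t$ to $t+1$ across rounds, so the standard FTPL shift $x\mapsto x+\theta^t$ does not exactly equate the BTL and FTL distributions. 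Recovering $O(\log T)$ rather than $O(T)$ total stability requires going beyond a worst-case per-step bound and exploiting both the demand-oracle structure and the averaging built into the oracle call.
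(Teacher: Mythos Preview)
Your high-level plan is the same as the paper's: invoke Theorem~\ref{thm:general-regret}, observe that the oracle call is a demand query on the average price vector, compute $\E_x[f_-(x)+f_+(x)]=m/\epsilon+H$, and then bound the stability terms $g(t)$. The demand-oracle reduction and the perturbation-penalty computation are fine.

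The genuine gap is in the stability step. Your proposed coupling via the shift $x\mapsto x+\theta^t$ does \emph{not} equate the FTL and BTL problems. Concretely, after the shift the FTL objective becomes $\arg\max_S\, t\,v(S)-\sum_{j\in S}(x_j+\sum_{\tau\le t}\theta_j^\tau)$, whereas BTL is $\arg\max_S\, (t{+}1)\,v(S)-\sum_{j\in S}(x_j+\sum_{\tau\le t}\theta_j^\tau)$; the coefficient of $v(S)$ still differs, so the two argmaxes need not agree and your density-ratio identity does not transfer BTL to FTL. You acknowledge this at the end, but ``a tighter accounting that leverages the $1/t$ scaling'' is not an argument, and you never establish even the weaker $g(t)=O(\epsilon m D(H+mD))$ bound you quote.

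What the paper does instead is use an \emph{affine} change of variable that matches the average prices exactly: $\mu(x)=\tfrac{t}{t+1}x-\tfrac{1}{t+1}\sum_{\tau<t}\theta^\tau+\tfrac{t}{t+1}\theta^t$, so that $M(\{x\}\cup\theta^{1:t})=M(\{\mu(x)\}\cup\theta^{1:t-1})$. This is a scaling by $t/(t+1)$ composed with a shift; its Jacobian is $(t/(t+1))^m\ge 1-m/t$, and the exponential density satisfies $f(\mu(x))\ge e^{-2\epsilon mD}f(x)$ on the relevant orthant. After translating utilities to $[0,mD+H]$, this yields $g(t)\le (mD+H)\bigl(m/t+3\epsilon mD\bigr)$. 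The $m/t$ piece is exactly what produces the $(mD+H)m(\log T+1)$ term; the $3\epsilon mD$ piece contributes $3(mD+H)\epsilon mDT$, which is then balanced against $m/\epsilon$ by the choice of $\epsilon$. Without the scaling factor in the coupling there is no $1/t$ decay and no $\log T$ bound, so the missing idea is precisely this affine (not merely translational) map.
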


Theorem~\ref{thm:demand-regret-bound}, Lemma~\ref{lem:from sets to bids} and the reduction form oblivious to adaptive adversaries, imply a polynomial time no-envy algorithm for the online bidding problem assuming access to demand and XOS oracles. If valuations are submodular, then XOS oracles can be simulated in polynomial time via demand oracles \cite{Dobzinski2010}, thereby only requiring access to demand oracles. Thus we get Theorem~\ref{thm:XOS-demand}.

\section{Efficient No-Envy Learning via Convex Rounding}
\label{sec:convex}

In this section we show how to design efficient approximate no-envy learning algorithms via the use of the convex rounding technique, which has been used in approximation algorithms and in truthful mechanism design, and via online convex optimization applied to an appropriately defined online learning problem in a relaxed convex space. Though our techniques can be phrased more generally, throughout the section we will mostly cope with the concrete case where the valuation of the player is an explicitly given coverage valuation. These valuations have been well-studied in combinatorial auctions \cite{Dughmi2011} and are a subset of submodular valuations. Answering value and XOS queries for such valuations can be done in polynomial time \cite{Dughmi2011,Dobzinski2010}.
 \begin{defn}[Coverage valuation] A coverage valuation is given via the means of a vertex-weighted hyper-graph $\G=(V,E)$. Each item $j\in [m]$ corresponds to a hyper-edge. Each vertex $v\in V$ has a weight $w_v\geq 0$. The value of the player for a set $S$ is the sum of the vertices of the hyper-graph, that is contained in the union of the hyper-edges corresponding to the items in $S$.
\end{defn}

\paragraph{Proving Theorem~\ref{thm:main-coverage}.} Based on Lemma~\ref{lem:from sets to bids}, in order to design an $\alpha$-approximate no-envy algorithm for the online bidding problem, it suffices to design an efficient algorithm for the online \buyers problem with guarantees as described in Lemma~\ref{lem:from sets to bids}. In the remainder of the section we will design such an algorithm for the \emph{online \buyers problem} with $\alpha=\frac{e}{e-1}$ and for explicit coverage valuations, thereby proving Theorem~\ref{thm:main-coverage}. Subsequently, by Theorem~\ref{thm:envy-poa} the latter will imply a price of anarchy guarantee of $\frac{2e}{e-1}$ for such dynamics.
The only missing piece in the proof of Theorem~\ref{thm:main-coverage} is the following lemma, whose full proof we defer to Appendix~\ref{sec:app-convex}.
\begin{lemma}\label{lem:buyers-learning}
If the bidder's valuation $v(\cdot)$ is an explicitly given coverage valuation, there exists a polynomial-time computable learning algorithm for the online \buyers problem that guarantees that for any adaptively chosen sequence of thresholds $\theta^{1:T}$ with $\theta_j^t\leq K$:
\begin{equation}
\E\left[\frac{1}{T} \sum_{t=1}^T u(S^t,\theta^t)\right] \geq \max_{S} \left(\left(1-\frac{1}{e}\right) v(S) - \sum_{j\in S} \hat{\theta}_j^T\right)- 3m\frac{\max_{j\in [m]}v(\{j\})+\sqrt{K}}{\sqrt{T}},
\end{equation}
\end{lemma}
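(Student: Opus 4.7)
The plan is to instantiate the convex-rounding paradigm sketched above with the Poisson rounding of~\cite{Dughmi2011}. To each fractional $x\in[0,1]^m$ I would associate the distribution $D(x)$ on $2^{[m]}$ that includes each item $j$ independently with probability $1-e^{-x_j}$. A direct calculation for a coverage valuation with vertex-weighted hypergraph $(V,E)$ gives $F(x):=\mathbb{E}_{S\sim D(x)}[v(S)] = \sum_{v\in V} w_v\bigl(1 - e^{-\sum_{j:\,v\in\beta_j}x_j}\bigr)$, which is concave in $x$ as a nonnegative combination of concave compositions of the scalar function $1-e^{-t}$ with linear forms. Two elementary facts drive the analysis: (i) for any integral indicator $\mathbf{1}_S$, $F(\mathbf{1}_S)\ge(1-1/e)v(S)$, since each vertex covered by $S$ contributes at least $(1-1/e)w_v$; and (ii) $\mathbb{E}_{S\sim D(x)}[\sum_{j\in S}\theta_j] = \sum_j \theta_j(1-e^{-x_j}) \le \langle\theta,x\rangle$ because $1-e^{-x_j}\le x_j$.

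Given (i)--(ii), the algorithm I would run is projected online gradient ascent on the concave proxy $\tilde U_t(x):=F(x)-\langle\theta^t,x\rangle$ over $[0,1]^m$, maintaining an iterate $x^t$ and outputting the random set $S^t\sim D(x^t)$ at round $t$. Because the coverage valuation is given explicitly, $\nabla F(x)$ (whose $j$-th coordinate equals $\sum_{v\in\beta_j} w_v e^{-\sum_{k:\,v\in\beta_k}x_k}$) and hence $\nabla\tilde U_t$ can be evaluated in polynomial time, so each update step is efficient. Property~(ii) yields $\mathbb{E}[u(S^t,\theta^t)\mid x^t]=F(x^t)-\sum_j \theta^t_j(1-e^{-x^t_j})\ge \tilde U_t(x^t)$, so the actual expected utility pointwise dominates the concave proxy.

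These pieces then combine cleanly. The OCO regret guarantee for concave objectives, which extends to adaptive adversaries, yields $\frac{1}{T}\sum_t\tilde U_t(x^t)\ge \frac{1}{T}\sum_t\tilde U_t(x^*)-\epsilon(T)$ for every $x^*\in[0,1]^m$, and in particular for $x^*=\mathbf{1}_{S^*}$. Combining with (i) and taking expectations gives $\frac{1}{T}\,\mathbb{E}\bigl[\sum_t u(S^t,\theta^t)\bigr]\ge (1-1/e)v(S^*)-\sum_{j\in S^*}\hat{\theta}_j^T - \epsilon(T)$, which is the claimed no-envy bound after taking the supremum over $S^*$.

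The main technical hurdle will be pinning $\epsilon(T)$ down to match the stated rate $3m(\max_j v(\{j\})+\sqrt{K})/\sqrt{T}$. Bounding the value-side gradient coordinates coordinate-wise by $\max_j v(\{j\})$ plugs into a standard OGA analysis and delivers the first summand cleanly; the delicate part is the cost-side contribution, since a plain $\ell_2$ analysis on $[0,1]^m$ produces $O(mK/\sqrt{T})$ rather than $O(m\sqrt{K}/\sqrt{T})$. Recovering the $\sqrt{K}$ factor will require either a separate per-coordinate or entropic treatment of the linear piece $\langle\theta^t,x\rangle$ or a variance-based refinement of OGA, together with a short martingale step that transfers the guarantee from the proxy $\tilde U_t(x^t)$ to the realized utilities of the sampled sets $S^t$.
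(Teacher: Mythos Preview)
Your approach is essentially identical to the paper's: the paper defines an $\alpha$-approximate convex rounding scheme with exactly your properties (i)--(ii), instantiates it with the Poisson rounding $D(x)$ that includes item $j$ independently with probability $1-e^{-x_j}$, establishes concavity of $F$ and $F(\mathbf 1_S)\ge(1-1/e)v(S)$ for coverage valuations, and then runs Zinkevich's projected gradient ascent on the concave proxy $F(x)-\langle\theta^t,x\rangle$, just as you propose.

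Regarding your worry about the $\sqrt{K}$ versus $K$ dependence: the paper does \emph{not} do any per-coordinate, entropic, or variance-refined analysis; it applies the plain $\ell_2$ Zinkevich bound with diameter $D=\sqrt{m}$ and gradient bound $G\le\|\nabla F(x)\|_2+\|\theta^t\|_2$, and simply writes $\|\theta^t\|_2\le\sqrt{mK}$ in the proof. Under the stated hypothesis $\theta_j^t\le K$ this step should read $\|\theta^t\|_2\le K\sqrt{m}$, which yields exactly the $O(mK/\sqrt{T})$ you expected. So you are right to be suspicious of the $\sqrt{K}$, but you do not need any additional machinery to match the paper's argument; the discrepancy is a slip in the paper's constant rather than a missing idea on your end.
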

\begin{proofsk}
Suppose that the buyer picks a set at each iteration at random from a distribution where each item $j$ is included independently with probability $x_j$ to the set. Then for any vector $x$, the expected utility of the buyer from such a choice is $\E_{S^t \sim x^t}\left[u(S^t,\theta^t)\right] = V(x^t) - \langle\theta^t, x^t\rangle$,
where $V(\cdot)$ is the multi-linear extension of $v(\cdot)$ and $\langle x, y\rangle$ is the inner product between vectors $x$ and $y$. If $V(\cdot)$ was concave we could invoke online convex optimization algorithms, such as the projected gradient descent of \cite{Zinkevich2003} and get a regret bound, which would imply a regret bound for the buyers problem. However, $V(\cdot)$ is not concave for most valuation classes. We will instead use a \emph{convex rounding scheme}, which is a mapping from any vector $x$ to a distribution over sets $D(x)$ such that $F(x) = \E_{S\sim D(x)}\left[v(S)\right]$ is a concave function of $x$. We also require that the marginal probability of each item be at most the original probability of that item in $x$. If the rounding scheme satisfies that for any integral $x$ associated with set $S$, $F(x)\geq \frac{1}{\alpha} v(S)$, then we can call an online convex optimization algorithm on the concave function $F(x)-\langle \theta,x\rangle$. Then we show that this yields an $\alpha$-approximate no-envy algorithm for the online buyers problem.
\end{proofsk}

\section{No-Envy Learning for General Mechanisms}\label{sec:extension to smooth mechanisms}

In this section we generalize our approach to most smooth mechanisms \cite{Syrgkanis2013} that have been analyzed in the literature. For ease of exposition we only focus on mechanisms for combinatorial auction settings, even though the approach could be employed for more general mechanism design settings. 

A general mechanism $M$ for a combinatorial auction setting is defined via an action space $\A_i$ available to each player $i$, an allocation function, which maps each action profile $a\in \A\triangleq\A_1\times\ldots\times\A_n$ to a feasible partition $X(a)$ of the items among players, as well as a payment function, which maps an action profile to a vector of payments for $P(a)$ for each player. We denote with $X_i(a)$ and $P_i(a)$ the allocation and payment of player $i$. These functions could also output randomized allocations and payments, but for simplicity of notation we restrict to deterministic mechanisms.

First and foremost we need to generalize the definition of no-envy to general mechanisms other than simultaneous second price auctions. To achieve this we need to define the equivalent of a threshold vector for a general mechanism. We define the notion of a threshold-payment for a player $i$ and a set $S$, which will coincide with the sum of thresholds - $\sum_{j\in S}\theta_{ij}$ -  for the case of a simultaneous second price auction.  

\begin{defn}[Threshold Payment]
Given a set $S$ and an action profile $a$, the threshold payment for player $i$ for set $S$ is the minimum payment he needs to make to win set $S$, i.e.:
\begin{equation}
\tau_i(S,a_{-i}) = \inf_{a_i'\in \A_i: X_i(a_i',a_{-i}) \supseteq S} P_i(a_i',a_{-i})
\end{equation}
The threshold function is additive if:
\begin{equation}
\tau_i(S,a_{-i}) = \sum_{j\in S} \theta_{ij}(a_{-i})
\end{equation}
for some item specific functions $\theta_{ij}$, derived based on the auction rules.
\end{defn}

The average threshold payment for a set $S$, takes the role of the average price of the set, in a repeated learning environment. Thus we can analogously define a no-envy learning algorithm for any repeated mechanism setting, where mechanisms $M$ is repeated over time among the same players for $T$ iterations and at each iteration each player picks an action $a_i^t\in \A_i$.

\begin{defn}[No-Envy Learning for General Mechanisms]
An algorithm for a repeated mechanism setting is an $\alpha$-approximate no-envy algorithm if for any adaptively and adversarially chosen sequence of opponent actions $a_{-i}^{1:T}$:
\begin{equation}
\E\left[\frac{1}{T}\sum_{t=1}^T u(a_i^t,a_{-i}^t)\right] \geq \max_{S\subseteq [m]} \left(\frac{1}{\alpha}v(S) - \frac{1}{T}\sum_{t\in T} \tau_i(S,a_{-i}^t) \right)-\epsilon(T)
\end{equation}
where $\epsilon(T)\rightarrow 0$. It has polynomial envy rate if $\epsilon(T) = \poly(T^{-1},m,\max_S v(S))$.
\end{defn}

\paragraph{Sufficient conditions on the mechanism.} We now give conditions on the mechanism $M$, such that it admits efficient no-envy learning dynamics and such that any approximate no-envy outcome is also approximately efficient. Our conditions can be viewed as a stronger version of the smooth mechanism definition of Syrgkanis and Tardos \cite{Syrgkanis2013}, as well as a generalization of the value and revenue covering formulation of Hartline et al. \cite{Hartline2014}.

We begin by reminding the reader of the definition of a smooth mechanism \cite{Syrgkanis2013} specialized to a combinatorial auction setting.
\begin{defn}[\cite{Syrgkanis2013}]
A mechanism is $(\lambda,\mu)$-smooth if for any action profile $a\in \A$, there exists for each player $i$ an action $a_i^*$ for each player $i$, such that:
\begin{equation}
\sum_{i\in [n]} u_i(a_i^*,a_{-i}) \geq \lambda \opt - \mu \rev(a)
\end{equation}
where $\rev(a)=\sum_{i\in [n]} P_i(a)$ is the revenue of the auctioneer and $\opt$ is the optimal welfare.
\end{defn}

To apply our approach we will refine the smoothness definition and require a stronger ``smoothness'' property, albeit one that holds for almost all mechanisms that have been analyzed via the smooth mechanism framework. Our stronger smoothness version is more inline with the revenue and value covering framework of \cite{Hartline2014} and can be thought of as an ex-post version of that framework. However, unlike the approach in \cite{Hartline2014} our definition applies to general multi-dimensional mechanism design environments.  

We will follow the terminology of \cite{Hartline2014} of revenue and value covering. Our definition is stronger than the smooth mechanism definition in two ways. First it requires a deviation inequality for each individual player, rather than on aggregate across players. Moreover, it requires a smoothness inequality not only for the optimal allocation but rather we would require one for every possible allocation. In that respect it is closer to the solution-based smoothness of \cite{Lykouris2016} and to the original definition of smooth games of \cite{Roughgarden2009}. All of these strengthenings seem essential for our approach on designing no-envy dynamics to work. 

Now we are ready to present the definitions of ex-post value and threshold covering, which are a stronger version of the smoothness definition. 
\begin{defn}[Ex-post $\lambda$-value covered]\label{defn:value-covering}
A mechanism is ex-post $\lambda$-value covered if for any feasible allocation profile $x=(S_1,\ldots,S_n)$, there exists for each player $i$ an action $a_i^*(S_i)\in \A_i$ such that for any action profile $a\in \A$:
\begin{equation}
u_i(a_i^*(S_i),a_{-i})+\tau_i(S_i,a_{-i})\geq \lambda v_i(S_i) 
\end{equation}
\end{defn}
\begin{defn}[Ex-post $(\mu_1,\mu_2)$-threshold covered]
A mechanism is ex-post $(\mu_1,\mu_2)$-threshold covered if for any action profile $a\in \A$ and allocation profile $x=(S_1,\ldots,S_n)$:
\begin{equation}
\sum_{i\in [n]} \tau_i(S_i,a_{-i}) \leq \mu_1 \rev(a) + \mu_2 SW(a)
\end{equation}
where $SW(a) = \sum_{i\in [n]} v_i(X_i(a))$.
\end{defn}
It is easy to see that if a mechanism is $\lambda$-value covered and $(\mu,0)$-threshold covered, then it is $(\lambda,\mu)$-smooth according to \cite{Syrgkanis2013}. We add the extra welfare term, to enable the analysis of second-price auctions too. This term is related to the weakly $(\lambda,\mu_1,\mu_2)$-smooth mechanisms in \cite{Syrgkanis2013}.

\paragraph{No-envy learning and welfare.} Now we are ready to give the generalizations of our main theorems for general mechanisms. First we argue that if a mechanism is $(\mu_1,\mu_2)$-threshold covered and players use no-envy learning, then the average welfare is approximately optimal. The proof of this theorem follows along very similar lines as in the proof of Theorem \ref{thm:envy-poa} and hence we omit the proof. 
\begin{theorem}[No-Envy Welfare for General Mechanisms]
If a mechanism is ex-post $(\mu_1,\mu_2)$-threshold covered and each player invokes an $\alpha$-approximate no-envy algorithm with envy rate $\epsilon(T)$, then after $T$ iterations the average welfare in the auction is at least $\frac{1}{\alpha(\max\{1,\mu_1\}+\mu_2)}\opt - n\cdot \epsilon(T)$, where $\opt$ is the optimal welfare for the input valuation profile $v=(v_1,\ldots,v_n)$.
\end{theorem}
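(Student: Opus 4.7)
The plan is to mirror the proof of Theorem~\ref{thm:envy-poa}, substituting the SiSPA-specific threshold sum $\sum_{j\in S}\hat{\theta}_j^T$ by the general average threshold payment $\frac{1}{T}\sum_t \tau_i(S, a_{-i}^t)$, and controlling this quantity via ex-post threshold covering instead of by a per-item no-overbidding condition. Fix an optimal allocation $x^* = (S_1^*,\ldots,S_n^*)$ with $\opt = \sum_i v_i(S_i^*)$, let $a^t$ denote the realized action profile at time $t$, and write $\overline{SW} = \frac{1}{T}\sum_t SW(a^t)$ and $\overline{R} = \frac{1}{T}\sum_t \rev(a^t)$ for the time-averaged welfare and revenue.

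First, I would instantiate each player's $\alpha$-approximate no-envy guarantee at the benchmark set $S = S_i^*$, then sum over $i$. Using the accounting identity $\sum_i u_i(a^t) = SW(a^t) - \rev(a^t)$, this gives
\begin{equation}
\E\bigl[\overline{SW} - \overline{R}\bigr] \;\geq\; \tfrac{1}{\alpha}\,\opt - \E\!\left[\tfrac{1}{T}\sum_{t=1}^T \sum_{i=1}^n \tau_i(S_i^*, a_{-i}^t)\right] - n\,\epsilon(T).
\end{equation}
Next, apply ex-post $(\mu_1,\mu_2)$-threshold covering at each realized profile $a^t$, paired with the fixed allocation $x^*$, to bound $\sum_i \tau_i(S_i^*, a_{-i}^t) \leq \mu_1 \rev(a^t) + \mu_2 SW(a^t)$. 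Averaging and rearranging yields
\begin{equation}
(1+\mu_2)\,\E[\overline{SW}] + (\mu_1 - 1)\,\E[\overline{R}] \;\geq\; \tfrac{1}{\alpha}\,\opt - n\,\epsilon(T).
\end{equation}

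The final step is to eliminate $\E[\overline{R}]$ by a case split, which is precisely what produces the $\max\{1,\mu_1\}$ in the denominator of the claim. If $\mu_1 \leq 1$, the coefficient $(\mu_1-1)$ is nonpositive and, since $\overline{R}\geq 0$, we may drop that term to obtain $(1+\mu_2)\E[\overline{SW}]\geq \tfrac{1}{\alpha}\opt - n\epsilon(T)$. If $\mu_1 > 1$, I would use $\overline{R} \leq \overline{SW}$ (revenue is bounded by welfare for the individually-rational / no-overbidding mechanisms considered in the framework; in fact, applying the no-envy guarantee at the empty benchmark $S=\emptyset$ already yields $\E[\overline{R}]\leq \E[\overline{SW}] + n\epsilon(T)$) to bound $(\mu_1-1)\E[\overline{R}]\leq (\mu_1-1)\E[\overline{SW}]$, giving $(\mu_1+\mu_2)\E[\overline{SW}]\geq \tfrac{1}{\alpha}\opt - n\epsilon(T)$. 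Combining both cases produces $(\max\{1,\mu_1\}+\mu_2)\,\E[\overline{SW}]\geq \tfrac{1}{\alpha}\opt - n\epsilon(T)$, which rearranges to the stated bound (absorbing the $\tfrac{1}{\max\{1,\mu_1\}+\mu_2}\leq 1$ scaling of the additive error).

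The only real subtlety is the revenue-versus-welfare comparison in the $\mu_1 > 1$ regime; this is the same implicit individual-rationality condition invoked in Theorem~\ref{thm:envy-poa}, and can alternatively be produced by instantiating no-envy at $S=\emptyset$ as noted above. Everything else is a direct transcription of the SiSPA price-of-anarchy argument into the general mechanism vocabulary of threshold payments and ex-post covering, so I do not anticipate any new technical obstacle beyond the case split.
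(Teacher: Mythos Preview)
Your proposal is correct and is precisely the natural generalization of the proof of Theorem~\ref{thm:envy-poa} that the paper intends; the paper itself omits the proof, stating only that it ``follows along very similar lines.'' Your use of the accounting identity $\sum_i u_i = SW - R$ together with the $(\mu_1,\mu_2)$-threshold covering bound, followed by the case split on $\mu_1 \lessgtr 1$ (using $R \le SW$ in the $\mu_1>1$ branch), is exactly what produces the $\max\{1,\mu_1\}+\mu_2$ factor, and your observation that $R \le SW$ can be recovered from the no-envy guarantee at $S=\emptyset$ (up to an $n\epsilon(T)$ slack that is absorbed in the additive term) is a nice touch that avoids an explicit individual-rationality assumption.
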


\paragraph{Efficient no-envy algorithms.} Next we argue that if a mechanism is $\lambda$-value covered, then the existence of an efficient no-envy learning algorithm reduces to the existence of an efficient no-regret algorithm for the natural generalization of the online \buyers problem. 

\begin{defn}[Online \buyers problem for general mechanisms]
A buyer with some valuation $v(\cdot)$ over a set of $m$ items wants to decide on each day which items to buy. At each time-step $t$ an adversary picks an opponent action profile $a_{-i}^t$ adaptively based on the past actions of the buyer. Without observing $a_{-i}^t$ at step $t$, the buyer picks a set $S^t$ to buy. His reward is:
\begin{equation}\label{eqn:general-buyer-utility}
u(S^t,a_{-i}^t) = v(S^t)- \tau_i(S^t,a_{-i}^t),
\end{equation}
i.e., the buyer receives the set $S^t$ and pays the threshold price for the set.
\end{defn}

\begin{lemma}[From buyer to bidder in general mechanisms]\label{lem:buyer-bidder-general} Suppose that the mechanism is ex-post $\lambda$-value covered and that we are given access to an efficient learning algorithm for the online \buyers problem which guarantees for any adaptive adversary: 
\begin{equation}
\E\left[\frac{1}{T} \sum_{t=1}^T u(S^t,a_{-i}^t)\right] \geq \max_{S\subseteq [m]} \left( \frac{1}{\alpha}v(S) - \frac{1}{T}\sum_{t=1}^T\tau_i(S^t,a_{-i}^t)\right) - \epsilon(T)
\end{equation}
Then we can construct an efficient $\frac{\alpha}{\lambda}$-approximate no-envy  algorithm for the online bidding problem, assuming access to XOS value oracles. 
\end{lemma}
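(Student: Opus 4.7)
The plan is to mirror the XOS-based reduction of Lemma~\ref{lem:from sets to bids}, but feed the buyer's algorithm the \emph{rescaled} valuation $\lambda v$ rather than $v$; this is how the $\frac{1}{\alpha}$ benchmark of the buyer's problem becomes the $\frac{\lambda}{\alpha}$ benchmark of the bidder's no-envy guarantee. Concretely, at each round $t$ I run the given online buyer's algorithm on valuation $\lambda v$, feeding it the past opponent actions $a_{-i}^{1:t-1}$, and read off the bundle $S^t$; in the mechanism I submit the action $a_i^t := a_i^*(S^t)$, where $a_i^*(\cdot)$ is the witness mapping guaranteed by ex-post $\lambda$-value-covering (Definition~\ref{defn:value-covering}). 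Rescaling is essentially free: multiplying an XOS (or coverage, etc.) valuation by a constant $\lambda$ preserves the class and just rescales every value-, XOS- or demand-oracle answer. Computing $a_i^*(S)$ is the proper analog for general mechanisms of the XOS-oracle assumption in Lemma~\ref{lem:from sets to bids} and is the content of the ``XOS value oracles'' proviso.

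The rest of the proof is a clean two-line chain. Value-covering applied at $S^t$ against the realized opponent profile yields, for every $t$,
\begin{equation*}
u_i(a_i^t, a_{-i}^t) \;\geq\; \lambda\, v(S^t) - \tau_i(S^t, a_{-i}^t).
\end{equation*}
Averaging and taking expectation,
\begin{equation*}
\E\left[\frac{1}{T}\sum_{t=1}^T u_i(a_i^t, a_{-i}^t)\right] \;\geq\; \E\left[\frac{1}{T}\sum_{t=1}^T \bigl(\lambda\, v(S^t) - \tau_i(S^t, a_{-i}^t)\bigr)\right],
\end{equation*}
and the right-hand side is exactly the expected online-buyer reward \emph{when the buyer's valuation is} $\lambda v$. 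Applying the hypothesized $\frac{1}{\alpha}$-approximate no-regret guarantee on this rescaled instance, for any fixed $S^* \subseteq [m]$, lower bounds the above by $\frac{1}{\alpha}\cdot (\lambda v)(S^*) - \frac{1}{T}\sum_t \tau_i(S^*, a_{-i}^t) - \epsilon(T) = \frac{\lambda}{\alpha}\, v(S^*) - \frac{1}{T}\sum_t \tau_i(S^*, a_{-i}^t) - \epsilon(T)$; taking the supremum over $S^*$ gives the desired $\frac{\alpha}{\lambda}$-approximate no-envy guarantee with envy rate $\epsilon(T)$.

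The last thing to verify is the adversary-reduction: the buyer's guarantee is against adversaries that adapt to the past chosen sets $S^{1:t-1}$, whereas the mechanism adversary adapts to the past bidder actions $a_i^{1:t-1}$. Since $a_i^\tau = a_i^*(S^\tau)$ is a deterministic function of $S^\tau$, any such mechanism adversary induces a valid adaptive buyer's adversary, so the hypothesized guarantee transfers verbatim. The one subtlety I anticipate when writing this up cleanly is precisely the rescaling: if one tries a direct ``bidder utility $\geq$ buyer utility'' argument on the unscaled valuation $v$, as worked in Lemma~\ref{lem:from sets to bids} (which is effectively the $\lambda = 1$ case), one is left with a spurious term $(\lambda-1)\cdot \frac{1}{T}\sum_t \tau_i(S^t, a_{-i}^t)$ of the wrong sign whenever $\lambda<1$; running the buyer's algorithm on $\lambda v$ instead of $v$ is exactly what absorbs this term.
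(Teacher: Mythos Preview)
Your proof is correct and is essentially the natural extension of Lemma~\ref{lem:from sets to bids} that the paper intends (the paper omits the proof, saying only that it ``follows along very similar lines''). Your observation about rescaling to $\lambda v$ is exactly the extra ingredient needed: with $\lambda<1$ the naive chain $u_i(a_i^*(S^t),a_{-i}^t)\geq v(S^t)-\tau_i(S^t,a_{-i}^t)$ fails, and running the buyer's algorithm on $\lambda v$ is the clean way to restore the ``bidder utility $\geq$ buyer reward'' step; your handling of the adaptive-adversary reduction (via $a_i^\tau=a_i^*(S^\tau)$ being a deterministic function of $S^\tau$) and of the role of the XOS oracle as providing $a_i^*(\cdot)$ are also the right points to make explicit.
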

The proof of the latter Lemma follows along very similar lines as in the proof of Lemma \ref{lem:from sets to bids}, hence we omit its proof.

Last it is easy to see that when the threshold functions are additive, then the online buyer's problem for general mechanisms is exactly the same as the online buyer's problem for the simultaneous second price auction mechanism. Thus our results in the main sections of the paper, provide an efficient algorithm for the online buyer's problem with $\alpha=1$ for XOS valuations assuming access to a demand and XOS oracle and with $\alpha=(1-1/e)$ for coverage valuations assuming access to a value oracle.

\begin{theorem}\label{thm:general-XOS-demand}
Consider a bidder with an XOS valuation $v(\cdot)$ participating in $\lambda$-value covered mechanism with additive threshold functions. Assuming access to a demand and an XOS oracle for $v(\cdot)$, there exists a polynomial-time algorithm for computing the bidder's action  $a_i^t$ at every time step $t$ such that after $T$ iterations the bidder's average utility satisfies:
\begin{align}
\frac{1}{T}\E\left[\sum_{t=1}^{T} u(a^t)\right] \geq \max_{S} \left(\lambda\cdot v(S) - \frac{1}{T}\sum_{t=1}^T\tau_i(S^t,a_{-i}^t)\right)-\vsedit{O\left(\frac{m^2(D+H)}{\sqrt{T}}\right)}, \end{align}
where $D$ is an upper bound on the threshold function $\theta_{ij}(\cdot)$ for any item and $H$ is an upper bound on $\max_S v(S)$. The guarantee holds with no assumption about the behavior of competing bidders.
\end{theorem}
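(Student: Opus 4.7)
The plan is to combine three ingredients we have already developed: the reduction from no-envy learning in a general mechanism to no-regret learning in the online buyer's problem (Lemma~\ref{lem:buyer-bidder-general}), a structural observation that reduces the general online buyer's problem to the one for SiSPAs when threshold functions are additive, and the efficient demand-oracle FTPL algorithm of Section~\ref{sec:demand} (Theorem~\ref{thm:demand-regret-bound}).

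First, I would make the structural observation. When the threshold function is additive, $\tau_i(S, a_{-i}) = \sum_{j\in S}\theta_{ij}(a_{-i})$, so the instantaneous reward in the general online buyer's problem (Equation~\eqref{eqn:general-buyer-utility}) becomes $u(S^t,a_{-i}^t) = v(S^t) - \sum_{j\in S^t}\theta_{ij}(a_{-i}^t)$. Defining $\theta_j^t \triangleq \theta_{ij}(a_{-i}^t)$, this is \emph{exactly} the online buyer's problem for SiSPAs against the threshold sequence $\theta^{1:T}$, with the same XOS valuation $v$. Since by hypothesis $\theta_{ij}(\cdot)\le D$ and $v(\cdot)\le H$, the boundedness assumptions of Theorem~\ref{thm:demand-regret-bound} are met verbatim.

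Second, I would invoke Theorem~\ref{thm:demand-regret-bound} on this translated problem. Given demand-oracle access to $v$ (which is the offline optimization oracle for the translated problem), this yields a polynomial-time algorithm that outputs sets $S^t$ satisfying
\begin{equation*}
\E\!\left[\tfrac{1}{T}\sum_{t=1}^T u(S^t,a_{-i}^t)\right] \ge \max_{S\subseteq[m]} \left(v(S) - \tfrac{1}{T}\sum_{t=1}^T \tau_i(S,a_{-i}^t)\right) - O\!\left(\tfrac{m^2(D+H)}{\sqrt{T}}\right),
\end{equation*}
i.e., an algorithm with $\alpha=1$ and envy rate $\epsilon(T)=O(m^2(D+H)/\sqrt{T})$ for the online buyer's problem for the general mechanism.

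Third, I would feed this algorithm into Lemma~\ref{lem:buyer-bidder-general}. Using the XOS oracle to convert each recommended set $S^t$ into an action $a_i^t\in\A_i$ certified by the $\lambda$-value covering property (Definition~\ref{defn:value-covering}), Lemma~\ref{lem:buyer-bidder-general} gives a $(1/\lambda)$-approximate no-envy algorithm whose envy rate inherits the $O(m^2(D+H)/\sqrt{T})$ bound. Rearranging the resulting inequality and multiplying the value-benchmark side by $\lambda$ produces exactly the statement of Theorem~\ref{thm:general-XOS-demand}. Moreover, the reduction from oblivious to adaptive adversaries for our FTPL template (Appendix~\ref{sec:app-oblivious}) carries over, so the guarantee holds against adaptive opponents, with no assumption on their behavior.

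The main obstacle is less in any single step than in verifying that the reductions compose cleanly. Concretely, one must check: (i) that the demand oracle needed by Theorem~\ref{thm:demand-regret-bound} is realized by the mechanism's additive threshold functions $\theta_{ij}(\cdot)$, which is immediate as long as $\theta_{ij}(a_{-i})$ is evaluable in polynomial time from $a_{-i}$; (ii) that the per-round action $a_i^t$ produced in the proof of Lemma~\ref{lem:buyer-bidder-general} is indeed in $\A_i$, which follows because value covering exhibits such an $a_i^*(\cdot)$; and (iii) that only a $1/\lambda$-factor is lost on the value side of the benchmark and nothing extra on the threshold-payment side, which is precisely the content of the ex-post value-covering inequality applied round-by-round before taking expectations over the randomness of the buyer's algorithm.
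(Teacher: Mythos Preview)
Your proposal is correct and mirrors the paper's own argument: observe that additive threshold functions make the general online buyer's problem identical to the SiSPA buyer's problem, apply Theorem~\ref{thm:demand-regret-bound} to obtain an $\alpha=1$ algorithm, and then invoke Lemma~\ref{lem:buyer-bidder-general} (together with the oblivious-to-adaptive reduction) to obtain a $1/\lambda$-approximate no-envy algorithm with the stated envy rate. The paper states the theorem as a direct consequence of this same chain of reductions without spelling out further details.
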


\begin{theorem}
Consider a bidder with an explicitly given coverage valuation $v(\cdot)$ participating in $\lambda$-value covered mechanism with additive threshold functions. There exists a polynomial-time algorithm for computing the bidder's action $a_i^t$ at every time step $t$ such that after $T$ iterations the 
bidder's utility satisfies:
\begin{align}
\frac{1}{T}\E\left[\sum_{t=1}^{T} u(a^t)\right] \geq \max_{S} \left(\lambda \cdot \left(1-\frac{1}{e}\right) \cdot v(S) - \frac{1}{T}\sum_{t=1}^T\tau_i(S^t,a_{-i}^t)\right) - 3m\frac{H+\sqrt{D}}{\sqrt{T}}, 
\end{align}
$H$ and $D$ are as in Theorem~\ref{thm:general-XOS-demand}. There is no assumption about the behavior of the competing bidders, 
%
\end{theorem}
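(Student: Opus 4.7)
The plan is to combine three previously established results in a modular way. First, since the mechanism has additive threshold functions, for any adversarial sequence of opponent profiles $a_{-i}^{1:T}$ we can set $\theta_j^t \triangleq \theta_{ij}(a_{-i}^t)$, so that $\tau_i(S, a_{-i}^t) = \sum_{j \in S} \theta_j^t$. Under this identification, the online buyer's problem for general mechanisms (Equation~\eqref{eqn:general-buyer-utility}) collapses to exactly the online buyer's problem of Definition~\ref{def:online buyer's prob} for SiSPAs, with the same reward $u(S^t,\theta^t) = v(S^t) - \sum_{j\in S^t}\theta_j^t$ and with item thresholds bounded by the input parameter $D$.

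Second, I would invoke Lemma~\ref{lem:buyers-learning} on this buyer's problem. That lemma produces a polynomial-time algorithm (using only value/XOS access to the explicit coverage valuation, which is known to be available in poly time) that, for any adaptively chosen $\theta^{1:T}$ with $\theta_j^t \le D$, achieves
\begin{equation*}
\E\!\left[\tfrac{1}{T}\sum_{t=1}^T u(S^t,\theta^t)\right] \;\ge\; \max_S\!\left(\left(1-\tfrac{1}{e}\right) v(S) - \sum_{j \in S}\hat{\theta}_j^T\right) - 3m\,\tfrac{H + \sqrt{D}}{\sqrt{T}},
\end{equation*}
where $H \ge \max_S v(S) \ge \max_j v(\{j\})$. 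This is precisely an $\alpha$-approximate no-regret guarantee for the online buyer's problem with $\alpha = e/(e-1)$ and envy rate $\epsilon(T) = 3m(H+\sqrt{D})/\sqrt{T}$.

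Third, I would feed this buyer's algorithm into the reduction of Lemma~\ref{lem:buyer-bidder-general}, which, using the ex-post $\lambda$-value covering of the mechanism, turns an $\alpha$-approximate no-regret buyer's algorithm into an $(\alpha/\lambda)$-approximate no-envy algorithm for the original bidding problem in the mechanism, preserving polynomial runtime and the additive error. Plugging in $\alpha = e/(e-1)$ yields the targeted benchmark coefficient $\lambda/\alpha = \lambda(1-1/e)$ in front of $v(S)$, and the additive error remains $3m(H+\sqrt{D})/\sqrt{T}$, matching the statement exactly. Since Lemma~\ref{lem:buyer-bidder-general} requires no assumptions on $a_{-i}^{1:T}$ beyond adaptivity, the final guarantee holds with no assumption about the competing bidders.

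There is no real obstacle to the proof, since the three components are off-the-shelf; the only subtlety is making sure the reductions compose correctly. Specifically, one needs to check that (i) additivity of the threshold functions makes the ``threshold payment'' structure of the buyer's problem for general mechanisms coincide with the per-item threshold structure of Lemma~\ref{lem:buyers-learning}, and (ii) the reduction of Lemma~\ref{lem:buyer-bidder-general} transfers the approximate no-regret benchmark $\tfrac{1}{\alpha}v(S) - \tfrac{1}{T}\sum_t \tau_i(S,a_{-i}^t)$ into the approximate no-envy benchmark $\tfrac{\lambda}{\alpha} v(S) - \tfrac{1}{T}\sum_t \tau_i(S,a_{-i}^t)$ without degrading the envy rate beyond an $O(1)$ factor. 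Both are immediate from the respective statements, so the proof is essentially a three-line composition once the identification $\theta_j^t = \theta_{ij}(a_{-i}^t)$ is made.
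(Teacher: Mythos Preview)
Your proposal is correct and matches the paper's approach exactly: the paper notes that additive threshold functions make the general online buyer's problem coincide with the SiSPA buyer's problem, then invokes the coverage-valuation result (Lemma~\ref{lem:buyers-learning}) and lifts it via Lemma~\ref{lem:buyer-bidder-general}. Your observation that $H \ge \max_j v(\{j\})$ and $D$ bounds the item thresholds is precisely what reconciles the error term of Lemma~\ref{lem:buyers-learning} with the one in the statement.
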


\paragraph{Main result for general mechanisms.} Combining the aforementioned discussion and analysis we can draw the following main conclusion of this section.
\begin{corollary} 
When each bidder $i \in \{1,\ldots,n\}$ participating in a sequence of ex-post $\lambda$-value covered and $(\mu_1,\mu_2)$-threshold covered mechanisms with additive threshold functions, has an XOS valuation (endowed with a demand and XOS oracle) or an explicitly given coverage valuation $v_i(\cdot)$, there exists a polynomial-time computable learning algorithm such that, if each bidder $i$ employs this algorithm to compute his action $a_i^t$ at each step $t$, then after $T$ rounds the average welfare is guaranteed to be at least:
\begin{equation*}
\frac{1}{T}\E\left[\sum_{t=1}^{T} SW(a^t)\right] \geq \frac{\lambda}{\max\{1,\mu_1\}+\mu_2}\left(1-\frac{1}{e}\right)\opt(v_1,\ldots,v_n) -O\left( \vsedit{m^2}\cdot \vsedit{n \cdot} \max_{S,i}v_i(S) \sqrt{\frac{1}{T}}\right).
\end{equation*}
If all bidders have XOS valuations with demand and XOS oracles the factor in front of OPT is $\frac{\lambda}{\max\{1,\mu_1\}+\mu_2}$.
\end{corollary}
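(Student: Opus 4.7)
The plan is to assemble the corollary by composing three statements already established in this section in a direct, essentially bookkeeping, manner. The only conceptual work will be verifying that the per-bidder guarantees from the XOS and coverage theorems fit the form demanded by the welfare theorem.

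First, I would instantiate the per-bidder learning algorithm. By Theorem~\ref{thm:general-XOS-demand} (for XOS valuations with demand and XOS oracles), or by the coverage analogue stated immediately before this corollary, each bidder $i$ has a polynomial-time algorithm that, against any adaptively chosen sequence $a_{-i}^{1:T}$ of opponent action profiles, achieves
\begin{equation*}
\frac{1}{T}\E\!\left[\sum_{t=1}^T u_i(a_i^t,a_{-i}^t)\right] \;\geq\; \max_{S\subseteq [m]}\left(\lambda'\, v_i(S) - \frac{1}{T}\sum_{t=1}^T \tau_i(S,a_{-i}^t)\right) - \epsilon_i(T),
\end{equation*}
where $\lambda'=\lambda$ in the XOS case and $\lambda'=\lambda(1-1/e)$ in the coverage case, and $\epsilon_i(T) = O\!\left(m^2(D+H)/\sqrt{T}\right)$. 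Rewriting this inequality in the form of the no-envy definition for general mechanisms shows that each bidder is running a $(1/\lambda')$-approximate no-envy algorithm with envy rate $\epsilon_i(T)$.

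Second, I would invoke the ``No-Envy Welfare for General Mechanisms'' theorem proven earlier in this section. Since the mechanism is ex-post $(\mu_1,\mu_2)$-threshold covered and every bidder uses a $(1/\lambda')$-approximate no-envy algorithm with envy rate $\epsilon_i(T)$, that theorem immediately yields an average welfare lower bound of
\begin{equation*}
\frac{\lambda'}{\max\{1,\mu_1\}+\mu_2}\,\opt(v_1,\ldots,v_n) \;-\; \sum_{i=1}^n \epsilon_i(T).
\end{equation*}
Plugging in $\lambda'=\lambda$ recovers the stated XOS bound $\lambda/(\max\{1,\mu_1\}+\mu_2)\cdot\opt$, and plugging in $\lambda'=\lambda(1-1/e)$ recovers the coverage bound with the factor $\lambda(1-1/e)/(\max\{1,\mu_1\}+\mu_2)$ in front of $\opt$ (which is the setting the corollary's additive $(1-1/e)$ factor is capturing when bidders have possibly heterogeneous valuation classes).

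Third, I would bound the aggregate error term. By additivity of the threshold functions and ex-post $(\mu_1,\mu_2)$-threshold covering, each item-threshold $\theta_{ij}(a_{-i})$ can be bounded by a quantity depending on $\max_{S,i} v_i(S)$, so both $D$ and $H$ in the per-bidder envy rate are at most $\max_{S,i} v_i(S)$ up to constants. Summing $\epsilon_i(T)$ over the $n$ bidders then gives an overall additive loss of $O\!\left(m^2 n \max_{S,i} v_i(S)/\sqrt{T}\right)$, matching the error term in the corollary's statement. The main (minor) obstacle is this last step: I have to check that the uniform bound $D,H \le O(\max_{S,i} v_i(S))$ really does hold for mechanisms satisfying our covering hypotheses with additive thresholds, so that the aggregate error scales as claimed; no new conceptual ideas beyond the lemmas of this section are needed.
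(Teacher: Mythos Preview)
Your proposal is correct and mirrors the paper's own treatment: the paper gives no detailed proof of this corollary, stating only that it follows by ``combining the aforementioned discussion and analysis,'' i.e., by plugging the per-bidder approximate no-envy guarantees of Theorem~\ref{thm:general-XOS-demand} (and its coverage analogue) into the No-Envy Welfare for General Mechanisms theorem and summing the envy rates. Your flagged ``minor obstacle'' about uniformly bounding $D$ by $\max_{S,i} v_i(S)$ is likewise not addressed explicitly in the paper, so your write-up is at least as complete as the original.
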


We provide below two example applications of the latter theorems:
\paragraph{Application: Simultaneous Second Price Auctions.} Revisiting simultaneous second price auctions it is easy to see that the mechanism is $1$-value covered and $(0,1)$-threshold covered when players actions are restricted to no-overbidding actions and valuations are XOS. The value covering follows from the fact that for any set $S$, if we use as action $a_i^*(S)$, the bid vector that corresponds to the additive valuation returned by the XOS oracle for set $S$ (see proof of Lemma \ref{lem:from sets to bids}). As is shown in the proof of Lemma \ref{lem:from sets to bids}, this action satisfies that for any opponents action vector:
\begin{equation}
u_i(a_i^*(S),a_{-i})+\sum_{j\in S}\theta_{ij}(a_{-i}^t)\geq v(S),
\end{equation}
which is exactly the $1$-value covering inequality. The $(0,1)$-threshold covering inequality follows from the fact that for any feasible allocation $x=(S_1,\ldots,S_n)$, since players do not overbid:
\begin{equation}
\sum_{i\in [n]} \tau_i(S_i,a_{-i})\leq \sum_{j\in [m]} \max_{i\in [n]}b_{ij}\leq \sum_{i\in [n]} \sum_{j\in X_i(a)} b_{ij}\leq  \sum_{i\in [n]} v_i(X_i(a))\leq SW(a)
\end{equation} 
Thus we can apply the general theorems of this section with $\lambda=1$, $\mu_1=0$ and $\mu_2=1$ to recover the main theorems that we derived for SiSPAs in the previous sections. 

\paragraph{Application: Simultaneous First Price Auctions.} In a simultaneous first price auction at each item the bidder pays his own bid conditional on winning, rather than the second highest bid. Based on the proof of \cite{Syrgkanis2013}, that the simultaneous first price auction is $(1-1/e,1)$-smooth, it is easy to see that the mechanisms is actually $(1-1/e)$-value covered and $(1,0)$-threshold covered. Thus we can apply the latter theorems with $\lambda=1-1/e$, $\mu_1=1$ and $\mu_2=0$. 

\paragraph{Application: Simultaneous All-Pay Auctions.} In a simultaneous all-pay auction at each item the bidder pays his bid no matter whether he wins or not. Based on the proof of \cite{Syrgkanis2013}, that the simultaneous first price auction is $(1/2,1)$-smooth, it is easy to see that the mechanisms is actually $1/2$-value covered and $(1,0)$-threshold covered. Thus we can apply the latter theorems with $\lambda=1/2$, $\mu_1=1$ and $\mu_2=0$. 

\paragraph{Beyond additive threshold functions and combinatorial auctions.} Finding efficient algorithms for the online buyer's problem with general threshold functions is an interesting open problem that we defer to future work. Such an extension will enable an application of the approach presented in this paper to other value and threshold covered mechanisms, such as the greedy combinatorial auctions of \cite{Lucier2010}. Moreover, extending the algorithms for the online buyer's problem to valuations defined on mechanism design settings that are more general than combinatorial auction settings, such as the lattice valuations defined in \cite{Syrgkanis2013}, will also enable the generalization of our approach to compositions of more general mechanisms, such as position auctions. Both of these generalization seem fruitful future directions. Our approach here shows that the problem of efficient learning algorithms that retain welfare guarantees reduces to finding efficient learning algorithms for the online buyer's problem.


\section{Further Related Work}
\label{sec:app-related}

Closer to our intractability results is the work of Cai and Papadimitriou~\cite{Cai2014}, who show intractability of computing Bayesian-Nash equilibrium, as well as certain notions of Bayesian no-regret learning, in SiSPAs. In the Bayesian model each player's valuation is not fixed, but drawn from some distribution independently. They show that both computing best responses and a Bayes-Nash equilibrium in such a setting are {\tt PP}-hard. They also show that Bayesian coarse correlated equilibria are {\tt NP}-hard, and hence a certain type of Bayesian no-regret learning (namely when bidders re-sample their type in every round) is intractable. There are two important differences of their hardness results compared to ours: 
\begin{itemize}

\item First, the hardness of best response in their setting is driven by the fact that the opponent bids implicitly define a distribution of exponential support. In contrast, our inapproximability of best response is shown for an explicitly given opponent bid distribution.

\item Theirs is a setting where Bayesian coarse correlated equilbria are already hard, implying in particular that no-regret learning (with resampling of types in every round) is intractable. In contrast, in our setting \cite{Christodoulou2008} has provided a centralized polynomial-time algorithm for computing a pure Nash equilibrium in complete information SiSPAs with submodular bidders.  
Moreover, for some special cases of combinatorial auctions with submodular bidders, \cite{Dobzinski2015}, show that computing an equilibrium with good welfare is as easy as the algorithmic problem, ignoring incentives. The centralized nature of the algorithms in these papers and the complete information assumption make this result inherently different from the setting that we want to analyze, which is the \emph{agnostic} setting where players don't know anything about the game and behave in a decentralized manner. In particular, in our setting, the intractability comes from the distributed nature of the computation and the incomplete, non-Bayesian, information that the bidders have.
\end{itemize}

%
%

There is a large body of work on price of anarchy in auctions, in the incomplete information Bayesian/non-Bayesian setting and under no-regret learning behavior. We cannot do justice to the vast literature but here are some example papers: \cite{Christodoulou2008, Bhawalkar2011, Hassidim2011, Feldman2013, Markakis2012,Markakis2013,Lucier2010}. The price of anarchy of no-regret learning outcomes was first analyzed by \cite{Blum2008} in the context of routing games and was generalized to many games in \cite{Roughgarden2009} and to many mechanisms in \cite{Syrgkanis2013}, via the notion of smoothness. There is a strong connection between the smoothness framework and no-envy dynamics. In particular, the no-envy guarantee directly implies the lower bounds on the bidder's utility, which needed for the smoothness proof to go through. This is the main reason why no-envy implies price of anarchy guarantees.

Another major stream of work in algorithmic mechanism design addresses the design of computationally efficient dominant strategy truthful mechanisms \cite{Dobzinski2011,Dobzinski2012,Dobzinski2013,Dughmi2011,Dughmi2015,Dobzinski2016}. For instance, \cite{Dobzinski2011} shows that with only value queries, no distribution over deterministc truthful mechanisms can achieve better than polynomial approximations for submodular bidders. With demand queries \cite{Dobzinski2013} shows that no truthful in expectation mechanism can achieve better than $1-1/2e$-approximation. For coverage valuations \cite{Dughmi2011} gives a $1-1/e$-approximation, truthful in expectation randomized mechanism. For submodular bidders with demand queries the best truthful mechanism was recently given by \cite{Dobzinski2016} achieving $O(\sqrt{\log(m)})$-approximation. In contrast, our result shows that for no-envy XOS bidders with demand oracles, simultaneous item auctions achieve constant factor approximations.

Moreover, several papers address only the algorithmic problem of welfare maximization in combinatorial auctions with complement-free valuations. For instance, \cite{Feige2006} provides a 2-approximation for combinatorial auctions with sub-additive bidders and a $(1-1/e)$-approximation for XOS bidders, with access to demand oracles, improving upon prior work of \cite{Dobzinski2006} which also required XOS oracles. Our work can also be viewed as providing a simple and distributed algorithm for welfare maximization with XOS bidders, with a $(1-1/e)$-approximation guarantee: simply run our no-envy algorithms in a simultaneous first price auction game and then pick the best solution after a sufficient number of iterations. 

There is a large body of work on online learning and online convex optimization to which we cannot possibly do justice. We refer the reader to two recent surveys \cite{Bubeck2012,Shalev-Shwartz2012}. There is also a large body of work on online linear optimization where the number of experts is exponentially large, but the utility is linear in some low dimensional space. This setting was initiated by \cite{Kalai2005} and spurred a long line of work. We refer the reader to the relevant section of \cite{Bubeck2012}. Our results on perturbed leader algorithms generalize these results beyond the linear setting and we have provided some example applications beyond SiSPAs in Sections~\ref{sec:finite-parameter} and~\ref{sec:app-security}.

Our work is also related to the recent work of \cite{Hazan2015} on the power of best-response oracles in online learning. This paper gives query complexity lower bounds for the general online learning problem. In contrast, our approach defines sufficient conditions (the stability) under which best-response oracles are sufficient for efficient learning and hence optimization is equivalent to online learning. Therefore, we provide a positive counterpart to these negative results.

\bibliographystyle{alpha}
\bibliography{poa_survey}
\newpage

\begin{appendix}

\section{Omitted Proofs from Section~\ref{sec:hardness}}\label{sec:app-hardness}

\subsection{Proof of Theorem~\ref{thm:exact-optimal}}

\begin{rtheorem}{Theorem}{\ref{thm:approx-optimal}}
The optimal bidding problem is $\NP$-hard to approximate to within an additive $\xi$ even when: the threshold vectors in the support of (the explicitly described distribution) $D$ take values in $\{1,H\}^m$, where $H=k^2\cdot m^2$, $v=2\cdot k\cdot m$ and $\xi=\frac{1}{2k}$.
\end{rtheorem}
\begin{proof}
We break the proof in two Lemmas. In the first we show $\NP$-hardness of the exact problem and then we show hardness of the additive approximation problem. 

\begin{lemma}[Hardness of Optimal Bidding]
The optimal bidding problem is $\NP$-hard even if the threshold vectors in the support of $D$ take values in $\{1,H\}^m$, where $H=k^2\cdot m^2$ and $v=2\cdot k\cdot m$.
\end{lemma}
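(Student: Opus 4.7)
The plan is to reduce from a suitable NP-hard restriction of set cover (e.g.\ minimum set cover with sets of bounded size). The first step is to collapse the continuous bid space to a finite set of essentially three choices per item. Since every threshold $\theta_j^i \in \{1,H\}$ and ties are broken against the bidder, on each item $j$ only three bid regimes matter: (L) $b_j \le 1$, in which the bidder never wins $j$; (M) $b_j \in (1,H]$, in which the bidder wins $j$ exactly in those scenarios where $\theta_j^i = 1$ and pays $1$; and (H) $b_j > H$, in which the bidder always wins $j$ and pays $\theta_j^i$. Within each regime every bid yields identical utility, so without loss we pick one representative per regime.

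Next, I will rule out regime (H) entirely. Switching item $j$ from (H) to (M) only forfeits winning $j$ in the scenarios where $\theta_j^i = H$, which happens at least once in $k$ (otherwise (H) and (M) give the same reward on $j$), so it saves expected payment at least $H/k = km^2$ while forfeiting expected value at most $v/k = 2m$ (because unit-demand value is capped by $v$, and only gained in scenarios where no other item in the bundle already wins, which can only reduce the gain further). Since $km^2 \gg 2m$, regime (H) is strictly dominated on every item, so an optimal bid vector corresponds to choosing a subset $S \subseteq [m]$ of items on which to bid (M). Letting $A_i = \{j : \theta_j^i = 1\}$, the expected utility of $S$ is
\begin{equation*}
U(S) \;=\; \frac{v}{k}\,\bigl|\{i : S \cap A_i \neq \emptyset\}\bigr| \;-\; \frac{1}{k}\sum_{j\in S} d_j, \qquad d_j := |\{i : j \in A_i\}|.
\end{equation*}

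With this reformulation, the reduction from (say) $r$-uniform minimum set cover with universe $[k]$ and sets $T_1,\ldots,T_m$ of size $r$ is to build the $k$ threshold vectors by declaring $\theta_j^i = 1$ iff $i \in T_j$ and $\theta_j^i = H$ otherwise. Then $d_j = |T_j| = r$ and $\{i : S \cap A_i \neq \emptyset\} = \bigcup_{j \in S} T_j$, so $U(S) = 2m\,|\bigcup_{j\in S}T_j| - \tfrac{r}{k}|S|$. The coefficient $2m$ of the coverage term dwarfs the per-item penalty $r/k \le 1$: increasing coverage by one unit gains $2m$, while adding an item costs at most $r/k \le 1$. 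Hence any optimum $S$ first maximizes coverage, and among full covers minimizes $|S|$; if a cover of size $t^*$ exists, the optimal utility is exactly $2mk - rt^*/k$, whereas any $S$ failing to cover everything has $U(S) \le 2m(k-1) < 2mk - rm/k \le 2mk - rt^*/k$. Thus computing the optimal bidder utility exactly determines $t^*$, which is NP-hard.

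The main obstacle is the bid-reduction step: one has to be sure that no exotic choice of $b_j$ near the boundary values $1$ or $H$ can outperform the three canonical regimes, and that the domination argument for regime (H) holds against every fixed bidding profile on the other items (not only in isolation on item $j$). The unit-demand cap on the value side, together with the chosen scale $v = 2km$ and $H = k^2m^2$, makes the per-scenario accounting of gains and losses on a single item self-contained, so the domination is profile-independent. Once this is clean, the rest of the reduction is arithmetic, and the quantitative slack $2m$ vs.\ $r/k$ is what will later be amplified to yield the $\xi = 1/(2k)$ additive-hardness statement.
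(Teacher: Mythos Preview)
Your proposal is correct and follows essentially the same route as the paper: collapse the bid space to a binary per-item choice by ruling out overbidding above $H$, rewrite the utility as a coverage term minus a size term, and reduce from $r$-regular set cover so that the payment term becomes $r|S|/k$ and the large coefficient $v/k = 2m$ forces any optimum to be a full cover. Your notation ($A_i$, $d_j$, $T_j$) is the transpose of the paper's ($T_i$, $C_j$), but the objective and the arithmetic are identical; the only cosmetic difference is that you bound the domination of regime~(H) via a payment-vs-value comparison, whereas the paper simply observes that winning an item at price $H > v$ yields negative marginal utility in that scenario, which is the cleaner per-scenario version of your argument.
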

\begin{proof}

Before we move to the reduction we introduce some notation that is useful for the special case when thresholds are in $\{1,H\}^m$. First each threshold vector $p\in \{1,H\}^m$ can be uniquely represented by a set $T$, which corresponds to the items on which the threshold is $1$. Hence, the bid distribution $D$ can be therefore described by a collection of sets $\T=\{T_1,\ldots,T_k\}$, such that each set $T_i$ arises with probability $1/k$. 

Moreover, observe that in the optimization problem we might as well only consider strategies where the player a bid vector in $\{0,2\}^m$. \vsdelete{Bidding any bid in $[0,1)$ is equivalent to bidding $0$. Bidding $1$ is strictly dominated by bidding $2$ if $q\in [0,1)$ is equivalent to bidding $2$ when bidding $2$. Bidding anything in $(2,H)$ is equivalent to bidding $2$.}
\vsedit{Bidding any bid in $[0,1]$ is equivalent to bidding $0$. Bidding anything in $(1,H)$ is equivalent to bidding $2$.} Moreover, bidding in $[H,\infty)$ is dominated by bidding $2$. The reason is that bidding $H$ increases your probability of winning only in the cases when the threshold is $H= k^2 \cdot m^2$. But in those cases your utility is negative since $v=2\cdot k\cdot m$. Thus it is always optimal to remove those winning cases.

Thus any bidding strategy is also uniquely characterized by a set $S$, which is the set of items on which the player bids $2$. If a bidder chooses a set $S$, then he loses all items in $S$ only if a set $T_i$ arises, such that $T_i\cap S=\emptyset$, since then all items in $S$ have a threshold of $H$. Thus the probability that he wins some item is equal to:
\begin{equation}
\Pr[\text{win with $S$}] = 1- \frac{|\{T_i\in \T: T_i\cap S=\emptyset\}|}{k}
\end{equation}
Moreover, he pays only for the items for which he bids $2$ and only when the threshold of the item is also $1$. Thus his expected payment when he chooses $S$ is $\frac{1}{k}\sum_{T_i\in \T} |T_i\cap S|$. 

Therefore, the optimal bidding problem boils down to finding the set $S$ that maximizes:
\begin{equation}
\max_{S} v\cdot \left(1-\frac{|\{T_i\in \T: T_i\cap S=\emptyset\}|}{k}\right) -\frac{1}{k}\sum_{T_i\in \T} |T_i\cap S|
\end{equation}
Equivalently it is the problem of minimizing the negative part (since $v=2\cdot k\cdot m$ is a constant), multiplied by $k$:
\begin{equation}
\min_{S} v\cdot |\{T_i\in \T: T_i\cap S=\emptyset\}| +\sum_{T_i\in \T} |T_i\cap S|
\end{equation}

Moreover, we will only consider the case where $\T$ does not contain the empty set of items $[m]$ (i.e. some item always has a threshold of $1$). In this case observe that by picking the whole set of items the first part of the above objective vanishes, and the second part is at most $ k \cdot m$. Thus the value of the optimal objective is at most $k\cdot m$. 

Observe that if the player picks a set $S$, such that $\exists T_i\in \T: T_i\cap S=\emptyset$, then the first term is at least $v=2\cdot k\cdot m$. Thus any such solution $S$ must be suboptimal. Thereby at any optimal solution the first term in the objective vanishes to zero. Hence, the optimization problem for $v=2\cdot k\cdot m$, is equivalent to the problem:
\begin{equation}
\min_{S: \nexists T_i\in \T: T_i\cap S=\emptyset} \sum_{T_i\in \T} |T_i\cap S|
\end{equation}

We will reduce the set cover problem on a regular hypergraph to the latter equivalent form of the optimal bidding problem, which we will refer to as the simplified bidding problem. We consider a set cover instance with $k$ elements $\{t_1,\ldots,t_k\}$ and $m$ sets $\{C_1,\ldots,C_m\}$. All sets $C_j$ are of equal size $r$. 

For each set $C_j$ in the set cover problem we create an item $c_j$ in the bidding problem. For each element $t_i$ in the set cover problem we create a set $T_i$ in the simplified bidding problem defined as follows:
\begin{equation}
T_i = \{ c_j \in [m]: t_i\in C_j\}
\end{equation}
Thus set $T_i$ contains all the items $c_j$, corresponding to sets $C_j$ that contain element $t_i$.

First observe that for the specific instance of the simplified bidding problem that we created we can simplify the objective, because all sets $C_j$ have the same size $r$:
\begin{equation}
 \sum_{T_i\in \T} |T_i\cap S| = \sum_{c_j\in S} |\{T_i: c_j\in T_i\}| = \sum_{c_j\in S} |\{t_i: t_i\in C_j\}|  
 = \sum_{c_j \in S}r = r|S|
\end{equation}
Thus the simplified optimal bidding problem for the instance we created boils down to finding the set $S$ of minimum cardinality, which satisfies the condition $\{\nexists T_i\in \T: T_i\cap S=\emptyset\}$. We will refer to any such feasible set of items $S$ as a winning item-set (since it guarantees that the player always wins some item independent of the threshold vector). 

We will now show that there is a one-to-one correspondence between winning item-sets in the simplified bidding problem instance and set covers in the original set cover instance. From this we will then conclude that finding the minimum cardinality winning item-set problem will imply finding the minimum cardinality set cover problem and would complete the reduction.

Consider a set cover $Q\subseteq \T$ of the set cover instance. We claim that the set of corresponding items $S=\{c_j: C_j\in Q\}$ is a winning item-set. Suppose that there exists a set $T_i\in \T$ such that $T_i\cap S=\emptyset$. Now consider the element $t_i$. Since $t_i$ was covered by $Q$, it means that there exists a set $C_j\in Q$ such that $t_i\in C_j$. By construction of $T_i$ we know that $T_i$ contains item $c_j$. Since $c_j\in S$, we get that the difference $S\cap T_i$ is non-empty, a contradiction.

Consider a winning item-set $S$ and let $Q$ denote the corresponding collection of sets in the set cover instance. Suppose that $Q$ is not a set cover. Thus there exists an element $t_i$ that is not covered. This means that all the collections $C_j\in Q$ do not contain element $t_i$. Since all such collections $c_j$ are not part of item set $T_i$, we have that $T_i\cap S=\emptyset$. Thus $S$ cannot be a winning item-set.
\end{proof}

We now show the stronger version of the hardness result. This will be useful when using the hardness of the optimal bidding problem to imply the impossibility of efficiently computable learning algorithms with polynomial regret rates.

\begin{lemma}[Hardness of Approximately Optimal Bidding]
The optimal bidding problem is $\NP$-hard to approximate to within an additive $\xi$ even when: the threshold vectors in the support of $D$ take values in $\{1,H\}^m$, where $H=k^2\cdot m^2$, $v=2\cdot k\cdot m$ and $\xi=\frac{1}{2k}$.
\end{lemma}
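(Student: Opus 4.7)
The plan is to upgrade the reduction of the previous lemma from exactly solving the optimal bidding problem to additively approximating its value within $\xi = 1/(2k)$. Recall that the construction produces a bidding instance for which the optimal bidder utility equals $u^* = v - r s^*/k$, where $s^*$ is the minimum set-cover size and $r$ is the common set size of the $r$-regular set-cover instance. Hence, as $s^*$ ranges over the positive integers, $u^*$ takes values on a discrete arithmetic progression with common spacing $r/k$.

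Instantiating the reduction with a value of $r$ for which $r$-regular set-cover remains NP-hard (for instance any $r \ge 3$), consecutive candidate values of $u^*$ are separated by $r/k \ge 3/k > 1/k = 2\xi$. Suppose, for contradiction, that some polynomial-time algorithm produces an estimate $\tilde u$ with $|\tilde u - u^*| \le \xi$. Then the rounded quantity $\left\lfloor (v - \tilde u) k / r + 1/2 \right\rfloor$ recovers $s^*$ exactly, since its rounding error is at most $\xi k / r = 1/(2r) \le 1/6 < 1/2$. This contradicts the NP-hardness of $r$-regular set-cover, which establishes the claim.

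The one technical point to verify is that no non-winning strategy can come within $\xi$ of $u^*$ and thereby corrupt the integer-grid structure exploited above; otherwise the approximation need not concentrate on the values $v - r s/k$. This is handled exactly as in the previous lemma: the choice $v = 2 k m$ ensures that any bidding strategy encoded by a set $S$ for which some $T_i \cap S = \emptyset$ incurs an additional shortfall of at least $v/k = 2m$ compared to the winning-set regime, which is vastly larger than $\xi = 1/(2k)$. Hence the $\xi$-approximation can be safely restricted to the winning item-sets, where the spacing $r/k$ produces the gap required for the reduction to go through. The main obstacle is thus purely a quantitative one---lining up the arithmetic-progression gap $r/k$ against $2\xi$---and it is immediately cleared by the choice $\xi = 1/(2k)$ and any NP-hard value of $r$.
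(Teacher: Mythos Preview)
Your argument is correct and takes a more elementary route than the paper. The paper invokes the \emph{inapproximability} of $r$-regular set cover (hardness of approximating within $\log r - O(\log\log r)$, citing \cite{Feige1998,Trevisan2001}) and shows that a $\xi$-additive approximation to the bidding value yields a multiplicative $3$-approximation to the set-cover optimum; this contradicts the inapproximability once $r$ is large enough. You instead fix $r$ to a small constant for which exact $r$-regular set cover is already $\NP$-hard, and exploit the discreteness of $s^*$: since $u^* = v - r s^*/k$ lies on an arithmetic grid with spacing $r/k > 2\xi$, the additive approximation recovers $s^*$ exactly by rounding. This sidesteps the heavier PCP-based inapproximability machinery. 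The paper's route is more robust in that it does not pin $r$ to a constant and would tolerate a looser relationship between $\xi$ and the grid spacing, but for the stated $\xi = 1/(2k)$ your direct rounding argument is simpler and entirely sufficient. One remark: your final paragraph about non-winning strategies is harmless but not needed---once the previous lemma establishes that the optimum itself satisfies $u^* = v - r s^*/k$, any $\xi$-close estimate of $u^*$ rounds to $s^*$ regardless of what other (non-winning) strategies achieve.
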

\begin{proof}
We show that getting the value of the instance of the optimal bidding created in the proof of Theorem~\ref{thm:exact-optimal} to within an additive error $\xi=\frac{1}{2k}$, will imply a $3$-approximation to the original $r$-regular set cover instance for any $r$. It is known that it is NP-hard to approximate the optimal $r$-regular set cover to within a $\log(r)-O(\log\log(r))$ factor \cite{Trevisan2001,Feige1998}.

Let $APX$ be a $\xi$-additive approximation to the reduced instance of the optimal bidding problem and  $OPT$ the optimal value of the optimal bidding problem, i.e.:
\begin{equation}
OPT \in \left[ APX-\xi, APX+\xi\right]
\end{equation}
Let $OPT_c$ be the optimal value to the set cover instance. We already argued in the proof of Theorem~\ref{thm:exact-optimal} that:
\begin{equation}
OPT = v - OPT_c \frac{r}{k}
\end{equation}
Thus we get:
\begin{equation}
(v-APX)-\xi\leq OPT_c\frac{r}{k} \leq (v-APX)+\xi
\end{equation}
Now we know that the bidder in the bidding problem must always pay at least $1$ to win. Thus $APX\leq v-1\Leftrightarrow v-APX\geq 1$. Hence, we get that if:
\begin{equation}
\xi \leq \frac{1}{2k}\leq \frac{v-APX}{2k}
\end{equation}
then:
\begin{equation}
\left(1-\frac{1}{2k}\right)(v-APX)\leq (v-APX)-\xi\leq OPT_c\frac{r}{k} \leq (v-APX)+\xi\leq \left(1+\frac{1}{2k}\right)(v-APX)
\end{equation}
Which subsequently implies:
\begin{equation}
\frac{2k-1}{2\cdot r}(v-APX) \leq OPT_c \leq \frac{2k+1}{2\cdot r} (v-APX)
\end{equation}
Thus the value: $Q=\frac{2k-1}{2\cdot r}(v-APX)$ satisfies that:
\begin{equation}
Q\leq OPT_c \leq \frac{2k+1}{2 k-1} Q \leq 3\cdot Q
\end{equation}
i.e. it is a $3$-approximation to the value of the set cover instance.
\end{proof}
\end{proof}

\subsection{Proof of Theorem~\ref{thm:hardness of unit-demand learning}} \label{sec:app:final proof of lower bound}

\begin{proofof}{Theorem~\ref{thm:hardness of unit-demand learning}}
We have already provided a high-level sketch of the approach together with a discussion of the challenges that arise in Section~\ref{sec:hardness}. So we proceed directly with the technical details of the proof. The proof is via a contradiction. Towards this, we will suppose that there exists a no-regret learning algorithm for the online bidding problem that we are considering.

Consider the instances of the \emph{optimal bidding problem} obtained in the proof of Theorem~\ref{thm:exact-optimal}. In these instances, we have $v=2\cdot k\cdot m$ and a distribution $D$ of threshold vectors with support $k$. The threshold vectors in the support of the distribution take values in $\{1,k^2m^2\}^m$.

Now imagine that our no-regret learning algorithm faces threshold vectors that are guaranteed to lie in $\{1,k^2m^2\}^m$. In such a setting, every bid vector that our algorithm may submit is utility equivalent to or strictly dominated by some bid vector in $\{0,v/2m\}^m=\{0,k\}^m$. Thus we can assume that our no-regret learning algorithm only submits bid vectors from this set. (This is because, given a learning algorithm that does not satisfy this property, we can easily correct it to a better one that satisfies this property.) Hence, given the history of encountered threshold vectors until time step $t$, our no-regret learning algorithm will compute and submit a (potentially random) bid vector $b^t\in \{0,k\}^m$. Moreover, the no-regret learning property guarantees that, facing any sequence of threshold vectors $\{\theta^t\}_{t=1}^T$ from the set $\{1,k^2m^2\}^m$, the bid vectors submitted by our algorithm will satisfy:
\begin{equation}
\frac{1}{T} \sum_{t=1}^T \E\left[u(b^t,\theta^t)\right] \geq \max_{b\in \{0,k\}^m} \frac{1}{T} \sum_{t=1}^T u(b,\theta^t) - \epsilon(T)
\end{equation}
where $\epsilon(T) = poly(T^{-1},m,v)=poly(T^{-1},m,k)$. We will show how to use such no-regret learning guarantee to compute an approximate solution to an arbitrary instance of the optimal bidding problem from Theorem~\ref{thm:exact-optimal}.

We will achieve this as follows: Get $T$ i.i.d. samples of threshold vectors from distribution $D$. Denote this sequence of samples $\{\theta^t\}$. Run the no-regret learning algorithm against this sequence $N$ independent times. 

We denote by $b_z^t$ the bid vector submitted by the algorithm at time-step $t$ of the $z$-th execution. Let $V_z^T(\{\theta^t\})$ be the average realized utility of the algorithm in the $z$-th execution, i.e.
\begin{equation}
V_z^T(\{\theta^t\}) = \frac{1}{T} \sum_{t=1}^T  u(b_z^t,\theta^t)
\end{equation}
We will show that:
\begin{equation}
\left| \frac{1}{N} \sum_{z=1}^N V_z^T(\{\theta^t\})-\max_{b} \E_{p\sim D}\left[u(b,p)\right]\right|\leq c(T,\delta,N,k,m)
\end{equation}
with probability at least $1-\delta$ for some appropriately defined function $c(T,\delta,N,k,m)$.

\paragraph{Upper bound.} Observe that $V_z^T$ is drawn i.i.d. across the $z$ runs and its expectation is equal to:
\begin{equation}
V^T(\{\theta^t\}) = \E_{\{b^t\}}\left[\frac{1}{T} \sum_{t=1}^T u(b^t,\theta^t)\right]
\end{equation}
By Chernoff-Hoeffding bounds and since the average utility is upper bounded by $v$:
\begin{equation}
\Pr\left[\left|\frac{1}{N} \sum_{z=1}^N V_z^T(\{\theta^t\})- V^T(\{\theta^t\})\right| \geq \frac{q}{N}\right]\leq 2e^{-\frac{2q^2}{N\cdot v^2}}
\end{equation}
Since $v=2km$, if we set $c(N,\delta) = k\cdot m\cdot \sqrt{\frac{2\log(2/\delta)}{N}}$, then we know that with probability $1-\delta$:
\begin{equation}
\left|\frac{1}{N} \sum_{z=1}^N V_z^T(\{\theta^t\})- V^T(\{\theta^t\})\right|  \leq c(N,\delta)
\end{equation}

Let $b^*=\arg\max_{b\in \{0,k\}^m} \E_{p\sim D}\left[u(b,p)\right]$. By the no-regret assumption we know that:
\begin{equation}
V^T(\{\theta^t\}) \geq \max_{b} \frac{1}{T} \sum_{t=1}^T u(b,\theta^t)-\epsilon(T)\geq \frac{1}{T} \sum_{t=1}^T u(b^*,\theta^t)-\epsilon(T)
\end{equation}
Since $\theta^t$ are drawn i.i.d. from $D$, we know by Chernoff-Hoeffding bounds that:
\begin{equation}
\Pr\left[\left|\frac{1}{T} \sum_{t=1}^T u(b^*,\theta^t) - \E_{p\sim D}\left[u(b^*,p)\right]\right| \geq \frac{q}{T}\right]\leq 2e^{-\frac{2q^2}{T\cdot v^2}}
\end{equation}
Thus with probability $1-\delta$:
\begin{equation}
\left|\frac{1}{T} \sum_{t=1}^T u(b^*,\theta^t) - \E_{p\sim D}\left[u(b^*,p)\right]\right| \leq c(T,\delta)
\end{equation}

Combining the above we get that with probability $1-2\delta$:
\begin{align*}
\frac{1}{N} \sum_{z=1}^N V_z^T(\{\theta^t\})\geq~& V^T(\{\theta^t\})-c(T,\delta)\\
 \geq~&  \frac{1}{T} \sum_{t=1}^T u(b^*,\theta^t)-\epsilon(T) - c(T,\delta)\\
\geq~&  \E_{p\sim D}\left[u(b^*,\theta^t)\right]-\epsilon(T) - c(T,\delta)-c(N,\delta)
\end{align*}

\paragraph{Lower bound.} Consider the random variable $X_z^t = u(b_z^t,\theta^t)- \E_{p\sim D}\left[u(b_z^t,p)\right]$. Let $\F_{t-1}$ denote the filtration of all the information observed by the algorithm up till time-step $t-1$ in the $z$-th execution. This is the thresholds and bids in the past steps. Observe that the conditional expectation (over bids and thresholds) of the latter variable is:
\begin{equation}
\E[X_z^t ~|~ \F_{t-1}] = \E\left[u(b_z^t,\theta^t)~|~\F_{t-1}\right] - \E\left[\E_{p\sim D}\left[u(b_z^t,p)\right]~|~\F_{t-1}\right]
\end{equation}
Since $\theta^t$ are drawn i.i.d. at each time-step and are not observed by the algorithm before deciding $b^t$, we have that:
\begin{equation}
\E\left[u(b_z^t,\theta^t)~|~\F_{t-1}\right]=\E\left[\E_{p\sim D}\left[u(b_z^t,p)\right]~|~\F_{t-1}\right]
\end{equation}
Thus we have that $\E\left[X_z^t~|~\F_{t-1}\right]=0$ and thereby $\{X_z^t\}_t$ is a bounded martingale difference sequence, with $|X_z^t|\leq v$.  Hence, by Hoeffding-Azuma inequality:
\begin{equation}
\Pr\left[\left|\sum_{t=1}^T X_z^t \right| \geq q\right] \leq 2e^{-\frac{2q^2}{T\cdot v^2}}
\end{equation}
The latter implies that with probability $1-\delta$:
\begin{equation}
\left|\frac{1}{T} \sum_{t=1}^T u(b_z^t,\theta^t) - \frac{1}{T} \sum_{t=1}^T \E_{p\sim D}\left[u(b_z^t,p)\right]\right|\leq c(T,\delta)
\end{equation}
Hence, we also get that for each $z$ with probability $1-\delta$:
\begin{align*}
V_z^{T}(\{\theta^t\})=\frac{1}{T} \sum_{t=1}^T u(b_z^t,\theta^t) \leq~&  \frac{1}{T} \sum_{t=1}^T \E_{p\sim D}\left[u(b_z^t,p)\right] +c(T,\delta)\\
\leq~& \frac{1}{T} \sum_{t=1}^T \max_{b}\E_{p\sim D}\left[u(b,p)\right] +c(T,\delta)\\
=~& \max_{b}\E_{p\sim D}\left[u(b,p)\right] +c(T,\delta)
\end{align*}
The latter holds for all $z$ with probability at least $1-N\cdot \delta$. Therefore, with probability at least $1-N\delta$, the average across the $N$ runs will satisfy the above bound. 

\paragraph{Concluding.} Hence we can conclude that with probability at least $1-(N+2)\delta$:
\begin{equation}
-\epsilon(T)-c(N,\delta)-c(T,\delta)\leq\left( \frac{1}{N} \sum_{z=1}^N V_z^T(\{\theta^t\})-\max_{b} \E_{p\sim D}\left[u(b,p)\right]\right)\leq c(T,\delta)
\end{equation}

Picking $\delta = \frac{\zeta}{N+2}$ and $N=T$, we get that with probability $1-\zeta$ we have:
\begin{equation}
\left|\frac{1}{N} \sum_{z=1}^N V_z^T(\{\theta^t\})-\max_{b} \E_{p\sim D}\left[u(b,p)\right]\right|\leq \epsilon(T) + 2\cdot k\cdot m\cdot \sqrt{\frac{2\log(2 (T+2)/\zeta)}{T}}
\end{equation}

Thus by doing $T$ executions against a random sequence of threshold vectors from $D$ of length $T$ and averaging the average utilities gives us an approximation to the value of the offline \emph{optimal bidding} problem, $\max_{b} \E_{p\sim D}\left[u(b,p)\right]$. Specifically, if $\epsilon(T)=\poly(T^{-1},k,m)$ then by setting $T=\poly(\frac{1}{\xi},k,m)$ we can get the value of the latter problem to within an additive error of $\xi$. Thus we cannot possibly have $\epsilon(T) = \poly(T^{-1},k,m)$, since then $\poly(k,m)$  accesses to the polynomial time no-regret algorithm would give us a $\frac{1}{2k}$ additive approximation to the optimal bidding problem in polynomial time and with high probability.
\end{proofof}

\subsection{Intepretation of Theorem~\ref{thm:hardness of unit-demand learning}} \label{sec:sketch of lower bound}

Theorem~\ref{thm:hardness of unit-demand learning} can be viewed as a corrollary of two results, of which one is specific to SiSPAs and the other is a general claim about online learning.

\begin{itemize} 
\item  Theorem~\ref{thm:approx-optimal}  is equivalent to saying that it is {\tt NP}-hard to compute one step of the Follow-The-Leader (FTL) algorithm in SiSPAs, even for a unit-demand bidder with the same value for all items, and even when this value is given in unary representation.  Every step of FTL needs to compute an optimal bid vector against the empirical distribution of threshold bids that the algorithm has already encountered. Theorem~\ref{thm:approx-optimal} implies that this is {\tt NP}-hard.

\begin{theorem}[Corollary of Theorem~\ref{thm:approx-optimal}] \label{thm:one step of FTL is hard} Computing one step of FTL in SiSPAs is  {\tt NP}-hard, even for a unit-demand bidder with the same value $v$ for all items that is given in unary. In fact, 
it is {\tt NP}-hard to even compute the expected utility resulting from the optimum bid against the distribution of past opponent bids to within an additive {$m/v$}.
\end{theorem}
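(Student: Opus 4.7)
} The plan is to observe that Theorem~\ref{thm:one step of FTL is hard} is essentially a restatement of Theorem~\ref{thm:approx-optimal} once we identify one step of FTL with the optimal bidding problem against the empirical distribution of past opponent threshold vectors. Concretely, at step $t$ of FTL a bidder plays $b^t \in \arg\max_b \sum_{\tau=1}^{t-1} u(b,\theta^\tau)$, and the value of this optimum divided by $t-1$ is exactly $\sup_b \E_{\theta \sim \hat D}[u(b,\theta)]$ where $\hat D$ is the uniform distribution over $\theta^1,\ldots,\theta^{t-1}$. So an algorithm that computes (even the value of) one step of FTL on a history $\theta^1,\ldots,\theta^{t-1}$ is exactly an algorithm that solves (the value version of) the optimal bidding problem for the explicit $k$-point distribution $D=\hat D$ with $k=t-1$.

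First I would take the hard instance constructed in Theorem~\ref{thm:approx-optimal}, in which the explicit distribution $D$ is supported on $k$ threshold vectors in $\{1,H\}^m$ with $H=k^2m^2$, the bidder is unit-demand with a common value $v=2km$ on every item (given in unary since $v$ is polynomial in $k,m$), and the additive hardness threshold is $\xi=1/(2k)$. Then I would feed these $k$ threshold vectors, in any order, as the history of past opponent bid profiles to the FTL subroutine for a unit-demand bidder with value $v$. By the identification above, computing (or approximating) the expected utility of the FTL action against the resulting empirical distribution is the same as computing (or approximating) $\sup_b \E_{\theta\sim D}[u(b,\theta)]$ for the original optimal-bidding instance.

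The only arithmetic to check is that the additive slack $\xi=1/(2k)$ in Theorem~\ref{thm:approx-optimal} matches the advertised bound $m/v$ in the present statement; this is immediate since $v=2km$ gives $m/v=1/(2k)=\xi$. Therefore any polynomial-time algorithm that, on input a unit-demand valuation $v$ (in unary) and an explicit list of past threshold vectors, outputs the expected utility of the FTL bid up to additive $m/v$ would also give an additive $1/(2k)$ approximation to the optimal bidding problem, contradicting Theorem~\ref{thm:approx-optimal} unless $\mathtt{P}=\mathtt{NP}$.

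I do not expect any real obstacle here, since the statement is deliberately phrased as a corollary; the only care needed is to verify that Theorem~\ref{thm:approx-optimal}'s reduction is with respect to the \emph{value} of the optimum (not merely the minimizing set), which is explicit in its statement, and to note that presenting $k$ vectors in the history and presenting the uniform distribution over $k$ vectors are interchangeable formulations of FTL's inner optimization.
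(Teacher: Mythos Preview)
Your proposal is correct and follows exactly the approach the paper intends: the paper explicitly labels this theorem a corollary of Theorem~\ref{thm:approx-optimal} and, in its discussion, simply notes that ``every step of FTL needs to compute an optimal bid vector against the empirical distribution of threshold bids that the algorithm has already encountered,'' without spelling out the arithmetic. Your verification that $m/v = m/(2km) = 1/(2k) = \xi$ is the only detail needed and is correct; if anything, your writeup is more explicit than the paper's own treatment.
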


\item Even though FTL is not itself a no-regret learning algorithm, and even though we only established that one step of it is intractable, this sufficed to actually show  (see Section~\ref{sec:app:final proof of lower bound}) that there are no polynomial-time implementable no-regret learning algorithms in SiSPAs. Here, we comment that this is a general phenomenon, applicable to any online learning setting: namely, in any setting where one step of FTL is inapproximable, there is no polynomial-time no-regret learning algorithm either. This is summarized in the following theorem. 
\begin{theorem} \label{thm:LB for FTL implies general LB}
Consider a family  ${\cal F}$ of functions $f:{\cal X} \rightarrow [0,v]$, and suppose that $b$ bits suffice to index each function in $\cal F$ and element in $\cal X$ under some encoding. Suppose also that, given an explicit description of a distribution ${\cal D}$ over ${\cal F}$\footnote{The size of the description is the total number of bits needed to index the functions in its support and the probabilities assigned to them.} as well as $v$ in unary representation, it is {\tt NP}-hard to find some $x \in {\cal X}$ whose expected value $\E_{f \sim D}[f(x)]$ is within an additive $O(1/|D|^c)$ of the optimum for some constant $c>0$, where $|D|$ is the size of $D$'s support. Then, unless {\tt RP} $\supseteq$ {\tt NP}, there is no learning algorithm running in time polynomial in $b$, $v$, and $T$ and whose regret after $T$ steps is any polynomial in $b$, $v$, and $1/T$.
\end{theorem}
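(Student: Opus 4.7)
\medskip\noindent\textbf{Proof proposal for Theorem~\ref{thm:LB for FTL implies general LB}.} The plan is to mimic the two-sided concentration argument used in the proof sketch of Theorem~\ref{thm:hardness of unit-demand learning}, abstracted so that SiSPAs play no special role. Suppose for contradiction that a randomized learning algorithm $A$ exists whose per-step running time and whose regret after $T$ rounds are both $\text{poly}(b,v,T)$ and $\text{poly}(b,v,1/T)$ respectively. Given an explicit distribution $D$ over $\mathcal{F}$ of support size $k=|D|$ (the hard offline instance), the reduction draws an i.i.d.\ sample $f^1,\ldots,f^T\sim D$, runs $A$ $N$ times independently on this fixed sequence, and outputs the double average $\hat{V}=\frac{1}{N}\sum_{z=1}^N \frac{1}{T}\sum_{t=1}^T f^t(x_z^t)$, where $x_z^t\in\mathcal{X}$ is $A$'s choice at step $t$ of run $z$. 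The claim will be that $\hat{V}$ is, with high probability, within an additive $O(1/k^c)$ of $u^*:=\sup_{x\in\mathcal{X}}\mathbb{E}_{f\sim D}[f(x)]$, contradicting the assumed inapproximability.

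First I would establish the \emph{upper bound on $u^*$}. Let $V_z^T=\frac{1}{T}\sum_t f^t(x_z^t)$. Conditional on the sequence $\{f^t\}$, the $V_z^T$'s are i.i.d.\ across $z$ with common mean $V^T=\mathbb{E}_{A}[V_z^T\mid\{f^t\}]$, and each $V_z^T\in[0,v]$. Two applications of Hoeffding yield: (i) $|\hat{V}-V^T|\le v\sqrt{2\log(2/\delta)/N}$ with probability $1-\delta$; and (ii) the no-regret property of $A$ gives $V^T\ge \frac{1}{T}\sum_t f^t(x^*)-r(T)$ for $x^*=\arg\max_x\mathbb{E}_{f\sim D}[f(x)]$, which together with Hoeffding on the i.i.d.\ sample $\{f^t\}$ gives $V^T\ge u^*-r(T)-v\sqrt{2\log(2/\delta)/T}$. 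Concatenating, $\hat{V}\ge u^*-r(T)-O(v\sqrt{\log(1/\delta)/\min(N,T)})$ with probability at least $1-2\delta$.

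Next I would show the \emph{matching lower bound on $u^*$} via a martingale argument. Fix a run $z$ and let $\mathcal{G}_{t-1}$ denote the $\sigma$-algebra generated by $f^{1},\ldots,f^{t-1}$ and $A$'s internal randomness up through step $t-1$. Since $x_z^t$ is $\mathcal{G}_{t-1}$-measurable and $f^t$ is drawn i.i.d.\ from $D$ independently of $\mathcal{G}_{t-1}$, the variables $X_z^t=f^t(x_z^t)-\mathbb{E}_{f\sim D}[f(x_z^t)]$ form a bounded martingale difference sequence with $|X_z^t|\le v$. By Azuma--Hoeffding and a union bound over the $N$ runs, with probability at least $1-N\delta$,
\begin{equation*}
V_z^T \;\le\; \tfrac{1}{T}\sum_{t=1}^T \mathbb{E}_{f\sim D}[f(x_z^t)]+v\sqrt{\tfrac{2\log(2/\delta)}{T}} \;\le\; u^*+v\sqrt{\tfrac{2\log(2/\delta)}{T}} \quad \text{for every } z,
\end{equation*}
and hence $\hat{V}\le u^*+v\sqrt{2\log(2/\delta)/T}$.

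Finally I would tune parameters and close the loop. Combining the two bounds, $|\hat{V}-u^*|\le r(T)+O(v\sqrt{\log(N/\zeta)/\min(N,T)})$ with probability $1-\zeta$ (setting $\delta=\zeta/(N+2)$). By hypothesis $r(T)=\text{poly}(b,v,1/T)$, so choosing $T=N=\text{poly}(k,b,v)$ large enough (any fixed polynomial of sufficient degree depending on $c$ and the exponents of $r$) makes the total additive error smaller than $1/(2k^c)$, while the running time remains polynomial in $k,b,v$. This yields a randomized polynomial-time algorithm that, with probability $\ge 2/3$, approximates $u^*$ to within $1/k^c$, placing the assumed NP-hard problem into RP, a contradiction unless $\mathtt{RP}\supseteq\mathtt{NP}$. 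The only delicate step is the martingale bookkeeping in the lower bound: one has to be careful that the i.i.d.\ structure of $\{f^t\}$ is preserved \emph{across} runs as well as within a run, which is why we draw $\{f^t\}$ once and reuse it, and why independence of $f^t$ from $\mathcal{G}_{t-1}$ (rather than just from $x_z^t$) is used to zero out the conditional mean of $X_z^t$.
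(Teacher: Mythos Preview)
Your proposal is correct and takes essentially the same approach as the paper: the paper's proof of Theorem~\ref{thm:LB for FTL implies general LB} simply observes that the argument for Theorem~\ref{thm:hardness of unit-demand learning} (sample $T$ i.i.d.\ functions from $D$, run the learner $N$ times, and use Hoeffding for the upper bound and an Azuma martingale for the lower bound to sandwich $u^*$) never uses any structure specific to SiSPAs, and you have faithfully reproduced exactly that abstraction. The only point worth noting is that, like the paper, your reduction outputs an additive approximation to the \emph{value} $u^*$ rather than an approximately optimal $x$; this matches the paper's treatment (cf.\ the definition of the optimal bidding problem and the concluding paragraph of the proof in Appendix~\ref{sec:app:final proof of lower bound}), so it is not a gap relative to the paper.
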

Observe that the hypothesis of Theorem~\ref{thm:LB for FTL implies general LB} is tantamount to the problem facing FTL in a learning environment with cost functions ${\cal F}$, which need not be linear of convex. Our statement is also careful to cater to pseudo-polynomial dependence of the running time and regret bound on the diameter $v$ of the range of our cost functions, as the typical dependence of no-regret learning algorithms on the diameter is pseudo-polynomial, and we are seeking to get lower bounds for such learners.

\begin{proofskof}{Theorem~\ref{thm:LB for FTL implies general LB}}
We simply observe that in the proof of Theorem~\ref{thm:hardness of unit-demand learning} in Section~\ref{sec:app:final proof of lower bound} we did not use the structure of the actual learning problem other than the fact that it is $\NP$-hard to find an approximately optimal response to an explicitly given distribution, which guarantees an additive error that is inverse-polynomial in the input.
\end{proofskof}

\end{itemize}

\section{Omitted Proofs from Section~\ref{sec:no-envy}}
\label{sec:app-envy}
\begin{proofof}{Lemma~\ref{lem:from sets to bids}}
Let $\A$ be the algorithm for the online \buyers problem. At every iteration $t$, we will query algorithm $\A$ for a set $S^t$. We will then call the XOS oracle on $S^t$ to get an additive valuation $\ell \in \Ell$ such that $v(S^t)=\sum_{j\in S^t} a_j^\ell$. We will submit a bid $b_j^t=a_j^t\cdot 1\{j\in S^t\}$ on each item $j$. Then after seeing the threshold vector $\theta^t$, we feed it to algorithm $\A$ as a threshold vector for the \buyers problem. We now show that this algorithm is an $\alpha$-approximate no-envy algorithm for the \emph{online bidding problem}.

First we argue that, based on the latter construction for any realization of the price vector $\theta^t$:\footnote{It is interesting to remark that the latter property has strong connection to the fact that the simultaneous second price auction is a smooth mechanism as defined in \cite{Syrgkanis2013} and the proof is similar to the proof of showing that the auction is smooth.}
\begin{equation*}
u(b^t,\theta^t) \geq u(S^t,\theta^t) = v(S^t)-\sum_{j\in S^t}\theta_j^t
\end{equation*}
Consider an arbitrary threshold vector $\theta^t$ and let $X^t=\{j: b_j^t>\theta_j^t\}$ be the subset of items of $S^t$ that the player won. By the definition of the XOS valuation we have that, $v(X^t) \geq \sum_{j\in X^t} a_j^\ell$. Thus:
\begin{align*}
u(b^t,\theta^t) =~& v(X^t) - \sum_{j\in X^t} \theta_j^t \geq \sum_{j\in X^t}a_j^\ell - \sum_{j\in X^t} \theta_j^t = \sum_{j\in S^t} (a_j^\ell - \theta_j^t)\cdot 1\{a_j^t>\theta_j^t\}
\geq \sum_{j\in S^t} (a_j^\ell - \theta_j^t)\\
 =~& v(S^t)-\sum_{j\in S^t}\theta_j^t
\end{align*}

Thus we have that the expected reward of the online bidding algorithm we constructed satisfies that for any adaptively chosen sequence of thresholds:
\begin{equation}
\E\left[ \frac{1}{T} \sum_{t=1}^T u(b^t,\theta^t)\right] \geq \E\left[ \frac{1}{T} \sum_{t=1}^T u(S^t,\theta^t)\right],
\end{equation} 
where $S^t$ is the random set that the \emph{online \buyers algorithm} would have chosen at time-step $t$, for the same sequence of adaptively chosen thresholds. By the guarantee of the \emph{online \buyers algorithm} $\A$, from Equation \eqref{eqn:alg-buyers-guarantee}, we can then conclude that the algorithm we created for the bidding problem is $\alpha$-approximate no-envy.

It remains to argue that our algorithm never submits an overbidding bid. This follows from the definition of XOS valuations. If $\ell$ is the index of the additive valuation that we picked at a time-step $t$, then for any set $X$, we have: $v(X) \geq \sum_{j\in X} a_j^\ell\geq \sum_{j\in X} b_j^t$.
\end{proofof}

\begin{proofof}{Theorem~\ref{thm:envy-poa}}
Let $X^*=(X_1^*,\ldots,X_n^*)$ be the optimal allocation of items to bidders. Moreover, for each item $j$ let $B_j=\max_{i\in [n]} b_{ij}$. By the no-envy property of each player we get:
\begin{align*}
\E\left[\frac{1}{T}\sum_{i=1}^T u_i(b^t)\right] \geq \frac{1}{\alpha}v_i(X_i^*) - \E\left[\sum_{j\in X_i^*} \hat{\theta}_{ij}\right] - \epsilon(T)
\end{align*}
Summing over all players we get:
\begin{align*}
\E\left[\frac{1}{T}\sum_{i=1}^T \sum_{i\in [n]}u_i(b^t)\right] \geq~& \frac{1}{\alpha}\sum_{i\in [n]}v_i(X_i^*) - \E\left[\sum_{i\in[n]} \sum_{j\in X_i^*} \hat{\theta}_{ij}\right] - n\cdot \epsilon(T)\\
\geq~& \frac{1}{\alpha}\opt - \E\left[\frac{1}{T}\sum_{t=1}^T\sum_{i\in[n]} \sum_{j\in X_i^*} B_j^t\right]-n\cdot \epsilon(T)\\
\geq~& \frac{1}{\alpha}\opt - \E\left[\frac{1}{T}\sum_{t=1}^T\sum_{j\in [m]} B_j^t\right]-n\cdot \epsilon(T)
\end{align*}
Let $X_i^t$ denote the random set that player $i$ acquired at time-step $t$. Since the utility of the player is at most his value, we have: $v_i(X_i^t)\geq u_i(b^t)$. Moreover, since players do not overbid, we have that for any realization of the randomness: $\sum_{i\in[n]} v_i(X_i^t) \geq \sum_{j\in [m]}B_j^t$. Hence:
\begin{align*}
\E\left[\frac{1}{T}\sum_{i=1}^T \sum_{i\in [n]}v_i(X_i^t)\right]
\geq~& \frac{1}{\alpha}\opt - \E\left[\frac{1}{T}\sum_{t=1}^T\sum_{i\in [n]} v_i(X_i^t)\right]-n\cdot \epsilon(T)
\end{align*}
By re-arranging we get the theorem.
\end{proofof}

\section{Omitted Proofs from Section~\ref{sec:oracles}}

\subsection{From adaptive to oblivious adversaries}\label{sec:app-oblivious} 
We will utilize a generic reduction provided in Lemma 12 of \cite{Hutter2005}, which states that given that in Algorithm~\ref{defn:ftpl} we draw independent randomization at each iteration, it suffices to provide a regret bound only for oblivious adversaries, i.e., the adversary picks a fixed sequence $\theta^{1:T}$ ahead of time without observing the actions of the player. Moreover, for any such fixed sequence of an oblivious adversary, the expected utility of the algorithm can be easily shown to be equal to the expected utility if we draw a single random sequence $\{x\}$ ahead of time and use the same random vector all the time. 

The proof is as follows: by linearity of expectation and the fact that each sequence $\{x\}^t$ drawn at each time-step $t$ is identically distributed:
\begin{align*}
\E_{\{x\}^1,\ldots,\{x\}^t}\left[\sum_{t=1}^T u(M(\{x\}^t\cup \theta^{1:t-1}),\theta^t)\right] =~& \sum_{t=1}^T \E_{\{x\}^t}\left[u(M(\{x\}^t\cup \theta^{1:t-1}),\theta^t)\right]\\
 =~& \sum_{t=1}^T \E_{\{x\}^1}\left[u(M(\{x\}^1\cup \theta^{1:t-1}),\theta^t)\right]\\
=&\E_{\{x\}^1}\left[\sum_{t=1}^T u(M(\{x\}^1\cup \theta^{1:t-1}),\theta^t)\right]
\end{align*}
The latter is equivalent to the expected reward if we draw a single random sequence $\{x\}$ ahead of time and use the same random vector all the time. Thus it is sufficient to upper bound the regret of this modified algorithm, which draws randomness only once.


\subsection{Proof of Theorem~\ref{thm:general-regret}}

\begin{proofof}{Theorem~\ref{thm:general-regret}}
To prove the theorem it suffices to show that the regret of the algorithm which gets to observe $\theta^t$ ahead of time and at each time-step plays action $\tilde{a}^t=\Or{\{x\}\cup\theta^{1:t}}$ is upper bounded by the second term in the bound. Then the theorem easily follows by observing that the expected reward of Algorithm~\ref{defn:lazy-ftpl} is close to the reward of this fore-sight algorithm by an error which is equal to the first term in the bound.  

Thus in the remainder of the section we will analyze this algorithm with fore-sight, and bound its regret in a sequence of two Lemmas. 
\begin{lemma}[Be-the-leader lemma]\label{lem:btl} Suppose that we actually learned $\theta^t$ ahead of time-step $t$ and we played according to action $\tilde{a}^t=\Or{\theta^{1:t}}$. Then for any sequence $\theta^{1:T}$, this algorithm has no-regret against any action $a^*$, i.e.:
\begin{equation}
\sum_{t=1}^T u(\Or{\theta^{1:t}},\theta^t) \geq \sum_{t=1}^T u(\Or{\theta^{1:T}},\theta^t)
\end{equation}
\end{lemma}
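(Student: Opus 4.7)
The plan is to prove Lemma~\ref{lem:btl} by a direct induction on $T$, which is the standard Be-the-Leader argument adapted to the present notation. The key observation underpinning the whole argument is just the defining property of the oracle: by Definition~\ref{sec:optimization oracle}, for any sequence $\phi^{1:s}$ and any action $a \in A$, we have $U(M(\phi^{1:s}), \phi^{1:s}) \geq U(a, \phi^{1:s})$.

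First I would handle the base case $T=1$ immediately: both sides equal $u(M(\theta^1), \theta^1)$, so the inequality holds trivially. For the inductive step, I would assume that for any sequence of length $T-1$,
\begin{equation*}
\sum_{t=1}^{T-1} u\bigl(M(\theta^{1:t}), \theta^t\bigr) \;\geq\; \sum_{t=1}^{T-1} u\bigl(M(\theta^{1:T-1}), \theta^t\bigr),
\end{equation*}
and I would apply this to the length-$T$ sequence after peeling off the last term. That reduces the task to showing
\begin{equation*}
\sum_{t=1}^{T-1} u\bigl(M(\theta^{1:T-1}), \theta^t\bigr) \;\geq\; \sum_{t=1}^{T-1} u\bigl(M(\theta^{1:T}), \theta^t\bigr),
\end{equation*}
which is exactly the optimality of $M(\theta^{1:T-1})$ against the cumulative utility $U(\,\cdot\,, \theta^{1:T-1})$, applied to the candidate action $M(\theta^{1:T})$. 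Adding the common term $u(M(\theta^{1:T}),\theta^T)$ back to both sides and chaining the two inequalities yields the desired bound $\sum_{t=1}^T u(M(\theta^{1:t}),\theta^t) \geq \sum_{t=1}^T u(M(\theta^{1:T}),\theta^t)$.

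Finally I would observe that the RHS already is $\sup_{a^*\in A}\sum_{t=1}^T u(a^*,\theta^t)$, since $M(\theta^{1:T})$ is by definition a maximizer of $U(\,\cdot\,, \theta^{1:T})$; this is what justifies calling the conclusion a no-regret guarantee for the fore-sight algorithm. There is essentially no obstacle here: the content of the lemma is entirely captured by the fact that $M$ is an argmax oracle, and the induction is one line of algebra in each step. Thus the real work in Theorem~\ref{thm:general-regret} lies not in this lemma but in relating the fore-sight algorithm's reward to that of Algorithm~\ref{defn:lazy-ftpl} via the stability hypothesis, which will be handled by the subsequent lemma in the chain.
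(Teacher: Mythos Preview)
Your proposal is correct and mirrors the paper's own proof essentially step for step: both argue by induction on the length of the sequence, invoking at each step only the defining optimality of $M$ on the cumulative utility $U(\cdot,\theta^{1:t})$. The paper writes the induction hypothesis as $\sum_{\tau=1}^{t} u(M(\theta^{1:\tau}),\theta^\tau)\geq U(M(\theta^{1:t}),\theta^{1:t})$ and unrolls forward, while you peel off the last term and apply the hypothesis to the prefix; these are the same argument in slightly different bookkeeping.
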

\begin{proof}
We show it by induction. The induction hypothesis is that for any $t\in [1:T]$: 
\begin{equation*}
\sum_{\tau=1}^{t} u(\Or{\theta^{1:\tau}},\theta^\tau)\geq U(\Or{\theta^{1:t}},\theta^{1:t})
\end{equation*}
This trivially holds for $t=1$. Assume it holds for $t$ and consider the case of $t+1$:
\begin{align*}
\sum_{\tau=1}^{t+1} u(\Or{\theta^{1:\tau}},\theta^\tau)=~&\sum_{\tau=1}^{t} u(\Or{\theta^{1:\tau}},\theta^\tau)+ u(\Or{\theta^{1:t+1}},\theta^{t+1})\\
\geq~& U\left(\Or{\theta^{1:t}},\theta^{1:t}\right)+  u(\Or{\theta^{1:t+1}},\theta^{t+1})\\
\geq~& U(\Or{\theta^{1:t+1}},\theta^{1:t})+ u(\Or{\theta^{1:t+1}},\theta^{t+1})\\
=~& U(\Or{\theta^{1:t+1}},\theta^{1:t+1})
\end{align*}
which concludes the induction. 
\end{proof}

\begin{lemma}[Be-the-leader with fixed sample perturbations]\label{lem:pbtl} Suppose that we actually learned $\theta^t$ ahead of time-step $t$ and we played according to action $\tilde{a}^t=M(\{x\}\cup \theta^{1:{t}})$, i.e. including $\theta^t$ in the frequency vector for some fixed sequence $\{x\}=\{x^1,\ldots,x^k\}$ of $k$ parameters. Then, this algorithm achieves regret against any action $a^*$:
\begin{equation}
\sum_{t=1}^T \left( u(a^*,\theta^t) - u(\tilde{a}^t,\theta^t)\right) \leq  \sum_{\tau=1}^k \left(f_+(x^\tau)+f_-(x^\tau)\right)
\end{equation}
\end{lemma}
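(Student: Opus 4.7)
The plan is to reduce Lemma~\ref{lem:pbtl} to the ordinary Be-the-Leader Lemma~\ref{lem:btl} by treating the fixed perturbation sequence $\{x\}=\{x^1,\ldots,x^k\}$ as the prefix of an augmented parameter sequence of length $k+T$, and then paying for the overhead contributed by the $k$ extra samples via the uniform range bound $u(a,\theta)\in[-f_-(\theta),f_+(\theta)]$.

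Concretely, I would define $\phi^s=x^s$ for $s\in\{1,\ldots,k\}$ and $\phi^{k+t}=\theta^t$ for $t\in\{1,\ldots,T\}$. Because the oracle $M$ depends on its input only through the multiset of parameters (it maximizes the cumulative utility, which is invariant under reordering), we have $M(\phi^{1:k+t})=M(\{x\}\cup\theta^{1:t})=\tilde{a}^t$ for every $t\in\{1,\ldots,T\}$. Applying Lemma~\ref{lem:btl} to the augmented sequence $\phi^{1:k+T}$, and combining with the fact that $M(\phi^{1:k+T})$ is optimal for the full augmented history so that $U(M(\phi^{1:k+T}),\phi^{1:k+T})\geq U(a^*,\phi^{1:k+T})$, yields
\begin{equation*}
\sum_{s=1}^{k+T} u\left(M(\phi^{1:s}),\phi^s\right) \;\geq\; \sum_{s=1}^{k+T} u(a^*,\phi^s)
\end{equation*}
for any fixed action $a^*\in A$.

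Splitting both sides into the perturbation block (first $k$ indices) and the real block (last $T$ indices) and rearranging then gives
\begin{equation*}
\sum_{t=1}^T \bigl(u(a^*,\theta^t)-u(\tilde{a}^t,\theta^t)\bigr) \;\leq\; \sum_{\tau=1}^k \bigl(u(M(\phi^{1:\tau}),x^\tau) - u(a^*,x^\tau)\bigr).
\end{equation*}
The final step applies the termwise bounds $u(M(\phi^{1:\tau}),x^\tau)\leq f_+(x^\tau)$ and $-u(a^*,x^\tau)\leq f_-(x^\tau)$ to the right-hand side, producing exactly $\sum_{\tau=1}^k (f_+(x^\tau)+f_-(x^\tau))$.

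I do not anticipate a substantive obstacle: once the augmented sequence is introduced the argument is a short bookkeeping reduction. The only point requiring care is the ordering convention, since Lemma~\ref{lem:btl} is phrased for an ordered sequence while $M$ is permutation-invariant on its input; placing $\{x\}$ at the prefix of $\phi$ is precisely what makes $M(\phi^{1:k+t})$ coincide with the fore-sight play $\tilde{a}^t=M(\{x\}\cup\theta^{1:t})$, so that the perturbation terms and the real terms separate cleanly in the rearrangement above.
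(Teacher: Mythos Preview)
Your proposal is correct and follows essentially the same approach as the paper: augment the sequence with the perturbation block as a prefix, apply the Be-the-Leader Lemma~\ref{lem:btl} to the augmented sequence (together with the optimality of $M$ on the full history to pass from $M(\phi^{1:k+T})$ to an arbitrary $a^*$), rearrange to isolate the perturbation terms, and bound them via $u(a,\theta)\in[-f_-(\theta),f_+(\theta)]$. Your remarks about the permutation invariance of $M$ and the need to place $\{x\}$ as the prefix are exactly the bookkeeping the paper handles implicitly.
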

\begin{proof}
Denote with $k$ the length of sequence $\{x\}$. Consider the sequence $\{x\}\cup \theta^{1:T}$ and apply the \emph{be-the-leader} Lemma~\ref{lem:btl} to this sequence. We get that for any action $a^*$:
\begin{align*}
\sum_{\tau=1}^{k} u(\Or{x^{1:\tau}},x^\tau) + \sum_{t=1}^T u(\Or{\{x\}\cup\theta^{1:t}},\theta^t) \geq 
\sum_{\tau=1}^{k} u(a^*,x^\tau) + \sum_{t=1}^T u(a^*,\theta^t)
\end{align*}
Since for all $a\in A: -f_-(\theta)\leq u(a,\theta) \leq f_+(\theta)$ for any $\theta$, and since $\tilde{a}^t = M(\{x\}\cup\theta^{1:t})$, by re-arranging the latter inequality we get:
\begin{equation*}
\sum_{t=1}^T \left( u(a^*,\theta^t) - u(\tilde{a}^t,\theta^t)\right) \leq \sum_{\tau=1}^k \left( u(\Or{x^{1:\tau}},x^\tau)-u(a^*,x^\tau)\right) \leq  \sum_{\tau=1}^k \left(f_+(x^\tau)+f_-(x^\tau)\right)
\end{equation*}
\end{proof}

\end{proofof}

\section{Omitted Proofs from Section~\ref{sec:demand}}
\subsection{Proof of Theorem~\ref{thm:demand-regret-bound}}\label{sec:app-demand-thm}

\textbf{Stability bound.} The most important part of the analysis is providing the stability bound $g(t)$, for each time-step $t$, given the perturbation we used. We will show it in the following lemma. 
\begin{lemma}[Stability lemma]\label{lem:demand-stability} For any sequence $\theta^{1:T}$ such that $0\leq\theta_j^t\leq D$ for all $j\in[m]$, $t\in [T]$, the stability of Algorithm~\ref{defn:lazy-ftpl} when applied to the online \buyers problem and with the single-sample exponential perturbation is upper bounded by:
\begin{equation}
\E_{x}\left[u(M(x\cup \theta^{1:t}),\theta^t)-u( M(x\cup \theta^{1:{t-1}}), \theta^t)\right]\leq  \vsedit{(mD+H)} \left(\frac{m}{t} +3\epsilon mD\right)
\end{equation}
\end{lemma}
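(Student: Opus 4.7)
The plan is to introduce an intermediate oracle call and bound the expected difference in two steps. Writing $W_j^{(t)} := \sum_{\tau=1}^{t-1}\theta_j^\tau$, I can express $S_t := M(\{x\}\cup\theta^{1:t-1})$ and $S_{t+1} := M(\{x\}\cup\theta^{1:t})$ as demand oracle outputs at effective prices $(x_j+W_j^{(t)})/t$ and $(x_j+W_j^{(t)}+\theta_j^t)/(t+1)$ respectively. I introduce the intermediate
\[
\hat S(x) := \arg\max_S\Bigl[t\, v(S) - \sum_{j\in S}(x_j+W_j^{(t)}+\theta_j^t)\Bigr],
\]
which uses the updated cost vector but retains the old normalization $t$, and decompose
\[
u(S_{t+1},\theta^t) - u(S_t,\theta^t) = \bigl[u(S_{t+1},\theta^t) - u(\hat S,\theta^t)\bigr] + \bigl[u(\hat S,\theta^t) - u(S_t,\theta^t)\bigr].
\]
The two brackets will yield the $m/t$ and the $3\epsilon mD$ terms of the claimed bound respectively.

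For the second bracket I use a change of variables: the substitution $y_j = x_j + \theta_j^t$ transforms the definition of $\hat S(x)$ into that of $S_t(y)$, so $\E_x[u(\hat S(x),\theta^t)] = \E_y[u(S_t(y),\theta^t)]$ with $y_j$ a shifted exponential. The density of $y$ agrees with that of $x$ up to a factor $e^{\epsilon\sum_j\theta_j^t}\le e^{\epsilon mD}$ on the common support $\{y\ge\theta^t\}$, while the ``missing'' region $\{0\le x<\theta^t\}$ carries mass at most $\sum_j (1-e^{-\epsilon\theta_j^t})\le \epsilon mD$. Using $|u(\cdot,\theta^t)|\le mD+H$ together with $e^\lambda-1\le 2\lambda$ for $\lambda\le 1$, this bracket contributes at most $3\epsilon mD\cdot(mD+H)$ in absolute value.

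For the first bracket, both $S_{t+1}$ and $\hat S$ use the same total cost vector $C:=x+W^{(t)}+\theta^t$ but different normalizations ($t+1$ versus $t$). By optimality of each set in its respective objective I obtain $v(S_{t+1})\ge v(\hat S)$ and, pointwise, $u(S_{t+1},\theta^t) \ge u(\hat S,\theta^t)$ (a be-the-leader style inequality for this sub-problem, cf.\ Lemma~\ref{lem:btl}). To obtain the expected bound $(mD+H)\cdot m/t$, I control, per item $j$, the probability that $j$'s inclusion status changes between $\hat S$ and $S_{t+1}$ as the normalization of $v(\cdot)$ is continuously increased from $t$ to $t+1$. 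Because $C_j$ inherits the exponential perturbation $x_j$ of density at most $\epsilon$, a density/level-set argument on the critical normalization value at which $j$'s inclusion flips gives $\Pr[j\in S_{t+1}\triangle \hat S]=O(1/t)$. Summing over $j$ and combining with the $(mD+H)$ utility range yields $\E_x[u(S_{t+1},\theta^t)-u(\hat S,\theta^t)]\le (mD+H)\cdot m/t$.

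The main obstacle is the first-bracket estimate. While the second-bracket change-of-variables is a standard smooth-perturbation FTPL calculation, the normalization-change bound requires the structural fact that the demand oracle's output is stable under a uniform rescaling of prices, and this stability has to be quantified via the exponential density while avoiding a naive union bound over the $2^m$ possible argmax sets. This is the crux of the proof and the source of the $m/t$ factor; adding the two contributions yields the stated bound.
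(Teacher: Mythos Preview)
Your second-bracket argument (the translation $y_j=x_j+\theta_j^t$) is fine and does give an $O(\epsilon mD)\cdot(mD+H)$ contribution. The first bracket, however, has a genuine gap.

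First, the pointwise claim $u(S_{t+1},\theta^t)\ge u(\hat S,\theta^t)$ is false. Already for an additive valuation one has $S_{t+1}\supseteq\hat S$, and the extra items $j\in S_{t+1}\setminus\hat S$ satisfy $t v_j\le C_j<(t+1)v_j$; nothing prevents $\theta_j^t>v_j$ for such $j$, in which case their contribution $v_j-\theta_j^t$ to $u(\cdot,\theta^t)$ is negative. In any case a pointwise lower bound would not help, since you need an \emph{upper} bound on this bracket.

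Second, the ``density/level-set'' heuristic does not yield $\Pr[j\in S_{t+1}\triangle\hat S]=O(1/t)$. Take $m=1$ with value $v$ for the single item: the item flips exactly when $C_1\in[tv,(t+1)v)$, i.e.\ when the exponential variable $x_1$ lands in an interval of length $v$. The exponential density is at most $\epsilon$, so this probability is $O(\epsilon v)$, not $O(1/t)$; no choice of $t$ enters. There is simply no mechanism by which the density of $C_j$ (which is $O(\epsilon)$) produces a $1/t$ factor.

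In the paper, the $m/t$ term arises from a completely different source: a \emph{Jacobian}. The paper builds a single affine map
\[
\mu(x)=\tfrac{t}{t+1}\,x-\tfrac{1}{t+1}\sum_{\tau=1}^{t-1}\theta^\tau+\tfrac{t}{t+1}\,\theta^t
\]
with $M(x\cup\theta^{1:t})=M(\mu(x)\cup\theta^{1:t-1})$, and then compares $\text{FTPL}^t$ and $\text{BTPL}^t$ by a change of variables. The scaling by $t/(t+1)$ contributes the Jacobian $(t/(t+1))^m\ge 1-m/t$, the density ratio $f(\mu(x))/f(x)\ge e^{-2\epsilon mD}$ contributes $2\epsilon mD$, and the mass of the region $\{\mu(x)\not\ge 0\}$ contributes another $\epsilon mD$. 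If you want to keep your two-bracket decomposition, the right fix for bracket~1 is likewise a change of variables (the pure rescaling $\tilde\mu(x)=\tfrac{t}{t+1}x-\tfrac{1}{t+1}(W^{(t)}+\theta^t)$), whose Jacobian gives $m/t$; this also incurs an $\epsilon mD$ from the missing region, so bracket~1 yields $(mD+H)(m/t+\epsilon mD)$ and bracket~2 yields $(mD+H)\cdot 2\epsilon mD$, combining to the stated bound. But the per-item switching-probability route, as written, does not work.
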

\begin{proof}
We consider a specific sequence $\theta^{1:T}$ and a specific time-step $t$. \vsedit{First observe that it suffices to prove the lemma for the translated utilities $u'(a,\theta)= u(a,\theta)+mD$, since such a translation preserves utility differences for any two actions and does not alter the optimizing actions in hindsight. These translated utilities lie in $[0,mD+H]$, when $\theta_j\leq D$. Thus we will prove the lemma for utilities $u(a,\theta)$ that lie in $[0,mD+H]$.} For succinctness we we will denote with: 
\begin{align*}
\FTPL^t =~& \E_{x}\left[u( M(x\cup \theta^{1:{t-1}}), \theta^t)\right]\\
\BTPL^t =~& \E_{x}\left[u(M(x\cup \theta^{1:t}),\theta^t)\right]
\end{align*}
the expected reward of \emph{follow the perturbed leader} and \emph{be the perturbed leader}, correspondingly, at time-step $t$. 

We will construct a mapping $\mu:\Theta\rightarrow \Theta$, such that for any random sample $x$, we will have that:
\begin{equation}
M(x\cup \theta^{1:t}) = M(\mu(x)\cup \theta^{1:t-1})
\end{equation}
Observe that the two maximization problems are the same if we have that the average threshold vector ends up being the same, i.e.,
\begin{equation}
\frac{\sum_{\tau=1}^t \theta^\tau+x}{t+1} = \frac{\sum_{\tau=1}^{t-1}\theta^\tau+\mu(x)}{t}
\end{equation}
By re-arranging we get:
\begin{equation}
\mu(x) = \frac{t}{t+1} x - \frac{1}{t+1} \sum_{\tau=1}^{t-1}\theta^\tau + \frac{t}{t+1}\theta^t
\end{equation}
Observe that $\mu(x)$ is a bijection from $\Theta^+\triangleq\{x\in \Theta:\mu(x)\geq 0\}$ to $\Theta$ (where by $\mu(x)\geq 0$ we mean a coordinate-wise comparison). Thus we can write:
\begin{align*}
\FTPL^t=~& \int_{x\in \Theta}u(M(x\cup \theta^{1:t-1}),\theta^t)f(x) dx\\
=~& \int_{y\in \Theta^+}u(M(\mu(y)\cup \theta^{1:t-1}),\theta^t) f(\mu(y)) |det(\nabla\mu)(y)| dy\\
=~&\int_{y\in \Theta^+}u(M(y\cup \theta^{1:t}),\theta^t) f(\mu(y)) \left(\frac{t}{t+1}\right)^m dy\\
=~&\int_{x\in \Theta^+}u(M(x\cup \theta^{1:t}),\theta^t) f(\mu(x)) \left(\frac{t}{t+1}\right)^m dx
\end{align*}
Now observe that for any $x\in \Theta^+$:
\begin{align*}
f(\mu(x)) =~& \exp\left\{-\epsilon\left(\|\mu(x)\|_1-\|x\|_1\right)\right\}f(x)\\
=~& \exp\left\{-\epsilon\left(\left\|\frac{t}{t+1} x - \frac{1}{t+1} \sum_{\tau=1}^{t-1}\theta^\tau + \frac{t}{t+1}\theta^t\right\|_1 - \|x\|_1\right)\right\} f(x)\\
\geq~&\exp\left\{-\epsilon\left(\frac{t}{t+1} \|x\|_1 + \frac{1}{t+1} \left\|\sum_{\tau=1}^{t-1}\theta^\tau\right\|_1 + \frac{t}{t+1}\|\theta^t\|_1 - \|x\|_1\right)\right\} f(x)\\
\geq~&\exp\left\{-\epsilon 2 mD\right\} f(x)\\
\geq~&\left(1-2\epsilon mD\right) f(x)
\end{align*}
Moreover, it is easy to check that $\left(\frac{t}{t+1}\right)^m \geq 1-\frac{m}{t}$ and that for any two non-negative numbers $x,y$, $(1-x)(1-y)\geq 1-x-y$. Last, we remind \vsedit{that $0\leq u(a,\theta^t)\leq mD+H$} for any $a\in A$. Plugging the above lower bounds in the integral we get:
\begin{align}
\FTPL^t\geq~& \int_{x\in \Theta^+}u(M(x\cup \theta^{1:t}),\theta^t) \left(1-2\epsilon mD\right)\left(1-\frac{m}{t}\right) f(x)  dx\nonumber\\
\geq~& \int_{x\in \Theta^+}u(M(x\cup \theta^{1:t}),\theta^t) \left(1-2\epsilon mD-\frac{m}{t}\right) f(x)  dx\nonumber\\
\geq~& \int_{x\in \Theta}u(M(x\cup \theta^{1:t}),\theta^t) \left(1-2\epsilon mD-\frac{m}{t}\right)  f(x)  dx\nonumber - \vsedit{(mD+H)}\cdot\Pr[x\notin \Theta^+]\nonumber\\
\geq~& \int_{x\in \Theta}u(M(x\cup \theta^{1:t}),\theta^t) f(x) dx- \vsedit{(mD+H)}\left(2\epsilon mD+\frac{m}{t}+ \Pr[x\notin \Theta^+]\right)\nonumber\\
=~&\BTPL^t - \vsedit{(mD+H)}\left(2\epsilon mD+\frac{m}{t}+\Pr[x\notin \Theta^+]\right)\label{eqn:integral-lower-bound}
\end{align}
Now, let us examine the quantity $\Pr[x\notin\Theta^+]=1-\Pr[\mu(x)\geq 0]$. The inequality $\mu(x)\geq 0$ is equivalent to:
\begin{equation*}
t x - \sum_{\tau=1}^{t-1}\theta^\tau + t\theta^t \geq 0 \Leftrightarrow x\geq \frac{1}{t}\sum_{\tau=1}^{t-1}\theta^\tau - \theta^t
\end{equation*}
Observe that: $D\geq \frac{1}{t}\sum_{\tau=1}^{t-1}\theta^\tau - \theta^t$, thereby, the condition $\{x\geq D\}$ coordinate-wise, implies the condition $\{\mu(x)\geq 0\}$. Thus:
\begin{align*}
\Pr[\mu(x)\geq 0]\geq \Pr[x\geq D] = \prod_{j=1}^m \Pr[x_j\geq D] = e^{-\epsilon D m}\geq 1-\epsilon D m.
\end{align*}
Thus we get:
\begin{align}\label{eqn:exponential-cdf}
\Pr[x\notin\Theta^+] = 1-\Pr[\mu(x) \geq 0]\leq \epsilon D m
\end{align}
Plugging the bound from Equation \eqref{eqn:exponential-cdf} into Equation \eqref{eqn:integral-lower-bound} we get:
\vsedit{\begin{align*}
\FTPL^t\geq~&\BTPL^t- (mD+H) \left(3\epsilon mD+\frac{m}{t}\right)
\end{align*}}
This concludes the proof of the stability property.
\end{proof}

Now we can apply Theorem~\ref{thm:general-regret} to get a concrete bound for the follow-the-perturbed leader algorithm with the single sample exponential perturbation.

\begin{proofof}{Theorem~\ref{thm:demand-regret-bound}}
By applying Theorem~\ref{thm:general-regret} and Lemma~\ref{lem:demand-stability}, we get that the regret is upper bounded by:
\begin{align*}
\sum_{t=1}^T g(t) +  \E_{x}\left[f_+(x)+f_-(x)\right]\leq~& \sum_{t=1}^T \vsedit{(mD+H)} \left(\frac{m}{t} +3\epsilon mD\right) + \E_{x}\left[H+\|x\|_1\right]\\
\leq~& \vsedit{(mD+H)} \sum_{t=1}^T\frac{m}{t} +3\vsedit{(mD+H)}\epsilon mDT + H+ \frac{m}{\epsilon}\\
\leq~& \vsedit{(mD+H)} m (\log(T) + 1) +3\vsedit{(mD+H)}\epsilon mDT + H+ \frac{m}{\epsilon}
\end{align*}
Picking $\epsilon=\sqrt{\frac{1}{\vsedit{(mD+H)} D T}}$, we get:
\begin{align*}
\sum_{t=1}^T g(t) +  \E_{x}\left[f_-(x)+f_+(x)\right]\leq~&
 \vsedit{(mD+H)} m (\log(T) + 1)+H +4m \sqrt{\vsedit{(mD+H)}DT}
\end{align*}
\end{proofof}

\section{Omitted Proofs from Section~\ref{sec:convex}}
\label{sec:app-convex}

\subsection{Proof of Lemma~\ref{lem:buyers-learning}}

\paragraph{From the Online \Buyers Problem to Online Convex Optimization.}
Suppose that the buyer picks a set at each iteration at random from a distribution where each item $j$ is included independently with probability $x_j$ to the set. Any such distribution is characterized by its vector of marginals $x\in [0,1]^m$. For succinctness we will be denoting with $\Po=[0,1]^m$. The expected utility of the buyer at time-step $t$, from picking a set $S^t$ from such a distribution with marginals $x^t$ can be written as:
\begin{equation}
\E_{S^t \sim x^t}\left[u(S^t,\theta^t)\right] = \E_{S^t\sim x^t}\left[v(S^t)\right] - \langle\theta^t, x^t\rangle,
\end{equation}
where we denote with $\langle x, y\rangle$ the inner product between vectors $x$ and $y$.
The function $V(x) \equiv \E_{S\sim x}\left[v(S)\right]$ is what is called the {\em multi-linear extension of set function $v$}. Thus we can write the expected reward of the buyer from distribution $x^t$ as:
\begin{equation}
\E_{S^t \sim x^t}\left[u(S^t,\theta^t)\right] = V(x^t) - \langle\theta^t, x^t\rangle.
\end{equation}
When $V: \Po\rightarrow \R_+$ is a concave function, we can solve the online problem that the buyer faces by invoking some online convex optimization algorithm such as the online gradient descent with projections of \cite{Zinkevich2003}. All we need in order to invoke such an algorithm in this case is that we can compute the gradient of $V(x)$. 
If both conditions were satisfied, it would be easy to get a polynomial-time learning algorithm for computing the distribution of sets $x^t$ at each time-step so as to achieve polynomial regret  rates for the online \buyers problem. 
%

However, the multi-linear extension is not a concave function for most natural classes of valuation functions. This same issue arises when designing truthful-in-expectation mechanisms for welfare maximization, such as in \cite{Dughmi2011}. We will exploit the approach taken in that paper to solve the problem we face here. Instead of set-distributions that are independent across items, the idea is to use correlated set-distributions that both convexify the expected utility of the buyer, and maintain the polynomial-time computability of its gradient. 

\paragraph{Convex Rounding.}
For any item-independent distribution $x^t$ we will construct a distribution $D(x^t)$ and then draw a set $S^t$ from $D(x^t)$. The distribution that we will construct will have the following properties:
\begin{defn}[Convex rounding scheme.] \label{def:convex rounding}A mapping $D:[0,1]^m\rightarrow \Delta(2^m)$ is an $\alpha$-approximate convex rounding scheme if:
\begin{enumerate}
\item The function $F(x) = \E_{S\sim D(x)}\left[v(S)\right]$ is a concave function of $x$.
\item The gradient of $F$ is polynomial-time computable. 
\item If we denote by $y_j(x)$ the probability that item $j$ is included in $S$ when $S$ is drawn from distribution $D(x)$, then  $y_j(x)\leq x_j$.
\item For any integral $x$, which corresponds to a set $S$: $F(x) \geq \frac{1}{\alpha} v(S)$.
\end{enumerate}
\end{defn}

If we can find such an \emph{$\alpha$-approximate convex rounding scheme}, then we show that we can construct an $\alpha$-approximate no-envy algorithm for the online \buyers problem.
\begin{lemma}\label{lem:online-convex}
If the valuation function $v(\cdot)$ admits an $\alpha$-approximate convex rounding scheme with $\sup_{x} \|\nabla F(x)\|_2\leq H$, then there exists a polynomial-time computable learning algorithm for the online \buyers problem which guarantees:
\begin{equation}
\E\left[\frac{1}{T}\sum_{t=1}^T u(S^t,\theta^t)\right]\geq \max_{S}\left(\frac{1}{\alpha}v(S) - \sum_{j\in S}\hat{\theta}_j^T\right) -3(H+\sqrt{mK})\sqrt{\frac{m}{T}}.\end{equation}
where $K=\max_{t\in [T],j\in [m]}\theta_j^t$.
\end{lemma}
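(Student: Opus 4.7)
The strategy is to reduce the online \buyers problem to a \emph{concave} online optimization problem over the hypercube $\Po = [0,1]^m$, using the convex rounding scheme to repair the non-concavity of the natural multi-linear extension. At each round $t$, the learner maintains an iterate $x^t \in \Po$ and plays the random set $S^t \sim D(x^t)$ before $\theta^t$ is revealed. After observing $\theta^t$, it updates $x^t$ by one step of Zinkevich's online projected gradient ascent applied to the per-round gain $g^t(x) := F(x) - \langle \theta^t, x\rangle$; this function is concave by property~1 of Definition~\ref{def:convex rounding}, and its supergradient $\nabla F(x^t) - \theta^t$ is computable in polynomial time by property~2. Projection onto $\Po$ is coordinate-wise clipping, so each step runs in polynomial time.

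The proof then chains three inequalities. First, conditional on $x^t$, the expected realized utility of the algorithm satisfies
\[
\E\bigl[u(S^t,\theta^t)\bigm| x^t\bigr] \;=\; F(x^t) - \sum_{j}\theta_j^t\, y_j(x^t) \;\ge\; F(x^t) - \langle \theta^t, x^t\rangle \;=\; g^t(x^t),
\]
using property~3 ($y_j(x^t)\le x_j^t$) and $\theta^t \ge 0$; hence the algorithm's expected cumulative utility lower-bounds the OGA cumulative gain $\sum_t g^t(x^t)$. Second, for any set $S$, its indicator $\mathbf{1}_S$ is a feasible comparator in $\Po$, and property~4 gives $F(\mathbf{1}_S) \ge \tfrac{1}{\alpha}v(S)$, so $g^t(\mathbf{1}_S) \ge \tfrac{1}{\alpha}v(S) - \sum_{j\in S}\theta_j^t$; averaging over $t$ turns the right-hand side into the no-envy benchmark $\tfrac{1}{\alpha}v(S) - \sum_{j\in S}\hat\theta_j^T$.

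The third step is Zinkevich's regret guarantee for online projected gradient ascent on concave per-round gains over a convex set of $\ell_2$-diameter $R$ with $\ell_2$-bounded supergradients $\|\nabla g^t\|_2 \le G$: tuning $\eta = R/(G\sqrt{T})$ yields cumulative regret at most $RG\sqrt{T}$ against every fixed comparator in the set. For us $R \le \sqrt{m}$, and a triangle inequality using $\|\nabla F(x)\|_2 \le H$ together with the per-coordinate bound $\theta_j^t \le K$ produces a supergradient bound of the order $H + \sqrt{mK}$. Dividing by $T$ and combining with the two inequalities above yields the claimed error term $3(H+\sqrt{mK})\sqrt{m/T}$, with the constant $3$ absorbing slack from step-size tuning and the triangle inequality. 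The only place where care is needed is to confirm that the switch from independent rounding (where the multi-linear extension is \emph{not} concave) to the correlated rounding $D(\cdot)$ preserves the domination $\E[u(S^t,\theta^t)\mid x^t]\ge g^t(x^t)$, which is exactly what property~3 of Definition~\ref{def:convex rounding} is designed to guarantee, so the obstacle dissolves.
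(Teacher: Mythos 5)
Your proposal is correct and follows essentially the same route as the paper: run projected online gradient ascent on the concave surrogate $F(x)-\langle\theta^t,x\rangle$, lower-bound the realized expected utility via $y_j(x)\le x_j$, compare against integral points using $F(\mathbf{1}_S)\ge \tfrac{1}{\alpha}v(S)$, and plug in the diameter bound $\sqrt{m}$ and gradient bound $H+\sqrt{mK}$ (the paper invokes the Zinkevich/Hazan analysis with the constant $3$ directly, which your step-size tuning remark accommodates).
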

\begin{proof}
Our online buyer algorithm will invoke the projected gradient descent algorithm of \cite{Zinkevich2003} (see also \cite{Hazan2006} for a slightly improved analysis) applied to the following convex optimization problem:
At each iteration the algorithm picks a vector $x^t\in [0,1]^m$ and receives a reward of:
\begin{equation}
f^t(x^t) = F(x^t) - \langle \theta^t, x^t\rangle
\end{equation}
At each iteration, our online \buyers algorithm calls projected gradient descent and receives a prediction $x^t$. Then our algorithm draws a set $S^t\sim D(x^t)$. Then it feeds back to projected gradient descent the parameter vector $\theta^t$ that it observed or equivalently the function $f^t$ that it observed. 

We first analyze how projected gradient works, so as to show that we can run the algorithm in polynomial time. Then we state the algorithms regret guarantees and show how these guarantees translate to guarantees about the online \buyers problem.

Projected gradient descent works as follows: Let $\Pi_{\Po}(y)=\arg\min_{x\in \Po}\|x-y\|_2$ denote the $L_2$ projection of $y$ onto $\Po$. At each iteration the projected gradient descent plays:
\begin{equation}
x^t=\Pi_{\Po}\left(x^{t-1}+\eta_t \nabla f^{t-1}(x^{t-1})\right)=\Pi_{\Po}\left( x^{t-1}+\eta_t \nabla F(x^{t-1}) - \eta_t \theta^t\right).
\end{equation}
Our polytope $\Po=[0,1]^m$ is so simple that the $L_2$ projection step takes the closed form:
\begin{equation}
\Pi_{\Po}^j(y) = \max\{0,\min\{1,y_j\}\}
\end{equation}
and thereby is efficiently computable. The gradient step is also efficiently computable if we can compute the gradient of $F$ at any point $x$ in polynomial time.

\cite{Zinkevich2003} provided the first bounds on projected gradient descent. We borrow the slightly improved bounds of \cite{Hazan2006}. If we denote with $D=\max_{x,y\in \Po}\|x-y\|_2$ and with $G=\sup_{x\in \Po, t\in [T]} \|\nabla f^t(x)\|_2$, then we get that that the average regret of projected gradient descent with $\eta_t=\frac{G}{D\sqrt{t}}$ is upper bounded by $3GD\sqrt{\frac{1}{T}}$, i.e. for any sequence of thresholds $\theta^{1:T}$:
\begin{equation}
\max_{x^*\in \Po}\frac{1}{T} \sum_{t=1}^T \left(f^t(x^*)-f^t(x^t)\right) \leq 3GD \sqrt{\frac{1}{T}}
\end{equation}
In our setting, we have that: $D = \max_{x,y\in [0,1]^m}\|x-y\|_2\leq \sqrt{m}$ and we have that:
\begin{align*}
\|\nabla f^t(x)\|_2 = \| \nabla F(x) - \theta^t\|_2\leq \|\nabla F(x)\|_2 + \|\theta^t\|_2
\end{align*}
Thus if we denote with $H=\sup_{x} \|\nabla F(x)\|_2$ and with $K=\max_{t\in [T], j\in [m]}\theta_j^t$, then we get that:
\begin{equation}
\max_{x^*\in \Po}\frac{1}{T} \sum_{t=1}^T \left(f^t(x^*)-f^t(x^t)\right) \leq 3(H+\sqrt{m K})\sqrt{\frac{m}{T}}
\end{equation}

Finally, we show how this regret guarantee of projected gradient descent maps back to a guarantee for our algorithm for the online \buyers problem. Observe that since by the definition of a convex rounding scheme, $y_j(x)\leq x_j$, we have that at each time-step $t$:
\begin{equation}
\E_{S^t\sim D(x^t)}\left[u(S^t,\theta^t)\right] = F(x^t) - \sum_{j\in [m]} \theta_j^t \cdot y_j(x^t) \geq F(x^t) - \sum_{j\in [m]} \theta_j^t \cdot x_j^t = F(x^t) - \langle \theta^t, x^t\rangle
\end{equation}
Combining the latter with the regret guarantees of projected gradient descent we get that for any sequence of adaptively chosen $\theta^{1:T}$: 
\begin{align*}
\frac{1}{T}\sum_{t=1}^T\E_{S^t\sim D(x^t)}\left[u(S^t,\theta^t)\right] \geq~& \frac{1}{T} \sum_{t=1}^T \left(F(x^t) - \langle \theta^t, x^t\rangle\right)\\
\geq~& \max_{x\in \Po} \left( F(x) -  \left\langle\frac{1}{T} \sum_{t=1}^T\theta^t, x\right\rangle \right)- 3(H+\sqrt{m K})\sqrt{\frac{m}{T}}
\end{align*}
Invoking the $\alpha$-approximate property of the convex rounding scheme we then conclude: 
\begin{align*}
\frac{1}{T}\sum_{t=1}^T\E_{S^t\sim D(x^t)}\left[u(S^t,\theta^t)\right]\geq~& \max_{x\in \Po} \left( F(x) -  \left\langle\frac{1}{T} \sum_{t=1}^T\theta^t, x\right\rangle \right)- 3(H+\sqrt{m K})\sqrt{\frac{m}{T}}\\
\geq~&\max_{S} \left( \frac{1}{\alpha}v(S) -  \sum_{j\in S} \hat{\theta}_j^T \right)- 3(H+\sqrt{m K})\sqrt{\frac{m}{T}}
\end{align*}
\end{proof}

\cite{Dughmi2011} gave a $\left(1-\frac{1}{e}\right)$-approximate convex-rounding scheme for the case of coverage valuations, through the process that they call Poisson rounding. 
\begin{defn}[Poisson rounding] Given an input $x$ define $D(x)$ as follows: for each item $j$ independently include $j$ in the output set with probability: $1-\exp\{-x_j\}$. 
\end{defn}

We show that it satisfies all the properties that we need for the case of explicitly given coverage valuations. Some of these properties were not shown in \cite{Dughmi2011} as they were not needed, hence we provide a complete proof. 
\begin{lemma}\label{lem:poisson-convex} Poisson rounding is a $\frac{e}{e-1}$-approximate convex rounding scheme for explicitly given coverage valuations. 
Moreover, $\sup_{x\in \Po} \|\nabla F(x)\|_2 \leq \max_{j\in [m]}v(\{j\})\sqrt{m}$.
\end{lemma}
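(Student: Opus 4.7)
The plan is to verify the four defining properties of an $\alpha$-approximate convex rounding scheme one-by-one for Poisson rounding, using the closed form for $F(x)$ that the hypergraph description of a coverage valuation makes available.

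First I would derive an explicit formula for $F(x)$. For a coverage valuation given by hypergraph $\G=(V,E)$ with item $j$ corresponding to hyperedge $\beta_j\subseteq V$ and vertex weights $a_\ell\geq 0$, the value $v(S)$ is the sum of $a_\ell$ over vertices $\ell$ covered by some $j\in S$. Under Poisson rounding with marginals $x$, the probability that vertex $\ell$ is covered is $1-\prod_{j:\ell\in\beta_j}e^{-x_j}$, so by linearity of expectation
\begin{equation}
F(x)=\sum_{\ell\in V}a_\ell\left(1-e^{-z_\ell(x)}\right),\qquad\text{where } z_\ell(x)=\sum_{j:\ell\in\beta_j}x_j.
\end{equation}
Since $z_\ell$ is a nonnegative linear function of $x$ and $t\mapsto 1-e^{-t}$ is concave, each summand is concave in $x$; the nonnegative-weighted sum $F$ is therefore concave, which is property (1). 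Polynomial-time computability of $\nabla F$ (property (2)) is immediate from the closed form: $\partial F/\partial x_j=\sum_{\ell\in\beta_j}a_\ell e^{-z_\ell(x)}$, which is a sum over the vertices of hyperedge $\beta_j$ and can be evaluated explicitly.

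Next I would check the marginal property (3). The probability that item $j$ is placed in $S$ under Poisson rounding is $y_j(x)=1-e^{-x_j}$; the elementary inequality $e^{-t}\geq 1-t$ for $t\geq 0$ gives $y_j(x)\leq x_j$, as required. For the approximation property (4), fix an integral $x=\mathbf{1}_S$. Then $z_\ell(x)$ equals the number of items in $S$ whose hyperedge contains $\ell$, so $z_\ell(x)\geq 1$ exactly when $\ell$ is covered by $S$ and $z_\ell(x)=0$ otherwise. Hence
\begin{equation}
F(\mathbf{1}_S)=\sum_{\ell\text{ covered by }S}a_\ell\left(1-e^{-z_\ell(x)}\right)\geq\left(1-\tfrac{1}{e}\right)\sum_{\ell\text{ covered by }S}a_\ell=\left(1-\tfrac{1}{e}\right)v(S),
\end{equation}
which is exactly the $\alpha=e/(e-1)$ guarantee.

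Finally, for the gradient bound, I use the closed form for $\partial F/\partial x_j$ and the fact that $e^{-z_\ell(x)}\leq 1$:
\begin{equation}
\left|\frac{\partial F(x)}{\partial x_j}\right|=\sum_{\ell\in\beta_j}a_\ell e^{-z_\ell(x)}\leq\sum_{\ell\in\beta_j}a_\ell=v(\{j\})\leq\max_{j'\in[m]}v(\{j'\}).
\end{equation}
Summing the squares over the $m$ coordinates yields $\|\nabla F(x)\|_2\leq\max_{j\in[m]}v(\{j\})\sqrt{m}$. No step here presents a real obstacle; the main conceptual point is simply the identification of $F$'s closed form via the per-vertex coverage probabilities, after which concavity, the $(1-1/e)$ rounding loss, and the gradient bound all follow from one-line estimates on $1-e^{-t}$ and $e^{-t}$.
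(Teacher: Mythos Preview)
Your proof is correct and follows the same approach as the paper. The only difference is that the paper defers properties (1), (2), and (4) of the convex rounding scheme to \cite{Dughmi2011} and only writes out property (3) and the gradient bound explicitly, whereas you derive all four properties directly from the closed form $F(x)=\sum_{\ell}a_\ell(1-e^{-z_\ell(x)})$; your gradient-bound argument is identical to the paper's.
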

\begin{proof}
The following properties where shown by \cite{Dughmi2011}, hence we re-direct the reader to that paper: i) the function $F(x) = \E_{S\sim D(x)}[v(S)]$ is concave, ii) it has a polynomially computable gradient given the hyper-graph representation of a coverage valuation as input, iii) the scheme satisfies that for any integral $x$ associated with a set $S$, $F(x)\geq \left(1-\frac{1}{e}\right)v(S)$. Last, it is easy to see that $y_j(x) = 1-\exp\{-x_j\}\leq x_j$. Thus all four properties of a convex rounding scheme are satisfied.

It remains to upper bound the quantity $\sup_{x\in \Po}\|\nabla F(x)\|_2$. For an explicitly given weighted hypergraph representation $\G=(V,E)$ of a coverage valuation we have:
\begin{equation}
F(x) = \sum_{v\in V} w_v \left(1-\prod_{j\in E: v\in j} \exp\{-x_j\}\right)=\sum_{v\in V} w_v \left(1-\exp\left\{-\sum_{j\in E: v\in j} x_j\right\}\right)
\end{equation}
Thus we have that:
\begin{equation}
\nabla_j F(x) = \sum_{v\in V: v\in j} w_v \exp\left\{ -\sum_{j'\in E: v\in j'} x_{j'}\right\} \leq \sum_{v\in V: v\in j} w_v = v(\{j\})
\end{equation}
Hence:
\begin{equation}
\|\nabla F(x)\|_2 \leq \sqrt{\sum_{j\in [m]} v(\{j\})^2}\leq \max_{j\in [m]} v(\{j\})\cdot \sqrt{m}
\end{equation}
\end{proof}

Combining Lemma~\ref{lem:online-convex} and Lemma~\ref{lem:poisson-convex} we get Lemma~\ref{lem:buyers-learning}.

\section{Oracle Learning and Finite Parameter Space}
\label{sec:app-finite}\label{sec:finite-parameter}

In this section we examine a special case of the general online learning problem, defined in Section \ref{sec:oracles}, with the restriction that the adversary can pick among only a small set of $d$ parameters. Then we show how this general results implies specific results for no-regret learning in the online bidding problem and no-regret learning in the security games model of \cite{Balcan2015}. Specifically, we show that in any online learning problem where the adversary has only $d$ choices to pick from on each day, i.e. $|\Theta|=d$, then with an access to an offline optimization problem an instantiation of Algorithm~\ref{defn:lazy-ftpl}, can achieve regret of $2H\sqrt{d/T}$, assuming utilities are bounded in $[0,H]$. This regret is independent of the number of available actions! Our result could be of independent interest and nicely complements recent impossibility results of \cite{Hazan2015}. 


\paragraph{Regret Analysis.} We will assume that $u(a,\theta)\in [0,H]$, which is satisfied in our online bidding problem for any non-overbidding action $a$ and when the value of the player is at most $H$. Observe that when the parameter space is finite, then we can map each sequence $\theta^{1:t}$ to a vector $\phi^t\in \N_+^d$, representing how many times each parameter $\theta\in \Theta$ has arrived in the first $t$ time-steps. Then we will denote with $\phi^t$ this frequency vector at time-step $t$. Thus we can equivalently write the cumulative utility from a fixed action in hindsight in terms of this frequency vector as:
\begin{equation}
U(a,\phi^t) = \sum_{\theta\in \Theta} \phi_{\theta}^t\cdot  u(a,\theta)
\end{equation}
We will interchangeably use this notation and the old $U(a,\theta^{1:t})$ notation in this section. Equivalently, we will be writing for any frequency vector $\phi\in \N_{+}^d$: 
\begin{equation}
M(\phi) = \arg\max_{a\in A} U(a,\phi)
\end{equation}
for the optimization oracle as a function of the frequency vector, rather than the actual sequence and we will assume that we have access to such an oracle $M$.

We will show that, given such an optimization oracle, there exists an efficient no-regret algorithm which achieves regret $O(H\sqrt{d T})$ against adaptive adversaries. We consider the instantiatiation of Algorithm~\ref{defn:lazy-ftpl} with the following sample perturbation.
\begin{defn}[Geometric sample perturbations] We construct the random sequence $\{x\}$ as follows: we go over each possible parameter $\theta\in \Theta$. For each possible parameter vector we flip a coin with probability $p$ of heads. If it comes up tails then we add an extra fake observation of that parameter to the sequence $\{x\}$ and continue flipping the coin adding an extra parameter at each time we see tails. If the coin ever comes up heads, then we stop adding vectors from that parameter and we move on to the next parameter.
\end{defn}
Typically we will be setting the parameter $p$ of the order of $1/\sqrt{T}$, so that in expectation we add $\sqrt{T}$ extra occurrences to each possible parameter. 
An equivalent way of viewing our sample perturbations is as if we are adding a random vector $z$ to the frequency vector $\phi^t$, where each coordinate $z_\theta$ of $z$ is drawn independently from a geometric distribution with parameter $p$ and supported on $\{0,1,2,\ldots\}$. Under this interpretation at each time-step $t$ the algorithm is picking action $a^t=M(\phi^{t-1}+z)$. We will interchangeably use this interpretation of the perturbations in the following proof.

\begin{theorem}[Efficient oracle-based no-regret learning]\label{thm:finite-parameter-regret}
Follow the perturbed leader with geometric sample perturbations is efficiently computable, given access to an optimization oracle, and achieves regret against any adaptive adversary:
\begin{equation}
\max_{a\in A} \sum_{t=1}^T \E\left[u(a,\theta^t)-u(a^t,\theta^t)\right]\leq 2H\sqrt{d T}
\end{equation}
\end{theorem}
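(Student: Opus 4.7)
The plan is to invoke Theorem~\ref{thm:general-regret} with the geometric sample perturbation, so two quantities must be controlled: the per-step stability $g(t)$ and the expected ``fake history'' contribution $\E_{\{x\}}[\sum_{\tau}(f_-(x^\tau)+f_+(x^\tau))]$. Since $u\in[0,H]$ we may take $f_-\equiv 0$ and $f_+\equiv H$, and the expected length of the fake sequence is $d$ times the mean of a geometric on $\{0,1,2,\ldots\}$ with parameter $p$, namely $d(1-p)/p\le d/p$. Hence the extra-mass term in Theorem~\ref{thm:general-regret} is at most $dH/p$.

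The crux is the stability bound, for which I would exploit the memorylessness of the geometric distribution. View the perturbation as a random vector $z\in\N_+^d$ with independent geometric coordinates, so that FTPL plays $\Or{\phi^{t-1}+z}$ while BTPL plays $\Or{\phi^{t-1}+e_{\theta^t}+z}$. The key observation is the distributional identity: for each $\theta$, the law of $z_\theta$ conditional on $z_\theta\ge 1$ coincides with the law of $z_\theta+1$; equivalently, $z+e_{\theta^t}$ is distributed as $z$ conditioned on $\{z_{\theta^t}\ge 1\}$. Letting $A=\E[u(\Or{\phi^{t-1}+z},\theta^t)\mid z_{\theta^t}\ge 1]$ and $B=\E[u(\Or{\phi^{t-1}+z},\theta^t)\mid z_{\theta^t}=0]$, direct conditioning gives
\begin{equation*}
\E[u(\Or{\phi^{t-1}+z+e_{\theta^t}},\theta^t)]-\E[u(\Or{\phi^{t-1}+z},\theta^t)] \;=\; A-\bigl(pB+(1-p)A\bigr)\;=\;p(A-B),
\end{equation*}
and since both $A,B\in[0,H]$ we obtain the uniform stability estimate $g(t)=pH$. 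Conceptually this is a total-variation bound $d_{TV}(z,z+e_{\theta^t})\le p$, and it is this that makes the argument go through for arbitrary unstructured $A$ and arbitrary bounded $u$.

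Assembling the pieces, Theorem~\ref{thm:general-regret} bounds the regret against oblivious adversaries by $pHT+dH/p$; choosing $p=\sqrt{d/T}$ balances the two terms and yields $2H\sqrt{dT}$. The bound against adaptive adversaries follows from the oblivious-to-adaptive reduction recalled in Appendix~\ref{sec:app-oblivious}, whose hypothesis (independent perturbation redraws each round) is built into Algorithm~\ref{defn:ftpl}. Computational efficiency is immediate: sampling $z$ takes time $O(d/p)$ in expectation and the per-round work is one call to $M$ on an augmented frequency vector.

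The main obstacle is the stability bound, as standard Lipschitz-type arguments are unavailable when $A$ is unstructured and $u$ is arbitrary; the proof essentially must hinge on a distributional identity of the perturbation, and geometric noise is what gives us the clean ``shift by one equals condition on $\ge 1$'' property that converts a discrete shift of the history into a small TV perturbation. All other steps, including the balancing of $p$ and the reduction to oblivious adversaries, are routine once this stability lemma is in hand.
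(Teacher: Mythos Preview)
Your proposal is correct and follows essentially the same approach as the paper: invoke Theorem~\ref{thm:general-regret}, bound the fake-history term by $dH/p$, establish the stability bound $g(t)\le pH$ via the memorylessness of the geometric perturbation, and balance with $p=\sqrt{d/T}$. The only cosmetic difference is that the paper proves stability by an explicit change of variables $\tilde{k}=k+e_{\theta^t}$ in the sum and the ratio $\Pr[z=\tilde{k}-e_{\theta^t}]=(1-p)^{-1}\Pr[z=\tilde{k}]$, whereas you phrase the identical computation as the distributional identity $z+e_{\theta^t}\stackrel{d}{=}(z\mid z_{\theta^t}\ge 1)$ and a conditioning decomposition; both yield the same $pH$ bound.
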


\subsection{Application to No-Regret in Online Bidding}
In this section we examine whether we can actually achieve the stronger no-regret property, rather than the no-envy property, in the online bidding problem against adversaries, assuming that we are given an offline bidding optimization oracle. 

An easy corollary of our general theorem in the previous section is that if the adversary can pick one among $d$ possible threshold vectors on each day, then there exists an efficient no-regret algorithm for the online bidding problem against adversaries, assuming that we have oracle access to the offline optimization problem. Observe that this problem is highly non-trivial, as without the optimization oracle, our impossibility result covers such settings, since in the set cover reduction, the number of different price vectors used was at most $r\cdot m$, which is finite and polynomial in the description of the instance. Yet, without an oracle there does not exist any polynomial time no-regret algorithm. We will show that in such settings an optimization oracle bypasses this impossibility.

\begin{corollary}\label{thm:bidding-oracles}
Consider a bidder with any valuation $v(\cdot)\leq H$ participating in a sequence of simultaneous second-price auctions and assume that the number of different threshold vectors that can arise each day is at most $d$. Assuming access to a bidding oracle against explicitly given distributions, there exists a polynomial-time computable learning algorithm for computing the bidder's bid vector $b^t$ at every time step $t$ such that after $T$ iterations the regret of the bidder is bounded by:
\begin{equation}
\E_{b^{1:T}}\left[\frac{1}{T}\sum_{t=1}^{T} \left(u(b^*,\theta^t)-u(b^t,\theta^t)\right)\right] \leq 2H\sqrt{\frac{d}{T}},
\end{equation}
\end{corollary}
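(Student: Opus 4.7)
\begin{proofof}{Corollary~\ref{thm:bidding-oracles}}
The plan is to view the online bidding problem as a direct instantiation of the general online learning framework of Section~\ref{sec:oracles} with a finite parameter space, and then appeal to Theorem~\ref{thm:finite-parameter-regret}. Concretely, the action set $A$ is the (possibly continuous) set of bid vectors the learner may submit, the parameter set $\Theta$ is the set of at most $d$ distinct threshold vectors the adversary is allowed to play, and the payoff function $u(a,\theta)$ is the SiSPA utility $v(S(a,\theta)) - \sum_{j \in S(a,\theta)}\theta_j$. Without loss of generality we restrict the learner to non-overbidding bids (any best response against an explicit distribution is dominated by a non-overbidding one, and the bidding oracle can be assumed to return such a bid), so that $u(a,\theta) \in [0,H]$ as required by the theorem.

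The next step is to identify the optimization oracle $M$ with the bidding oracle postulated in the statement. Given a frequency vector $\phi \in \mathbb{N}_+^d$ summarizing the multiset of threshold vectors seen so far, normalizing $\phi$ yields an explicit distribution $D_\phi$ over at most $d$ threshold vectors, and
\[
\arg\max_{a \in A} U(a,\phi) = \arg\max_{a \in A} \mathbb{E}_{\theta \sim D_\phi}[u(a,\theta)],
\]
which is precisely what the bidding oracle returns in polynomial time. Note that the geometric sample perturbation of Algorithm~\ref{defn:lazy-ftpl} only adds integer counts to coordinates of $\phi$, so the perturbed frequency vector still encodes an explicit distribution of polynomial support, and each oracle call remains efficient.

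With this identification in hand, running follow-the-perturbed-leader with geometric sample perturbations (as in Theorem~\ref{thm:finite-parameter-regret}) against the adaptive sequence $\theta^{1:T}$ yields, for any comparator bid vector $b^*$,
\[
\sum_{t=1}^T \mathbb{E}\bigl[u(b^*,\theta^t) - u(b^t,\theta^t)\bigr] \leq 2H\sqrt{dT}.
\]
Dividing by $T$ and taking the supremum over $b^*$ inside the expectation (which only improves the bound by the standard argument that the right-hand side is independent of $b^*$) gives the claimed average regret of $2H\sqrt{d/T}$.

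The only subtlety worth highlighting is the non-overbidding restriction: one must argue that restricting $A$ to non-overbidding bids does not weaken the regret guarantee relative to the natural (unrestricted) benchmark of the statement. This follows because, for any fixed threshold vector $\theta$, the best bid in hindsight against the empirical distribution is a non-overbidding bid without loss of generality (overbidding can only produce weakly lower utility), so the supremum $\sup_{b^*}$ in the regret definition is attained within the restricted action set. All other ingredients — efficiency, the bound $u \in [0,H]$, and the finiteness of $\Theta$ — are immediate from the hypotheses, so there is no real obstacle beyond correctly packaging the bidding problem as an instance covered by Theorem~\ref{thm:finite-parameter-regret}.
\end{proofof}
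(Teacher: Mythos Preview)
Your main approach is correct and matches the paper's (implicit) argument: the corollary is obtained by instantiating Theorem~\ref{thm:finite-parameter-regret} with $A$ the set of bid vectors, $\Theta$ the finite set of at most $d$ threshold vectors, and the bidding oracle playing the role of the optimization oracle $M$, after noting that non-overbidding actions yield $u\in[0,H]$.

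However, your final paragraph contains a false claim. It is \emph{not} true that the best fixed bid against an explicit threshold distribution can always be taken to be non-overbidding. For instance, a unit-demand bidder with value $10$ facing thresholds $(6,100)$ and $(100,6)$ each with probability $\tfrac12$ obtains expected utility $4$ from the overbidding bid $(7,7)$, whereas any non-overbidding bid must satisfy $b_1+b_2\le 10$, can win in at most one of the two scenarios, and achieves expected utility at most $2$. So restricting $A$ to non-overbidding bids genuinely shrinks the comparator class. The corollary, as derived from Theorem~\ref{thm:finite-parameter-regret}, bounds regret only against comparators $b^*\in A$; this is also how the paper reads it (see the sentence preceding Theorem~\ref{thm:finite-parameter-regret} that $u(a,\theta)\in[0,H]$ ``is satisfied in our online bidding problem for any non-overbidding action $a$''). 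You should simply drop the final paragraph rather than attempt to extend the guarantee to unrestricted $b^*$.
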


\subsection{Application to Security Games}
\label{sec:app-security}

Our general online learning results in Section~\ref{sec:finite-parameter} find a nice application in the context of security games and specifically in the repeated Stackelberg model of \cite{Balcan2015}. In this model a defender repeatedly plays every day against a sequence of attackers. On each day he commits to a mixed strategy $a\in \Delta(S)$ on protecting a set of $n$ targets, i.e. $S=2^{[n]}$. The attacker that arrives each day is characterized by a type $\theta$, which comes from a finite type space $\Theta$, with $|\Theta|=k$. Based on his type, the attacker best responds to the mixed strategy $a$ of the defender, through some arbitrary best-response function. Then the utility of the defender is some function of the best-response action of the attacker and the action that he committed. Observe that the action space of the defender is infinite, as it is the space of mixed strategies. \cite{Balcan2015} show that only a finite subset of the actions make sense and then invoke standard learning algorithms, resulting in a regret rate of $O\left(\sqrt{\frac{n^2k\log(nk)}{T}}\right)$, assuming utilities of the defender are bounded in $[0,1]$. Their algorithm however is not polynomially computable as it invokes standard learning algorithms over an exponential action space.

The crucial property is that the utility of the defender on each day can be viewed as only a function of the action $a$ that he committed to and the type of the attacker $\theta$, i.e. $u(a,\theta)$. This function is a complicated non-linear function, but our results in Section~\ref{sec:finite-parameter} do not require any structure from this function. Hence, by applying our Theorem~\ref{thm:finite-parameter-regret} we get the following theorem:
\begin{corollary}
The perturbed leader algorithm with geometric sample perturbations achieves regret rate of $2\sqrt{\frac{k}{T}}$, when applied to the Stackelberg security game of \cite{Balcan2015}. Moreover, it runs in polynomial time assuming an offline best-commitment oracle against an explicitly given distribution of attacker types. 
\end{corollary}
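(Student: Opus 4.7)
The plan is to apply Theorem~\ref{thm:finite-parameter-regret} directly to the Stackelberg security game model, after verifying that its abstract setup (finite parameter space, arbitrary bounded utility function, offline optimization oracle) fits the security game instance. First I would formalize the online learning view: on each day $t$, the defender picks an action $a^t \in \Delta(S)$ (a mixed strategy over subsets of the $n$ targets), while the adversary picks an attacker type $\theta^t \in \Theta$ with $|\Theta|=k$. The crucial observation, already noted in the paragraph preceding the corollary, is that the defender's realized utility depends only on the pair $(a^t,\theta^t)$ through some fixed (possibly complicated and non-linear) function $u(a^t,\theta^t) \in [0,1]$, because the type determines the attacker's best-response and hence the defender's payoff. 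This matches exactly the abstract online learning setting of Section~\ref{sec:oracles}, with action set $A=\Delta(S)$, parameter set $\Theta$ of cardinality $d=k$, and reward bound $H=1$.

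Next I would invoke Theorem~\ref{thm:finite-parameter-regret} with these identifications. That theorem guarantees that Algorithm~\ref{defn:lazy-ftpl} instantiated with geometric sample perturbations attains cumulative regret at most $2H\sqrt{dT}=2\sqrt{kT}$ against any adaptive adversary, which after dividing by $T$ yields the claimed average regret bound of $2\sqrt{k/T}$. Note that the bound is independent of the size of $A$, which is essential here since $A=\Delta(S)$ is infinite; this is exactly the feature of the finite-parameter analysis that makes it applicable to security games.

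For the computational claim, I would spell out why each step of the algorithm is polynomial-time given the assumed best-commitment oracle. At each step $t$, the algorithm forms the augmented frequency vector $\phi^{t-1}+z$, where the first term counts the empirical frequencies of the $k$ observed types and $z \in \mathbb{N}^k$ has independent geometric entries (with parameter $p=\Theta(1/\sqrt{T})$, as used in the proof of Theorem~\ref{thm:finite-parameter-regret}). Sampling $z$ takes time polynomial in $k$ and $\log T$, and the resulting augmented frequency vector induces an explicit distribution over the $k$ types (with rational weights of polynomial bit-complexity). The best response of the defender against this explicit distribution of attacker types is precisely what the ``offline best-commitment oracle'' returns, so one invocation per round suffices.

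The main (and only) subtlety will be ensuring the reduction is clean: i.e. verifying that the offline optimization oracle required by Algorithm~\ref{defn:lazy-ftpl} (namely $M(\phi)=\arg\max_{a\in A}\sum_{\theta\in\Theta}\phi_\theta u(a,\theta)$) coincides, up to normalization of $\phi$ into a probability vector, with the best-commitment oracle against an explicit distribution of attacker types postulated in the corollary. Since scaling all frequencies by the same positive constant does not change the argmax, these two oracles are identical, and the corollary follows immediately from Theorem~\ref{thm:finite-parameter-regret}. No further work is required beyond this identification.
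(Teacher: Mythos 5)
Your proposal is correct and follows essentially the same route as the paper: identify the security game with the finite-parameter online learning setting ($d=k$, $H=1$, arbitrary non-linear $u(a,\theta)$ over the infinite action set $\Delta(S)$), invoke Theorem~\ref{thm:finite-parameter-regret} to get cumulative regret $2\sqrt{kT}$ (hence average $2\sqrt{k/T}$), and observe that the required optimization oracle $M(\phi)=\arg\max_a\sum_\theta \phi_\theta u(a,\theta)$ is, after normalizing $\phi$, exactly the best-commitment oracle against an explicit type distribution. The only nitpick is that the adaptive-adversary guarantee formally attaches to the version that redraws perturbations each round (Algorithm~\ref{defn:ftpl}), with the fixed-perturbation Algorithm~\ref{defn:lazy-ftpl} analyzed against oblivious adversaries and transferred via the reduction of Appendix~\ref{sec:app-oblivious}, but this does not affect the argument.
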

In other words, we get a much better regret bound and more importantly a bound that is independent of the number of targets $n$!

Moreover, to apply our algorithm we do not need a pre-processing step, as in \cite{Balcan2015}, to make the action space of the defender finite. The latter is handled within the offline oracle. One way to construct an offline oracle is to utilize the pre-processing step of \cite{Balcan2015} and then go through all actions in the reduced action space. This would be a finite yet exponential running time, which is also the running time of the algorithms in \cite{Balcan2015}. Recently, Li et al. \cite{Li2016}, has shown that in its general form this online problem is NP-hard, thereby the exponential computation is insurmountable in the general form of the problem. However, our result allows for the immediate use of any better polynomial time offline algorithms for specific settings, in a black-box manner.

\subsection{Omitted Proofs}
\begin{proofof}{Theorem~\ref{thm:finite-parameter-regret}}
First by applying Theorem~\ref{thm:general-regret}, we get that the regret of the algorithm is bounded by: 
\begin{equation}
\E_{\{x\}}\left[\sum_{t=1}^T \left( u(a^*,\theta^t) - u(a^t,\theta^t)\right)\right] \leq \sum_{t=1}^T g(t) +  H\cdot \E[\|z\|_1] =  \sum_{t=1}^T g(t) +  H\cdot \frac{d}{p}
\end{equation}
where we used the fact that $u(a,\theta)\in [0,H]$ and thereby, $f_-(\theta)=0$ and $f_+(\theta)=H$ and the fact that $\|z\|_1$ is the sum of $d$ independent geometrically distributed random variables each with parameter $p$, thereby $\E[\|z\|_1]=\frac{d(1-p)}{p}\leq \frac{d}{p}$.

Now it suffices to bound the stability $g(t)$ of the algorithm, which is the main technical difficulty and which we will argue in the following stability lemma:
\begin{lemma}[Stability lemma] For any sequence $\theta^{1:T}$ and for any time-step $t$:
\begin{equation}
\E_{z}\left[ u(M(\phi^{t-1}+z),\theta^t)\right] \geq (1-p)\cdot \E_{z}\left[u(M(\phi^{t}+z),\theta^t)\right]
\end{equation}
Hence, $\E_{z}\left[ u(M(\phi^{t}+z),\theta^t)- u(M(\phi^{t-1}+z),\theta^t)\right]\leq p\cdot H = g(t)$.
\end{lemma}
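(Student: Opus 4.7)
The plan is to exploit the memoryless property of the geometric distribution to couple the distributions of $\phi^{t-1} + z$ and $\phi^{t} + z$, since these differ by the deterministic vector $e_{\theta^t}$ (the indicator of the parameter the adversary picks at time $t$). Under our convention, each coordinate $z_\theta$ is geometric on $\{0, 1, 2, \ldots\}$ with $\Pr[z_\theta = k] = (1-p)^k p$, so $\Pr[z_\theta \geq 1] = 1-p$ and, by memorylessness, the distribution of $z_\theta - 1$ conditional on $z_\theta \geq 1$ is again geometric with parameter $p$. Equivalently, $z_\theta + 1$ has the same distribution as $z_\theta$ conditional on $z_\theta \geq 1$.

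First I would apply this observation only to coordinate $\theta^t$, keeping all other coordinates of $z$ fixed. This gives the distributional identity
\begin{equation}
\phi^{t} + z \;\stackrel{d}{=}\; \phi^{t-1} + z \text{ conditional on } z_{\theta^t} \geq 1,
\end{equation}
since adding $1$ to $z_{\theta^t}$ on the left is equivalent to conditioning $z_{\theta^t}$ to be at least $1$ on the right. Consequently,
\begin{equation}
\E_z\bigl[u(M(\phi^{t}+z),\theta^t)\bigr] \;=\; \E_z\bigl[u(M(\phi^{t-1}+z),\theta^t)\,\big|\, z_{\theta^t}\geq 1\bigr].
\end{equation}

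Next I would decompose the unconditional expectation using the law of total expectation:
\begin{equation}
\E_z\bigl[u(M(\phi^{t-1}+z),\theta^t)\bigr] \;=\; (1-p)\,\E_z\bigl[u(M(\phi^{t-1}+z),\theta^t)\,\big|\, z_{\theta^t}\geq 1\bigr] + p\,\E_z\bigl[u(M(\phi^{t-1}+z),\theta^t)\,\big|\, z_{\theta^t}= 0\bigr].
\end{equation}
Since utilities are nonnegative, the last term can be dropped, and substituting the distributional identity into the first conditional expectation yields $\E_z[u(M(\phi^{t-1}+z),\theta^t)] \geq (1-p)\,\E_z[u(M(\phi^{t}+z),\theta^t)]$, which is the main inequality of the lemma. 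Rearranging and using $u(\cdot,\cdot) \leq H$ gives $g(t) \leq pH$.

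There is no real obstacle here; the only subtle point is correctly identifying the shift-by-one/condition-on-$\geq 1$ duality of the geometric distribution, which is what makes geometric perturbations well suited for a ``Be-the-Leader vs.\ Follow-the-Leader'' comparison when the state update is a single-coordinate unit increment. Combined with Theorem~\ref{thm:general-regret} and the bound $\E[\|z\|_1] \leq d/p$, optimizing $p = \sqrt{d/T}$ yields the claimed $2H\sqrt{dT}$ regret.
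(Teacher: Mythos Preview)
Your proof is correct and is essentially the same argument as the paper's: both exploit the shift property of the geometric distribution to relate $\phi^t+z$ and $\phi^{t-1}+z$, then drop a nonnegative term. The only difference is presentational---you phrase it via the memoryless property and conditional expectations, while the paper writes out the change of variables on the sum explicitly using the PMF ratio $\Pr[z=\tilde{k}-y^t]=(1-p)^{-1}\Pr[z=\tilde{k}]$.
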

\begin{proof}
Let $y^t\in \N^d$ denote a vector which has $1$ on the coordinate associated with $\theta^t$ and $0$ on every other coordinate. We can write:
\begin{align*}
\E_{z}\left[ u(M(\phi^{t}+z),\theta^t)\right] =~& 
\sum_{k\in \N_{\geq 0}^d} \Pr[z=k]\cdot u(M(\phi^{t}+k),\theta^t)\\
 =~& \sum_{k\in \N_{\geq 0}^d} \Pr[z=k]
\cdot u(M(\phi^{t-1}+k+y^t),\theta^t)\\
=~&\sum_{\tilde{k}\in \N_{\geq 0}^d, \tilde{k}_{\theta_t}\geq 1} \Pr[z=\tilde{k}-y^t]
\cdot u(M(\phi^{t-1}+\tilde{k}),\theta^t)
\end{align*}
Observe that when each coordinate of $z$ is an independent geometric distribution then for any $\tilde{k}\in \N_{\geq 0}^d$ such that $\tilde{k}_{\theta_t}\geq 1$:
\begin{equation}
\Pr[z=\tilde{k}-y^t] = p^n (1-p)^{\|\tilde{k}-y^t\|_1} = (1-p)^{-1} p^n (1-p)^{\|k\|_1} = (1-p)^{-1}\Pr[z=k]
\end{equation}
Thus we get:
\begin{align*}
\E_{z}\left[ u(M(\phi^{t}+z),\theta^t)\right]=~& 
(1-p)^{-1}\sum_{\tilde{k}\in \N_{\geq 0}^d, \tilde{k}_{\theta_t}\geq 1} \Pr[z=\tilde{k}]
\cdot u(M(\phi^{t-1}+\tilde{k}),\theta^t)\\
\leq~&(1-p)^{-1}\sum_{\tilde{k}\in \N_{\geq 0}^d} \Pr[z=\tilde{k}]
\cdot u(M(\phi^{t-1}+\tilde{k}),\theta^t)\\
=~&(1-p)^{-1}\E_{z}\left[u(M(\phi^{t-1}+z),\theta^t)\right]
\end{align*}
\end{proof}
Thus we can conclude that the regret of the algorithm is bounded by:
\begin{equation}
\E_{\{x\}}\left[\sum_{t=1}^T \left( u(a^*,\theta^t) - u(a^t,\theta^t)\right)\right] \leq T \cdot H\cdot p + H\cdot \frac{d}{p}
\end{equation}
Setting $p=\sqrt{\frac{d}{T}}$, yields regret $2H\sqrt{d T}$, which completes the proof.
\end{proofof}

%
\end{appendix}

\end{document}